\DeclareMathAlphabet{\can}{OT1}{cmss}{m}{n}
\newtheorem{theorem}{Theorem}[section]
\newtheorem{lemma}[theorem]{Lemma}
\newtheorem{proposition}[theorem]{Proposition}
\theoremstyle{definition}
\theoremstyle{fact}
\theoremstyle{conjecture}
\numberwithin{equation}{section}
\newcommand{\C}{{\mathcal C}}
\newcommand{\F}{\mathbb{F}}
\newcommand{\f}{\mathbb{F}}
\newcommand{\Tr}{\mathrm{Tr}_q^{q^2}}
\begin{document}
\title[Repeated-root  cyclic codes of length $5p^s$]{On the Hamming distances of repeated-root cyclic
codes of length $5p^s$}
\author[X. Li]{Xia Li}
\address{\rm Department of Mathematics, Nanjing University of Aeronautics and Astronautics,
Nanjing, 211100, P. R. China; State Key Laboratory of Cryptology, P. O. Box 5159, Beijing, 100878, P. R. China}
 \email{lixia4675601@163.com}

\author[Q. Yue]{Qin Yue}
\address{\rm Department of Mathematics, Nanjing University of Aeronautics and Astronautics,
Nanjing, 211100, P. R. China; State Key Laboratory of Cryptology, P. O. Box 5159, Beijing, 100878, P. R. China}
\email{yueqin@nuaa.edu.cn}
\thanks{The paper was supported by National Natural Science Foundation of China (No.
 61772015).}

\begin{abstract}Due to the wide applications in consumer electronics, data storage systems and communication systems, cyclic codes have been an interesting research topic in coding theory. In this paper, let $p$ be a prime with  $p\ge 7$.  We determine the weight distributions of all cyclic codes of length $5$ over $\f_q$ and the Hamming distances of all repeated-root  cyclic codes of length $5p^s$ over $\F_q$, where $q=p^m$  and both  $s$ and $m$ are positive integers.
Furthermore, we find
all MDS cyclic codes of length $5p^s$ and take quantum synchronizable codes from repeated-root
cyclic codes of length $5p^s$.

\end{abstract}

\keywords {Cyclic codes, Hamming distance,  Singleton bound, quantum synchronizable codes}

\maketitle

\section{Introduction}

Cyclic codes have important applications in consumer electronics, data storage systems and communication systems as they have efficient encoding and decoding algorithms compared with the linear block codes \cite{T07}. They also have applications in cryptography \cite{C05, D05} and sequence design \cite{D10}.

Repeated-root cyclic codes were first investigated in the 1990s by Castagnoli in \cite {C91} and Van Lint in \cite{V91}. Dinh determined the Hamming distances of all negacyclic
and cyclic codes of length $p^s$ over the finite field $\F_q$ in \cite{D08} and the Hamming distances of all $(\alpha+u\beta)$-constacyclic codes of length $p^s$ over $\F_{p^m}+u\F_{p^m}$ in \cite{D2010}. Kai and Zhu determined the distances of cyclic codes of length $2^e$ over $\Bbb Z_4$ in \cite{K10}. Dinh determined the generator polynomials of all constacyclic codes and their dual codes of length $2p^s$, $3p^s$ and $6p^s$  over $\F_q$ in \cite{D2012, D2013, D2014}. Chen et al. studied repeated-root constacyclic codes of length $l^tp^s$ over $\F_q$ in \cite{B12}. In 2014, Chen  et al. studied all constacyclic codes of length $lp^s$ over $\F_q$ in \cite{C2014}, where $l$ is a prime different from $p$. In 2015, Raka  considered repeated-root constacyclic codes of length $l^tp^s$ over $\F_q$ in \cite{M2015}. Sharma explicitly determined the generator polynomials of all repeated-root constacyclic codes of length $l^tp^s$ over $\F_{p^m}$ and their dual codes in \cite{S2015}. Chen et al. investigated all constacyclic codes of length $2l^mp^s$ over $\F_q$ of characteristic $p$ in \cite{C2015}. Sharma and Rani gave a method to compute repeated-root constacyclic codes of length $np^s$ over $\F_q$, for any positive integer $n$ coprime to $p$, however the generator polynomials are not determined explicitly, in \cite{S2016}. Liu et al. discussed all constacyclic codes of length $3lp^s$ over $\F_q$, where $p\neq3$ is any prime and $l\neq3$ is an odd prime with $\gcd(l,p)=1$ in \cite{L2016}.

Quantum synchronizable codes has proved to be
functioning well, which allows for extracting the information
about the magnitude and direction of misalignment and
simultaneously correcting the Pauli errors on qubits, with
nondisturbing measurement involved. In this coding scheme,
quantum synchronizable codes can be constructed from a pair
of dual-containing cyclic codes with one contained in the other. Luo and Ma \cite{L2018} exploited repeated-root cyclic codes of
lengths $p^s$ and $lp^s$ over $\F_q$ to obtain two new kinds of
quantum synchronizable codes with the highest misalignment
tolerance, where $s>1$, $l>2$ are integers, and $p$ is the odd
characteristic of $\F_q$ such that $\gcd(l, p) = 1$.

In this paper, we study all repeated-root  cyclic codes of length $5p^s$ over the finite field $\F_q$, where $p\geq 7$ is a prime. We obtain the weight distributions of  all cyclic codes of length $5$ over $\f_q$ and the Hamming distances of all repeated-root  cyclic codes of length $5p^s$ over the finite field $\F_q$. Furthermore, we find
all MDS cyclic codes of length $5p^s$ and take quantum synchronizable codes from repeated-root
cyclic codes of length $5p^s$.

\section{Preliminaries}
Throughout this paper, suppose that $q=p^m$, where $p$ is a prime and $m$ is a positive integer. An $[n,\kappa,d]$ linear code over the finite field  $\F_q$ is an
$\kappa$-dimensional subspace of $\F^n_{q}$ with minimum (Hamming)
distance $d$. Let $A_i$ denote  the number of codewords with Hamming weight
$i$ in a code $\C$ of length $n$. The {weight enumerator} of $\C$ is defined by
$$
1 + A_1z + A_2z^2 + \cdots + A_nz^n.
$$
Accordingly, the sequence $(1,A_1, \cdots, A_n)$ is called the weight distribution of $\C$. A linear code $\C$ is said to be a $t$-weight code if the number of nonzero $A_i$ in the sequence $(A_1,A_2,\cdots,A_n)$ is equal to $t$. This code is called cyclic if
$(c_0,c_1, \cdots, c_{n-1}) \in \C$ implies $(c_{n-1}, c_0, c_1, \cdots, c_{n-2})
\in\mathcal{C}$.
Let $\gcd(n,q)=1$, by identifying a vector $(c_0,c_1, \cdots, c_{n-1}) \in \F^n_{q}$
with
$$
c_0+c_1x+c_2x^2+ \cdots + c_{n-1}x^{n-1} \in \F_q[x]/(x^n-1),
$$
a code $\mathcal{C}$ of length $n$ over $\F_q$ corresponds to a subset of $\F_q[x]/(x^n-1)$.
The linear code $\C$ is cyclic if and only if such corresponding subset in $\F_q[x]/(x^n-1)$
is an ideal of the polynomial residue class ring $\F_q[x]/(x^n-1)$.
It is well known that every ideal of $\F_q[x]/(x^n-1)$ is principal. Let $\mathcal{C}=(g(x))$,
where $g(x)$ is monic and has the least
degree. Then $g(x)$ is called the {generator polynomial} and
$h(x)=(x^n-1)/g(x)$ is referred to as the {parity-check} polynomial of
$\mathcal{C}$. Further, the definition of the dual code of code $\C$ is as follows:
 $$\C^\bot=\{x\in\F^n_{q}\mid x\cdot y=0, \forall y\in \C\},$$
where $x\cdot y$ denotes the Euclidean inner product of $x$ and $y$ in $\F^n_{q}$ .

Let $\C=<g(x)>$ be a repeated-root cyclic code of length $5p^s$ over $\f_q$, where $p$ is a prime  with $p\ge 7$ and $s$
is a  positive integer.  Let $\C_{i,5}$ be the
cyclotomic coset of $i$ modulo $5$ over $\F_q$ and denote by $T_5$ the set of representatives of all $q$-ary cyclotomic cosets. Let $\omega$ be
a primitive $5$-th root of unity in $\F_q$  and $M_i(x)=\prod\limits_{j\in\C_{i,5}}(x-\omega^j)$ be the minimal polynomial of $\omega^i$ over $\F_q$. Suppose that $g(x)=\prod\limits_{i\in T_5}M_i(x)^{t_i}$,
 fix a value $t$, $0\leq t\leq p^s-1$, $\C_t$ is defined to be a simple-root cyclic code of length $5$ over $\f_q$ with generator polynomial $g_{t}(x)$ as the product of those irreducible factors $M_{i}(x)$ of $g(x)$ that occur with multiplicity $t_i>t$. If this product is equal to  $x^5-1$, then $\C_t$ contains only the all-zero codeword and we set $d_H(\C_t)=\infty$. If all $t_i$ satify $t_i\leq t$, then, by way of convention, $g_{t}(x)=1$ and $d_H(\C_t)=1$.

\begin{lemma}[\cite{B18}]
Let $\C=<g(x)>$ be a repeated-root cyclic code of length $5p^s$ over $\f_q$, where $p$ be a prime with $p\ge 7$ and $s$
is a  positive integers. Then
\begin{equation}\label{25}
  d_H(\C) = \min\{P_td_H(\C_t) : t \in T\},
\end{equation}
where $T=\{t:0\leq t\leq p^s-1\}$, $t=\sum^{s-1}\limits_{m=0}t_{m}p^{m}$, $0\leq t_m\leq p-1$, $P_t=\prod_{m=0}^{s-1}(t_m+1)=wt_{H}((x-1)^t)$.
\end{lemma}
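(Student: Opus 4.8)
\emph{Overall strategy.} The plan is to prove the two inequalities $d_H(\C)\le\min_{t\in T}P_td_H(\C_t)$ and $d_H(\C)\ge\min_{t\in T}P_td_H(\C_t)$ separately, starting from the identity $x^{5p^s}-1=(x^5-1)^{p^s}$ (valid since $p\neq 5$). Writing $y=x^5$, every $c(x)\in\f_q[x]/(x^{5p^s}-1)$ has a unique expression $c(x)=\sum_{e=0}^{4}x^{e}\gamma_e(x^{5})$ with $\gamma_e(y)\in\f_q[y]/(y^{p^s}-1)=\f_q[y]/((y-1)^{p^s})$, and since for distinct $e$ the position sets $\{e+5k\}$ are disjoint, $wt_H(c)=\sum_{e=0}^{4}wt_H(\gamma_e(y))$; expanding each $\gamma_e$ in powers of $y-1$ is the same as the $(x^5-1)$-adic expansion $c(x)=\sum_{k=0}^{p^s-1}(x^5-1)^kb_k(x)$ with $\deg b_k<5$. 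The technical heart is a length-$p^s$ weight bound, which I would prove first: \emph{if $0\neq f\in\f_q[y]/(y^{p^s}-1)$ has $(y-1)$-adic order $k$, then $wt_H(f)\ge P_k$}, where $P_k=wt_H((y-1)^k)=\prod_j(k_j+1)$ in base $p$. I would prove this by induction on $s$: writing $f(y)=\sum_{e=0}^{p-1}y^ef_e(y^p)$ with $f_e\in\f_q[z]/(z^{p^{s-1}}-1)$, $z=y^p$, one has $wt_H(f)=\sum_e wt_H(f_e)$; since $(1+w)^p=1+w^p$ and the Pascal matrix $[\binom el]_{0\le e,l\le p-1}$ is invertible over $\f_q$, the hypothesis on the $(y-1)$-adic order of $f$ forces, after writing $k=k_0+pk'$ with $0\le k_0<p$, that at least $k_0+1$ of the $f_e$ have $(z-1)$-adic order exactly $k'$; the inductive hypothesis gives $wt_H(f_e)\ge P_{k'}$ for each of these, and $P_k=(k_0+1)P_{k'}$, so $wt_H(f)\ge(k_0+1)P_{k'}=P_k$. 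The base case $s=1$ is the classical fact that a nonzero polynomial of degree $<p$ vanishing to order $k$ at $y=1$ has at least $k+1$ terms.

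\emph{Lower bound.} Let $0\neq c\in\C$ and let $t_0$ be the largest power of $x^5-1$ dividing $c$, so $c=\sum_{k\ge t_0}(x^5-1)^kb_k(x)$ with $b_{t_0}\neq 0$; note $t_0\in T$. Writing $c=(x^5-1)^{t_0}\beta$ with $\beta\equiv b_{t_0}\pmod{x^5-1}$ and using $v_{M_i}(x^5-1)=1$, the condition $c\in\C=(\prod_iM_i^{t_i})$ gives $v_{M_i}(\beta)\ge t_i-t_0$, hence $M_i\mid b_{t_0}$ whenever $t_i>t_0$; so $g_{t_0}(x)=\prod_{t_i>t_0}M_i(x)$ divides $b_{t_0}$, i.e. $0\neq b_{t_0}\in\C_{t_0}$ and $wt_H(b_{t_0})\ge d_H(\C_{t_0})$. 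For each $e$ in the support of $b_{t_0}$ we have $\gamma_e(y)=(y-1)^{t_0}(b_{t_0,e}+b_{t_0+1,e}(y-1)+\cdots)$, a unit multiple of $(y-1)^{t_0}$, so its $(y-1)$-adic order is exactly $t_0$ and the length-$p^s$ bound gives $wt_H(\gamma_e)\ge P_{t_0}$. Therefore
$$wt_H(c)=\sum_{e=0}^{4}wt_H(\gamma_e)\ \ge\ wt_H(b_{t_0})\,P_{t_0}\ \ge\ d_H(\C_{t_0})\,P_{t_0}\ \ge\ \min_{t\in T}P_td_H(\C_t),$$
and since $c$ was arbitrary, $d_H(\C)\ge\min_{t\in T}P_td_H(\C_t)$.

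\emph{Upper bound.} Here I would use that each $q$-cyclotomic coset modulo $5$ is closed under multiplication by $p$, so $M_i(x^{p^a})=M_i(x)^{p^a}$ for all $a\ge 0$. Fix $t\in T$, choose a minimum-weight word $\bar c(x)=\sum_{e}\bar c_ex^e$ of $\C_t$ (so $\deg\bar c<5$ and $wt_H(\bar c)=d_H(\C_t)$), let $a$ be least with $p^a\ge\max\{t_i-t:\,t_i>t\}$ (take $a=0$ if $\C_t=\f_q^5$; note $a\le s$), and put $c(x)=(x^5-1)^t\,\bar c(x^{p^a})$. Since $\bar c$ is divisible by $\prod_{t_i>t}M_i$, its dilate $\bar c(x^{p^a})$ is divisible by $\prod_{t_i>t}M_i^{p^a}$, so $v_{M_i}(c)\ge t+p^a\ge t_i$ when $t_i>t$ and $v_{M_i}(c)\ge t\ge t_i$ otherwise; hence $c\in\C$ and $c\neq 0$. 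In the decomposition $c(x)=\sum_{e'}x^{e'}\gamma_{e'}(x^5)$, because $e\mapsto ep^a\bmod 5$ is a bijection of $\{0,\dots,4\}$, each nonzero $\bar c_e$ produces exactly one component $\gamma_{e'}(y)=\bar c_e(y-1)^ty^{Q_e}$ with $Q_e=\lfloor ep^a/5\rfloor$; multiplying $(y-1)^t$ (whose support lies in a window of $\le p^s$ consecutive positions) by a power of $y$ and reducing modulo $y^{p^s}-1$ preserves weight, so $wt_H(\gamma_{e'})=wt_H((y-1)^t)=P_t$. Hence $wt_H(c)=wt_H(\bar c)\,P_t=d_H(\C_t)\,P_t$, and therefore $d_H(\C)\le P_td_H(\C_t)$ for every $t\in T$.

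\emph{Main obstacle.} Combining the two bounds proves the lemma. I expect the inductive proof of the length-$p^s$ weight bound to be the main difficulty — one must push the base-$p$ digit bookkeeping through carefully, in particular the reduction of the $(y-1)$-adic order condition on $f$ to the statement about the $f_e$ via invertibility of the Pascal matrix — together with the routine but delicate check in the upper bound that the dilation $x\mapsto x^{p^a}$ neither collides residues modulo $5$ nor wraps around modulo $y^{p^s}-1$ in a weight-destroying way. Alternatively, the length-$p^s$ ingredient can be quoted from Dinh's analysis of cyclic codes of length $p^s$ \cite{D08}.
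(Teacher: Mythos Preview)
The paper does not prove this lemma at all: it is stated with a citation to \cite{B18} (and ultimately traces back to Castagnoli--Massey--Schoeller--von~Seemann \cite{C91}), and is used as a black box throughout. So there is no ``paper's own proof'' to compare against.

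Your proof is correct and essentially reproduces the classical argument. The decomposition $c(x)=\sum_{e=0}^{4}x^{e}\gamma_{e}(x^{5})$ with $wt_H(c)=\sum_e wt_H(\gamma_e)$, the identification of the $(x^5-1)$-adic leading coefficient $b_{t_0}$ as a nonzero word of $\C_{t_0}$, and the length-$p^s$ bound $wt_H(\gamma_e)\ge P_{t_0}$ together give the lower bound cleanly. Your inductive proof of the length-$p^s$ bound is right: after the change of basis by the Pascal matrix one has $F_l=\sum_e\binom{e}{l}f_e$ with $(z-1)$-order $\ge k'+1$ for $l<k_0$ and exactly $k'$ for $l=k_0$; the generalized-Vandermonde argument (row-reducing $[\binom{e_j}{l}]$ to $[e_j^{\,l}]$, valid since $l!\ne 0$ for $l\le p-1$) forces at least $k_0+1$ of the $\bar f_e$ to be nonzero, as you claim. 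For the upper bound, the construction $c(x)=(x^5-1)^t\bar c(x^{p^{a}})$ works: one checks $v_{M_i}(\bar c(x^{p^a}))=p^{a}v_{M_i}(\bar c)$ because the map $j\mapsto jp^{-a}\bmod 5$ permutes the roots within each cyclotomic coset, so $v_{M_i}(c)=t+p^{a}v_{M_i}(\bar c)\ge t_i$ in both cases; nonvanishing of $c$ follows since some $M_i$ with $t_i\le t$ does not divide $\bar c$, giving $v_{M_i}(c)=t<p^s$. The weight computation via the bijection $e\mapsto ep^{a}\bmod 5$ and cyclic shifts in $\f_q[y]/(y^{p^s}-1)$ is fine. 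One cosmetic remark: in the ``main obstacle'' paragraph you might simply cite \cite{C91} or \cite{D08} for the length-$p^s$ ingredient rather than labeling it a difficulty, since your own inductive sketch already settles it.
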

\begin{lemma}[\cite{L2013}]

  Let $\beta, \tau$ be integers such that $0 \leq \beta \leq p -2$ and $0 \leq \tau \leq s -1$, then

\begin{equation}\label{24}
   \min\{P_t : t \geq l \mbox{ and }  t \in T \} = (\beta+2)p^\tau,
\end{equation}
where $l$ is an integer such that $p^s-p^{s-\tau}+\beta p^{s-\tau-1}+1\leq l\leq p^s-p^{s-\tau}+(\beta+1) p^{s-\tau-1}$.
\end{lemma}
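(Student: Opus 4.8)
The plan is to argue throughout in terms of the base-$p$ expansion $t=\sum_{m=0}^{s-1}t_mp^m$ with $0\le t_m\le p-1$, for which $P_t=\prod_{m=0}^{s-1}(t_m+1)$ is simply a product over the digits of $t$; minimizing $P_t$ then means forcing as many digits as possible to be $0$, subject to $t\ge l$. I will prove the two inequalities $\min\{P_t:t\ge l,\ t\in T\}\le(\beta+2)p^{\tau}$ and $\min\{P_t:t\ge l,\ t\in T\}\ge(\beta+2)p^{\tau}$ separately. Since $l$ lies in the interval $[\,p^s-p^{s-\tau}+\beta p^{s-\tau-1}+1,\ p^s-p^{s-\tau}+(\beta+1)p^{s-\tau-1}\,]$, it suffices for the lower bound to treat every $t\ge l_{0}:=p^s-p^{s-\tau}+\beta p^{s-\tau-1}+1$, because any admissible $l$ satisfies $l\ge l_{0}$.

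For the upper bound I would use the explicit witness $t^{*}:=p^s-p^{s-\tau}+(\beta+1)p^{s-\tau-1}$, the right endpoint of that interval. Since $0\le\beta+1\le p-1$, its base-$p$ digits are $t^{*}_{m}=p-1$ for $s-\tau\le m\le s-1$, $t^{*}_{s-\tau-1}=\beta+1$, and $t^{*}_{m}=0$ otherwise, so $P_{t^{*}}=p^{\tau}(\beta+2)$. One checks $l\le t^{*}\le p^s-1$, hence $t^{*}\in\{t:t\ge l,\ t\in T\}$ and the minimum is at most $(\beta+2)p^{\tau}$.

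For the lower bound, fix any $t\in T$ with $t\ge l_{0}$. The digits of $l_{0}$ in positions $s-\tau,\dots,s-1$ are all $p-1$, and no base-$p$ digit can exceed $p-1$; comparing the two expansions from the top down therefore forces $t_m=p-1$ for every $m$ with $s-\tau\le m\le s-1$, which already contributes a factor $p^{\tau}$ to $P_t$. Subtracting $p^s-p^{s-\tau}$ from both sides of $t\ge l_{0}$ turns the remaining constraint into $\sum_{m=0}^{k}t_mp^m\ge\beta p^{k}+1$ with $k:=s-\tau-1\ge0$, and it remains to show $\prod_{m=0}^{k}(t_m+1)\ge\beta+2$. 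I split on the digit $t_k$: if $t_k\ge\beta+1$ the single factor $t_k+1\ge\beta+2$ suffices; if $t_k=\beta$, then $\sum_{m=0}^{k-1}t_mp^m\ge1$ forces some lower digit $t_{m_0}\ge1$, and $\prod_{m=0}^{k}(t_m+1)\ge(\beta+1)\cdot 2\ge\beta+2$ since $\beta\ge0$; and $t_k\le\beta-1$ is impossible, since then $\sum_{m=0}^{k}t_mp^m\le(\beta-1)p^{k}+(p^{k}-1)=\beta p^{k}-1<\beta p^{k}+1$. Combining, $P_t\ge p^{\tau}(\beta+2)$ for every $t\ge l$, which with the witness $t^{*}$ gives the asserted equality.

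The only points demanding care are the boundary values of $(\beta,\tau)$, and I would check that they all fall under the same argument: when $\tau=0$ there is no forced leading block of $(p-1)$'s and $k=s-1$, so the digit split applies to $t$ directly; when $\tau=s-1$ one has $k=0$, the ``residual'' is the single digit $t_0$ with the additive $+1$ absorbed into it, and the middle case $t_k=\beta$ simply cannot occur; and when $\beta=p-2$ the digit $\beta+1=p-1$ is still legal and $\beta+2=p$, so nothing changes. I do not anticipate a genuine obstacle; the mildly delicate step is phrasing the reduction ``$t\ge l_{0}$ forces the top $\tau$ digits of $t$ to be $p-1$'' together with the residual inequality $\sum_{m=0}^{k}t_mp^m\ge\beta p^{k}+1$ so that one computation covers all admissible pairs $(\beta,\tau)$.
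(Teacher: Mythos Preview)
The paper does not give a proof of this lemma; it is quoted verbatim from \cite{L2013} and used as a black box throughout Section~3. Your argument is correct and self-contained: the witness $t^{*}=p^{s}-p^{s-\tau}+(\beta+1)p^{s-\tau-1}$ with $P_{t^{*}}=(\beta+2)p^{\tau}$ gives the upper bound, and for the lower bound your observation that any $t\in T$ with $t\ge l_{0}$ must have its top $\tau$ digits equal to $p-1$ (otherwise $t\le p^{s}-1-p^{j}<l_{0}$ for some $j\ge s-\tau$), followed by the three-case split on the digit $t_{k}$ with $k=s-\tau-1$, is sound; the boundary checks for $\tau=0$, $\tau=s-1$, and $\beta=p-2$ are handled exactly as you describe.
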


In the following, we consider the irreducible factorization of $x^5-1$ over $\Bbb F_q$ in three cases.

Suppose that  $5\nmid(q^2-1)$.  Then $x^5-1=(x-1)\phi_{5}(x)$, where $\phi_{5}(x)=x^4+x^3+x^2+x+1$ is irreducible over $\Bbb F_q$.

Suppose that  $5\mid(q+1)$ and $5\nmid(q-1)$.  Then $x^5-1=(x-1)f_{1}(x)f_{2}(x)$, where $f_{1}(x)=(x-\omega)(x-\omega^4)$ and $ f_{2}(x)= (x-\omega^2)(x-\omega^3)$ are irreducible over $\Bbb F_q$ and $\omega\in \Bbb F_{q^2} $ is a $5$-th root of unity.

Suppose that  $5\mid(q-1)$. Then $x^5-1=(x-1)(x-\omega)(x-\omega^2)(x-\omega^3)(x-\omega^4)$, where $\omega\in \Bbb F_q$ is a $5$-th root of unity.

 \begin{lemma}[\cite{V2003}]\label{l1}
 Let $\C$ ba an $[n, \kappa]$ code over $\F_q$ with weight enumerator $A(z)$ and let $B(z)$ be the weight
enumerator of $\C^\bot$. Then
$$B(z)=q^{-\kappa}(1+(q-1)z)^{n}A(\frac{1-z}{1+(q-1)z}).$$
 \end{lemma}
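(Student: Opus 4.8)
The final statement is the MacWilliams identity relating the weight enumerators of a linear code and its dual, and the plan is to prove it by the classical additive‑character (discrete Fourier) argument. Throughout I would fix a nontrivial additive character $\chi$ of $(\F_q,+)$ and recall the orthogonality relation $\sum_{a\in\F_q}\chi(ab)=q$ if $b=0$ and $0$ otherwise, so that $\sum_{a\in\F_q^{*}}\chi(ab)=-1$ whenever $b\neq0$. It is cleaner to first establish the homogeneous form
\[
W_{\C^{\perp}}(X,Y)=|\C|^{-1}\,W_{\C}\bigl(X+(q-1)Y,\ X-Y\bigr),
\]
where $W_{\C}(X,Y)=\sum_{v\in\C}X^{\,n-\mathrm{wt}(v)}Y^{\,\mathrm{wt}(v)}$ and $\mathrm{wt}(\cdot)$ is the Hamming weight; since $W_{\C}(X,Y)=X^{n}A(Y/X)$, $W_{\C^{\perp}}(X,Y)=X^{n}B(Y/X)$ and $|\C|=q^{\kappa}$, substituting $X=1$, $Y=z$ at the end recovers exactly the displayed formula.

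First I would introduce the weight‑detecting function $f\colon\F_q^{n}\to\Z[X,Y]$, $f(v)=\prod_{i=1}^{n}g(v_i)$ with $g(0)=X$ and $g(a)=Y$ for $a\neq0$; then $f(v)=X^{\,n-\mathrm{wt}(v)}Y^{\,\mathrm{wt}(v)}$, so that $W_{\C}(X,Y)=\sum_{v\in\C}f(v)$ and $W_{\C^{\perp}}(X,Y)=\sum_{w\in\C^{\perp}}f(w)$. Next I would compute its Fourier transform $\widehat f(u)=\sum_{v\in\F_q^{n}}f(v)\chi(u\cdot v)$. Because $f$ factors over the coordinates, $\widehat f(u)=\prod_{i=1}^{n}\widehat g(u_i)$ with $\widehat g(u_i)=\sum_{a\in\F_q}g(a)\chi(u_i a)$; the orthogonality relation gives $\widehat g(0)=X+(q-1)Y$ and $\widehat g(u_i)=X-Y$ for $u_i\neq0$, hence $\widehat f(u)=\bigl(X+(q-1)Y\bigr)^{\,n-\mathrm{wt}(u)}(X-Y)^{\,\mathrm{wt}(u)}$. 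Summing over $u\in\C$ then yields $\sum_{u\in\C}\widehat f(u)=W_{\C}\bigl(X+(q-1)Y,\ X-Y\bigr)$.

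The crux of the argument is the orthogonality (Poisson‑type) identity $\sum_{u\in\C}\widehat f(u)=|\C|\sum_{w\in\C^{\perp}}f(w)$, which I would obtain by interchanging the order of summation: $\sum_{u\in\C}\widehat f(u)=\sum_{v\in\F_q^{n}}f(v)\Bigl(\sum_{u\in\C}\chi(u\cdot v)\Bigr)$, and then noting that for fixed $v$ the map $u\mapsto u\cdot v$ is an $\F_q$‑linear functional on the subspace $\C$, so it is either identically zero — precisely when $v\in\C^{\perp}$, in which case $\sum_{u\in\C}\chi(u\cdot v)=|\C|$ — or surjective onto $\F_q$ with all fibres of equal size, in which case the inner sum vanishes. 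Thus only the terms with $v\in\C^{\perp}$ survive, each weighted by $|\C|$, which is the claimed identity. Combining it with the previous paragraph gives $|\C|\,W_{\C^{\perp}}(X,Y)=W_{\C}\bigl(X+(q-1)Y,\ X-Y\bigr)$, and the substitution $X=1$, $Y=z$ together with $|\C|=q^{\kappa}$ finishes the proof. No step here presents a genuine obstacle; the only point needing care is the linear‑functional dichotomy in the orthogonality step (and one does not even need the abstract fact $\C^{\perp\perp}=\C$ in this presentation).
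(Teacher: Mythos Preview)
Your proof is correct and is precisely the classical additive-character (discrete Poisson summation) proof of the MacWilliams identity. The paper itself does not prove this lemma at all: it merely quotes it, with attribution to Van Lint's textbook \cite{V2003}, so there is no in-paper argument to compare against. Your write-up is in fact essentially the proof given in that reference, so nothing further is needed.
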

\begin{theorem}
Let $\C_1$ be  a  cyclic code of length $5$ over $\f_q$.
\begin{itemize}
\item Case 1. $5\nmid(q^2-1)$.

If $\C_1=<x-1>$, then $\C_1$ is a $[5, 4]$ cyclic code over $\f_q$. Furthermore, the weight distribution of $\C_1$ is given by Table \ref{tab-CG11}.

If $\C_1=<\phi_{5}(x)>$, then $\C_1$ is a $[5, 1]$ cyclic code over $\f_q$. Furthermore, the weight distribution of $\C_1$ is given by Table \ref{tab-CG14}.

\item Case 2. $5\mid(q+1)$ and $5\nmid(q-1)$.

If $\C_1=<x-1>$, then $\C_1$ is a  $[5, 4]$ cyclic code over $\f_q$. Furthermore, the weight distribution of $\C_1$ is given by Table \ref{tab-CG11}.

If $\C_1=<f_1(x)>$ or $<f_2(x)>$, then $\C_1$ is a $[5, 3]$ cyclic code over $\f_q$. Furthermore, the weight distribution of $\C_1$ is given by Table \ref{tab-CG12}.

If $\C_1=<(x-1)f_1(x)>$ or $<(x-1)f_2(x)>$, then $\C_1$ is a $[5, 2]$ cyclic code over $\f_q$. Furthermore, the weight distribution of $\C_1$ is given by Table \ref{tab-CG13}.

If $\C_1=<f_1(x)f_2(x)>$, then $\C_1$ is a $[5, 1]$ cyclic code over $\f_q$. Furthermore, the weight distribution of $\C_1$ is given by Table \ref{tab-CG14}.

\item Case 3. $5\mid(q-1)$.

 If $\C_1=<x-1>$, then $\C_1$ is a $[5, 4]$ cyclic code over $\f_q$. Furthermore, the weight distribution of $\C_1$ is given by Table \ref{tab-CG11}.

 If  $\C_1=<(x-w^{u_1})(x-w^{u_2})>(0\leq u_1< u_2\leq 4)$, then $\C_1$ is a $[5, 3]$ cyclic code over $\f_q$. Furthermore, the weight distribution of $\C_1$ is given by Table \ref{tab-CG12}.

 If $\C_1=<(x-w^{u_1})(x-w^{u_2})(x-w^{u_3})>(0\leq u_1< u_2< u_3\leq 4)$, then $\C_1$ is a $[5, 2]$ cyclic code over $\f_q$. Furthermore, the weight distribution of $\C_1$ is given by Table \ref{tab-CG13}.

If $\C_1=<(x-w^{u_1})(x-w^{u_2})(x-w^{u_3})(x-w^{u_4})>(0\leq u_1<u_2< u_3< u_4\leq 4)$, then $\C_1$ is a $[5, 1]$ cyclic code over $\f_q$. Furthermore, the weight distribution of $\C_1$ is given by Table \ref{tab-CG14}.

 \end{itemize}
 \end{theorem}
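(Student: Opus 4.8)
The plan is to treat every cyclic code $\C_1$ of length $5$ uniformly: read off its dimension from the degree of its generator polynomial, prove that it is MDS, and then obtain the weight distribution from the (unique) weight enumerator of an MDS code of the relevant length and dimension.

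First I would use the three factorizations of $x^5-1$ over $\f_q$ recalled above to list, in each case, all monic divisors $g(x)$ of $x^5-1$; since $\gcd(5,q)=1$ these are precisely the generator polynomials of the cyclic codes of length $5$, and $\dim\C_1=5-\deg g(x)$. This yields at once the parameters $[5,4]$, $[5,3]$, $[5,2]$, $[5,1]$ listed in the statement (the two trivial codes $g(x)=1$ and $g(x)=x^5-1$, of dimensions $5$ and $0$, are excluded).

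The heart of the argument is the claim that $d_H(\C_1)=\deg g(x)+1$, i.e.\ that every such $\C_1$ is MDS. The Singleton bound gives $d_H(\C_1)\le\deg g(x)+1$, so only the reverse inequality requires work, and for that I would use the BCH bound. Write the roots of $g(x)$ as $\{\alpha^i:i\in S\}$, where $\alpha$ is a primitive $5$-th root of unity — lying in $\f_q$ in Cases $1$ and $3$, and in $\f_{q^2}$ in Case $2$ — and $S\subseteq\Z/5\Z$ with $|S|=\deg g(x)$. It suffices to find $c$ with $\gcd(c,5)=1$ such that $cS$ is a cyclic interval of length $|S|$ in $\Z/5\Z$: then replacing $\alpha$ by $\alpha^c$ exhibits $\deg g(x)$ consecutive roots and the BCH bound gives $d_H(\C_1)\ge\deg g(x)+1$. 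Because $5$ is prime this straightening is always possible — for $|S|=1$ it is trivial; for $|S|=4$ the set $S$ is the complement of a point, hence already a cyclic interval of length $4$; for $|S|=2$, writing $S=\{a,b\}$, scaling by $(b-a)^{-1}$ sends $S$ to $\{x,x+1\}$; and every $3$-subset of $\Z/5\Z$ is an arithmetic progression $\{x,x+d,x+2d\}$ (there are $\binom{5}{3}=10$ three-subsets and, by a direct count, exactly $10$ three-term progressions), which scaling by $d^{-1}$ turns into a cyclic interval. Hence every $\C_1$ is MDS.

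With MDS-ness established, the weight distribution is forced: an $[5,\kappa,\,6-\kappa]$ MDS code over $\f_q$ satisfies
\[
A_0=1,\qquad A_i=0\ \ (1\le i\le 5-\kappa),\qquad A_i=\binom{5}{i}\sum_{j=0}^{\,i-6+\kappa}(-1)^j\binom{i}{j}\bigl(q^{\,i-5+\kappa-j}-1\bigr)\ \ (6-\kappa\le i\le 5),
\]
and substituting $\kappa=4,3,2,1$ and simplifying gives Tables~\ref{tab-CG11}--\ref{tab-CG14}. Two independent checks are worth recording: $\C_1=\langle x-1\rangle$ is the code of all vectors whose coordinates sum to $0$, whose weight distribution can be computed by a direct count, while every $[5,1]$ code is spanned by the coefficient vector of $g(x)=(x^5-1)/(x-\alpha^{u})$, which has full support, so there $A_5=q-1$; and Lemma~\ref{l1} (MacWilliams) shows that the $[5,4]$ and $[5,1]$ distributions are mutually dual, as are the $[5,3]$ and $[5,2]$ ones. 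The only genuine subtlety is the uniform MDS step — in particular, checking that the BCH bound is legitimately applied in Case $2$, where the roots of $g(x)$ lie in $\f_{q^2}$ rather than $\f_q$, and confirming that the root sets of the generators that do not already look like intervals (such as $\langle f_1(x)\rangle$, or $\langle(x-\omega^{u_1})(x-\omega^{u_2})\rangle$ with $u_2-u_1\ne\pm1$) really do straighten out modulo $5$.
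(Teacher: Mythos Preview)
Your proof is correct and takes a genuinely different route from the paper's. The paper proceeds case by case: for the $[5,1]$ code it observes directly that the single generator has full support, giving $A(z)=1+(q-1)z^5$, and then applies MacWilliams (Lemma~\ref{l1}) to obtain the $[5,4]$ distribution; for the $[5,2]$ code in Case~2 it invokes Delsarte's theorem to write codewords as trace vectors $(\Tr(a\omega^{-i}))_{i=0}^{4}$ and determines, for each nonzero $a\in\f_{q^2}$, exactly how many coordinates vanish (always $0$ or $1$), reading off $A(z)=1+5(q-1)z^4+(q-1)(q-4)z^5$, and again dualises via MacWilliams for the $[5,3]$ case. Case~3 is handled analogously. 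Your argument instead proves once and for all that every nontrivial cyclic code of prime length $5$ is MDS, via the BCH bound together with the combinatorial fact that every subset of $\Z/5\Z$ is an arithmetic progression, and then reads the weight distribution from the standard MDS formula. Your approach is more uniform and conceptual, avoids the trace computation entirely, and makes transparent why the four tables are exactly the MDS enumerators for $\kappa=4,3,2,1$; the paper's approach is more hands-on and does not rely on knowing the MDS weight formula, but pays for that with a separate calculation in each case.

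One small slip to fix: you write that the primitive $5$-th root of unity $\alpha$ lies in $\f_q$ ``in Cases~1 and~3''. In Case~1 the hypothesis $5\nmid(q^2-1)$ forces the multiplicative order of $q$ modulo $5$ to be $4$, so $\alpha\in\f_{q^4}\setminus\f_{q^2}$, not $\f_q$. This does not affect your argument, since the BCH bound is insensitive to which extension field contains the roots, but the parenthetical should be corrected.
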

 \begin{proof}
 In Case 1, if $\C_1=<\phi_{5}(x)>$, then $\C^\bot_1=<x-1>$. It is clear that the $\C_1$ is a $[5,1]$ cyclic code and $\C^\bot_1$ is a $[5,4]$ cyclic code. And $A(z)=1+(q-1)z^5$.  By Lemma \ref{l1}, we can get $B(z)=1+10(q-1)z^2+10(q-1)(q-2)z^3+5(q-1)(q^2-3q+3)z^4+(q-1)(q^3-4q^2+6q-4)z^5$.

 In Case 2, if $\C_1=<(x-1)f_{2}(x)>$,  then $\C^\bot_1=<f_1(x)>$.  It is clear that  $\C_1$ is a $[5,2]$ cyclic code and $\C^\bot_1$ is a $[5,3]$ cyclic code. By Delsarte's Theorem \cite{D1975}, we can get
$$\C_t=\{\mathbf{c}(a):a\in \f_{q^2}\}$$
where the codeword
$$\mathbf{c}(a)={(\Tr(a\omega^{-i}))}^4_{i=0}.$$

Let $\mathbf{c}(a)= (a_0, a_1, a_2, a_3, a_4)$ be a codeword
in $\C_t$, where $a\in \f_{q^2}$.
Then the  weight of $\mathbf{c}(a)$ is equal to $5-N(a)$, where
 $N(a)=\sharp\{0\leq i\leq 4:\Tr(a\omega^{-i})=0\}$.

First,  suppose that $a=0$. It is clear that $wt_{H}(\mathbf{c}(a))=0$.

Next, suppose that $a\neq0$.

When $i=0$, $\Tr(a)=0\Longleftrightarrow a+a^q=0\Longleftrightarrow a^{q-1}=-1$;

When $i=1$, $\Tr(a\omega^{-1})=0\Longleftrightarrow a\omega^{-1}+a^q\omega=0\Longleftrightarrow a^{q-1}=-\omega^{-2}$;

When $i=2$, $\Tr(a\omega^{-2})=0\Longleftrightarrow a\omega^{-2}+a^q\omega^{2}=0\Longleftrightarrow a^{q-1}=-\omega^{-4}$;

When $i=3$, $\Tr(a\omega^{-3})=0\Longleftrightarrow a\omega^{-3}+a^q\omega^{3}=0\Longleftrightarrow a^{q-1}=-\omega^{-1}$;

When $i=4$, $\Tr(a\omega^{-4})=0\Longleftrightarrow a\omega^{-4}+a^q\omega^{4}=0\Longleftrightarrow a^{q-1}=-\omega^{-3}$.

It is clear that $N(a)=0$ or $1$. Hence, $A(z)=1+5(q-1)z^4+(q-1)(q-4)z^5$.  By Lemma \ref{l1}, we can get $B(z)=1+10(q-1)z^3+5(q-1)(q-3)z^4+(q-1)(q^2-4q+6)z^5$.

By the similar arguments as above, we can get $A(z)=1+5(q-1)z^4+(q-1)(q-4)z^5$ when $\C_1=<(x-1)f_{1}(x)>$ and $A(z)=1+10(q-1)z^3+5(q-1)(q-3)z^4+(q-1)(q^2-4q+6)z^5$ when $\C_1=<f_{2}(x)>$.

If $\C_1=<f_1(x)f_{2}(x)>$,  then $\C^\bot_1=<x-1>$. The proof is similar to that of Case 1. So we omit it.

In Case 3, the proof is similar to that of Case 2. So we omit it.
 \end{proof}
\begin{table}
\begin{center}
\caption{The weight distribution of the code $\C_1=<(x-1)>$ }\label{tab-CG11}
\begin{tabular}{c|c}
Weight  &  Multiplicity   \\ \hline
 $ 0$ & $1$\\ \hline
 $2$          &  $10(q-1)$ \\\hline
$3$ & $10(q-1)(q-2)$\\
$4$ & $5(q-1)(q^2-3q+3)$\\
$5$ & $(q-1)(q^3-4q^2+6q-4)$\\
\end{tabular}
\end{center}
\end{table}
\begin{table}
\begin{center}
\caption{The weight distribution of the code $\C_1=<f_1(x)>$ or $<f_2(x)>$, $<(x-w^{u_1})(x-w^{u_2})>(0\leq u_1< u_2\leq 4)$ }\label{tab-CG12}
\begin{tabular}{c|c}
Weight  &  Multiplicity   \\ \hline
 $ 0$ & $1$\\ \hline
 $3$ & $10(q-1)$\\
$4$ & $5(q-1)(q-3)$\\
$5$ & $(q-1)(q^2-4q+6)$\\
\end{tabular}
\end{center}
\end{table}

\begin{table}
\begin{center}
\caption{The weight distribution of the code $\C_1=<(x-1)f_{2}(x)>$ or $<(x-1)f_{1}(x)>$, $<(x-w^{u_1})(x-w^{u_2})(x-w^{u_3})>(0\leq u_1< u_2< u_3\leq 4)$ }\label{tab-CG13}
\begin{tabular}{c|c}
Weight  &  Multiplicity   \\ \hline
 $ 0$ & $1$\\ \hline
 $4$          &  $5(q-1)$ \\\hline
$5$ & $(q-1)(q-4)$\\
\end{tabular}
\end{center}
\end{table}

\begin{table}
\begin{center}
\caption{The weight distribution of the code $\C_1=<\phi_{5}(x)>$ or $<f_{1}(x)f_{2}(x)>$, $<(x-w^{u_1})(x-w^{u_2})(x-w^{u_3})(x-w^{u_4})>(0\leq u_1< u_2< u_3< u_4\leq 4)$ }\label{tab-CG14}
\begin{tabular}{c|c}
Weight  &  Multiplicity   \\ \hline
 $ 0$ & $1$\\ \hline
$5$ & $(q-1)$\\
\end{tabular}
\end{center}
\end{table}
\begin{proposition}\label{p1}
Let $t$ be an integer such that $0\leq t\leq p^s-1$. We denote the simple-root cyclic code
$\C_t =< g_t(x) >\subset \f_q[x]/(x^5-1)$ depending on $\C$ and $t$.
\begin{itemize}
\item Case 1. Suppose that  $5\nmid(q^2-1)$. Then
 \begin{equation*}\label{222}
 d_{H}(\C_t)=\left\{
  \begin{array}{ll}

    1,&\mbox{ if }  g_t=1,\\
    2,&\mbox{ if } g_t=x-1,\\
 4,&\mbox{ if } g_t=\phi_{5}(x),\\
\infty,&\mbox{ if } g_t=x^5-1.
\end{array}
\right.
\end{equation*}
\item Case 2. Suppose that  $5\mid(q+1)$ and $5\nmid(q-1)$. Then
\begin{equation*}\label{222}
 d_{H}(\C_t)=\left\{
  \begin{array}{ll}

    1,&\mbox{ if }  g_t=1,\\
    2,&\mbox{ if } g_t=x-1,\\
 3,&\mbox{ if } g_t=f_{1}(x) \mbox{ or } f_{2}(x),\\
4,&\mbox{ if } g_t=(x-1)f_{1}(x)\mbox{ or }(x-1)f_{2}(x),\\
5,&\mbox{ if } g_t=f_{1}(x)f_{2}(x),\\
\infty, &\mbox{ if } g_t=x^5-1.
\end{array}
\right.
\end{equation*}
\item Case 3. Suppose that  $5\mid(q-1)$.  Then
\begin{equation*}\label{222}
 d_{H}(\C_t)=\left\{
  \begin{array}{ll}

    1,&\mbox{ if }  g_t=1,\\
    2,&\mbox{ if } g_t=x-1,\\
 3,&\mbox{ if } g_t=(x-w^{u_1})(x-w^{u_2})(0\leq u_1<u_2\leq 4),\\
4,&\mbox{ if } g_t=(x-w^{u_1})(x-w^{u_2})(x-w^{u_3})(0\leq u_1< u_2< u_3\leq 4),\\
5,&\mbox{ if } g_t=(x-w^{u_1})(x-w^{u_2})(x-w^{u_3})(x-w^{u_4})(0\leq u_1< u_2< u_3<u_4\leq 4),\\
\infty, &\mbox{ if } g_t=x^5-1.
\end{array}
\right.
\end{equation*}
\end{itemize}
\end{proposition}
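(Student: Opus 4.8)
The plan is to deduce the proposition directly from the weight distributions computed in the Theorem above, using that $d_H(\C_t)$ equals the least $i\geq 1$ for which $A_i\neq 0$ in the weight enumerator of $\C_t$. Since $\C_t=\langle g_t(x)\rangle$ with $g_t(x)$ a monic divisor of $x^5-1$ over $\F_q$, it suffices to list those divisors in each of the three cases and to recognize each resulting $\C_t$ among the cyclic codes of length $5$ already classified.

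First I would enumerate the monic divisors of $x^5-1$. In Case 1, $x^5-1=(x-1)\phi_5(x)$ with $\phi_5$ irreducible, so the divisors are $1$, $x-1$, $\phi_5(x)$, $x^5-1$. In Case 2, $x^5-1=(x-1)f_1(x)f_2(x)$, so the eight divisors are the products of subsets of $\{x-1,f_1(x),f_2(x)\}$, which fall into exactly the groups named in the statement. In Case 3, $x^5-1=\prod_{i=0}^4(x-\omega^i)$ splits completely, so the divisors are the $2^5$ products of subsets of the linear factors, grouped according to how many factors they contain. For any nontrivial proper $g_t$, the code $\C_t=\langle g_t(x)\rangle$ has dimension $5-\deg g_t(x)$, so a generator of degree $1$, $2$, $3$, $4$ produces respectively a $[5,4]$, $[5,3]$, $[5,2]$, $[5,1]$ cyclic code whose weight enumerator is listed in Table \ref{tab-CG11}, \ref{tab-CG12}, \ref{tab-CG13}, \ref{tab-CG14}; reading off the smallest nonzero weight in each of these tables then gives $d_H(\C_t)$.

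The two boundary values are immediate: if $g_t=1$ then $\C_t=\F_q[x]/(x^5-1)\cong\F_q^5$, so $d_H(\C_t)=1$, and if $g_t=x^5-1$ then $\C_t=\{0\}$, so $d_H(\C_t)=\infty$ by the convention fixed in the preliminaries. The only point needing a brief supplement is the degree-$1$ case in Case 3, where $g_t$ may be $x-\omega^i$ with $i\neq 0$ rather than $x-1$: here $\C_t=\langle x-\omega^i\rangle$ is the set of $c\in\F_q^5$ with $\sum_{j=0}^4 c_j\omega^{ij}=0$, a single linear condition all of whose coefficients are nonzero, so $\C_t$ has no word of weight $1$ and exactly $\binom{5}{2}(q-1)=10(q-1)$ words of weight $2$, whence $d_H(\C_t)=2$, matching Table \ref{tab-CG11}; more systematically, applying a multiplier $j\mapsto\mu j$ on $\Z/5\Z$ with $\gcd(\mu,5)=1$ to the defining set of $\C_t$ exhibits it as permutation-equivalent to, hence with the same weight enumerator as, a code already treated in the Theorem. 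I do not anticipate a genuine obstacle: all of the substantive computation sits in the Theorem, and the proposition amounts to organizing the divisors of $x^5-1$ by degree and recording the least weight that occurs in each case.
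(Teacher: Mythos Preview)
Your approach is exactly the paper's: Proposition~\ref{p1} is stated there without proof, as an immediate consequence of the weight distributions in Tables~\ref{tab-CG11}--\ref{tab-CG14}, and your proposal simply makes that deduction explicit. The supplement on $g_t=x-\omega^i$ with $i\neq 0$ in Case~3 via the multiplier equivalence on $\Z/5\Z$ is a useful addition, since the Theorem only explicitly treats $\langle x-1\rangle$ among the $[5,4]$ codes.

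Do note, though, that carrying out your plan in Case~1 yields $d_H(\langle\phi_5(x)\rangle)=5$ from Table~\ref{tab-CG14}, not the value $4$ printed in the proposition: $\langle\phi_5(x)\rangle$ is the $[5,1]$ repetition code, with every nonzero word of the form $a(1,1,1,1,1)$. This appears to be an error in the paper's statement (which then propagates into the row ``$i=0$, $2p^{s-1}<j\le p^s$, $d_H(\C)=4$'' of Table~\ref{Table:32}); your method is sound and in fact establishes the corrected value.
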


\section{The minimum hamming distances of all repeated-root  cyclic codes of length $5p^s$}

It is clear that all cyclic codes of length $5p^s$ have  form $\C=<(x-1)^i{\phi_{5}(x)}^j>$ with $0 \leq i, j\leq p^s$, when $5\nmid(q^2-1)$. If $i=0$ and $j=0$, then $\C=\f_{q}[x]/(x^{5p^s}-1)$. If $i=p^s$ and $j=p^s$, then $\C=\{\mathbf{0}\}$. These are the trivial cyclic codes of length $5p^s$ over $\f_q$. Next, we determine the minimum Hamming distances of all non-trivial cyclic codes of length $5p^s$ over $\f_q$  when $i\geq j$. For $j>i$, the results of the minimum Hamming distances can be computed by the similar technique.

\begin{lemma}
Let $i, j$ be integers such that $0< i\leq p^s$ and $j=0$,  then $d_H(\C)=2$.
\end{lemma}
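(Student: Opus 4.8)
The plan is to read off $d_H(\C)$ directly from the distance formula \eqref{25} together with the list of distances of the auxiliary simple-root codes $\C_t$ recorded in Proposition~\ref{p1} (Case~1). Since $j=0$, the generator polynomial is $g(x)=(x-1)^i$, so in the factorization $g(x)=\prod_{\ell\in T_5}M_\ell(x)^{t_\ell}$ (here $T_5=\{0,1\}$ with $M_0(x)=x-1$ and $M_1(x)=\phi_5(x)$) the multiplicities are $t_0=i$ and $t_1=0$. For a fixed $t$ with $0\le t\le p^s-1$, the generator $g_t(x)$ of $\C_t$ is, by definition, the product of those $M_\ell(x)$ whose multiplicity exceeds $t$, which forces
\[
g_t(x)=\begin{cases} x-1, & 0\le t< i,\\ 1, & i\le t\le p^s-1. \end{cases}
\]
Proposition~\ref{p1}, Case~1, then gives $d_H(\C_t)=2$ when $t<i$ and $d_H(\C_t)=1$ when $t\ge i$; in particular no term equals $\infty$, since $g_t(x)$ is never $x^5-1$.

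Next I would evaluate $\min\{P_t\,d_H(\C_t):t\in T\}$ over $T=\{0,\dots,p^s-1\}$ by splitting into the two ranges above. For $t<i$ the term equals $2P_t$; since $P_t=\prod_{m}(t_m+1)\ge 1$ for all $t$, every such term is $\ge 2$, and the value $2$ is attained at $t=0$, which lies in this range because $0<i$ and which has $P_0=1$. For $t$ with $i\le t\le p^s-1$ the term equals $P_t$, and here $t\ge i\ge 1$, so some base-$p$ digit $t_m$ of $t$ is nonzero, whence $P_t\ge 2$. Thus every term appearing in the minimum is at least $2$ and the value $2$ is achieved, giving $d_H(\C)=2$.

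There is no genuinely hard step: the lemma is a bookkeeping consequence of the cited distance formula and Proposition~\ref{p1}. The two points to handle with a little care are (i) the inequality $P_t\ge 2$ for every $t\ge 1$, which is what rules out the ``$t\ge i$'' block producing a value below $2$; and (ii) the boundary case $i=p^s$, where $T$ consists entirely of indices with $t<i$, so $g_t(x)=x-1$ and $d_H(\C_t)=2$ for all $t\in T$, and hence $d_H(\C)=2\min\{P_t:t\in T\}=2P_0=2$ as well. Both are immediate.
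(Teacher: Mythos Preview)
Your proof is correct and follows the same approach as the paper, which simply cites Proposition~\ref{p1} and \eqref{25} in a single line. You have merely supplied the details the paper leaves implicit, namely the explicit description of $g_t(x)$ for each $t$, the attainment of the minimum at $t=0$, and the observation $P_t\ge 2$ for $t\ge 1$.
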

\begin{proof}
 According to Proposition \ref{p1} and $(\ref{25})$, we can get $ d_{H}(\C)=2$.
\end{proof}
\begin{lemma}
Let $i, j$ be integers with $p^s-p^{s-\tau_{0}}+\beta_{0} p^{s-\tau_{0}-1}+1\leq i\leq p^s-p^{s-\tau_{0}}+(\beta_{0}+1) p^{s-\tau_{0}-1}$ and $p^s-p^{s-\tau_{1}}+\beta_{1}p^{s-\tau_{1}-1}+1\leq j\leq p^s-p^{s-\tau_{1}}+(\beta_{1}+1) p^{s-\tau_{1}-1}$, where $0 \leq \beta_{0}, \beta_{1}\leq p -2$ and $0 \leq\tau_{1} \leq\tau_{0}\leq s -1$.  Let $\C=<(x-1)^i{\phi_{5}(x)}^j>$, then $d_H(\C) =\min\{(\beta_{0}+2)p^{\tau_{0}}, 2(\beta_{1}+2)p^{\tau_{1}} \}$.
\end{lemma}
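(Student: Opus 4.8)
The plan is to read off $d_H(\C)$ from the distance formula $(\ref{25})$ and the minimization formula $(\ref{24})$ of Section~2 together with Proposition~\ref{p1}, once we know which simple-root code $\C_t$ occurs for each $t\in T$. Since $5\nmid(q^2-1)$ we are in Case~1, and in $g(x)=(x-1)^i\phi_{5}(x)^j$ the factor $x-1$ occurs with multiplicity $i$ and $\phi_{5}(x)$ with multiplicity $j$. Working under the standing assumption $i\ge j$ of this section, one checks directly that $g_t(x)=x^5-1$ precisely when $t\le j-1$, that $g_t(x)=x-1$ precisely when $j\le t\le i-1$, and that $g_t(x)=1$ precisely when $t\ge i$; in particular the case $g_t(x)=\phi_{5}(x)$ never arises. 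By Proposition~\ref{p1}, $d_H(\C_t)$ equals $\infty$, $2$, $1$ on these three ranges respectively.

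First I would prove the lower bound $d_H(\C)\ge\min\{(\beta_0+2)p^{\tau_0},\,2(\beta_1+2)p^{\tau_1}\}$. By $(\ref{25})$ it suffices to bound $P_td_H(\C_t)$ below for each $t$. The range $t\le j-1$ contributes $\infty$. On $j\le t\le i-1$ we have $P_td_H(\C_t)=2P_t$, and since every such $t$ satisfies $t\ge j$, Lemma $(\ref{24})$ with $l=j$ (which lies in the required interval by hypothesis) gives $P_t\ge(\beta_1+2)p^{\tau_1}$. On $t\ge i$ we have $P_td_H(\C_t)=P_t\ge(\beta_0+2)p^{\tau_0}$ by Lemma $(\ref{24})$ with $l=i$. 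Taking the minimum over the three ranges gives the lower bound.

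Next I would prove the matching upper bound by exhibiting two values of $t$. Put $t_0^{\ast}=p^s-p^{s-\tau_0}+(\beta_0+1)p^{s-\tau_0-1}$ and $t_1^{\ast}=p^s-p^{s-\tau_1}+(\beta_1+1)p^{s-\tau_1-1}$. A base-$p$ digit count gives $P_{t_0^{\ast}}=(\beta_0+2)p^{\tau_0}$ and $P_{t_1^{\ast}}=(\beta_1+2)p^{\tau_1}$, while $\beta_r+1\le p-1$ forces $t_r^{\ast}\le p^s-p^{s-\tau_r-1}\le p^s-1$, so $t_0^{\ast},t_1^{\ast}\in T$. The upper endpoints of the intervals for $i$ and $j$ give $i\le t_0^{\ast}$ and $j\le t_1^{\ast}$. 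Hence $d_H(\C_{t_0^{\ast}})=1$, so $(\ref{25})$ yields $d_H(\C)\le(\beta_0+2)p^{\tau_0}$; and since $t_1^{\ast}\ge j$ forces $g_{t_1^{\ast}}(x)\ne x^5-1$ and $i\ge j$ excludes $g_{t_1^{\ast}}(x)=\phi_{5}(x)$, we get $d_H(\C_{t_1^{\ast}})\le 2$, hence $d_H(\C)\le 2(\beta_1+2)p^{\tau_1}$. Combining the two bounds gives the stated equality.

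The argument has no real obstacle; it is driven entirely by the two lemmas, so the only care needed is elementary bookkeeping: confirming from the ranges and $i\ge j$ that $\C_t$ takes only the forms with generators $x^5-1$, $x-1$, $1$ (so that the value $4$ in Proposition~\ref{p1} is irrelevant here), and checking the base-$p$ identities $P_{t_r^{\ast}}=(\beta_r+2)p^{\tau_r}$ together with $t_r^{\ast}\in T$. In particular one does not need to locate where $\min\{P_t:t\ge j\}$ is attained relative to $i$, since for the upper bound it is enough that $d_H(\C_{t_1^{\ast}})\le 2$.
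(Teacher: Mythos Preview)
Your proof is correct and follows the same approach as the paper's: partition $T$ into the ranges $t<j$, $j\le t<i$, $t\ge i$, read off $d_H(\C_t)$ from Proposition~\ref{p1}, and apply $(\ref{24})$ on each range. You are in fact more careful than the paper about the upper bound, explicitly exhibiting the witnesses $t_0^\ast,t_1^\ast$ and arguing that $d_H(\C_{t_1^\ast})\le 2$ regardless of whether $t_1^\ast<i$, whereas the paper records only the inequality $\min\{P_td_H(\C_t):i>t\ge j\}\ge 2(\beta_1+2)p^{\tau_1}$ and then asserts the final equality.
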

\begin{proof}

If $t \geq i \geq j$, then  $\min\{P_t :  t \in T  \mbox{ and } t\geq i\geq j\} = (\beta_{0}+2)p^{\tau_{0}}$ by $(\ref{24})$, and $d_H(\C_t)=1$ by Proposition $\ref{p1}$. Thus,
\begin{equation*}
 \min\{P_td_H(\C_t) : t \in T, t\geq i\geq j\}=(\beta_{0}+2)p^{\tau_{0}}.
\end{equation*}

If $ i> t\geq j$, then  $\min\{P_t :  t \in T  \mbox{ and } i> t\geq j\} \geq (\beta_{1}+2)p^{\tau_{1}}$ by $(\ref{24})$, and $d_H(\C_t)=2$ by Proposition $\ref{p1}$. Thus,
\begin{equation*}
 \min\{P_td_H(\C_t) : t \in T,i>t\geq j\}\geq2(\beta_{1}+2)p^{\tau_{1}}.
\end{equation*}

In summary, $d_H(\C)=\min\{(\beta_{0}+2)p^{\tau_{0}}, 2(\beta_{1}+2)p^{\tau_{1}}\}$. The proof of this lemma is done.

\end{proof}

By the similar arguments as above, we can obtain the following lemmas immediately.

\begin{lemma}
Let $j$ be integer with $p^s-p^{s-\tau_{1}}+\beta_{1}p^{s-\tau_{1}-1}+1\leq j\leq p^s-p^{s-\tau_{1}}+(\beta_{1}+1) p^{s-\tau_{1}-1}$, where $0 \leq \beta_{1}\leq p -2$ and $0 \leq\tau_{1}\leq s -1$.  Let $\C=<(x-1)^{p^s}{\phi_{5}(x)}^j>$, then
$d_H(\C) =2(\beta_{1}+2)p^{\tau_{1}}$.
\end{lemma}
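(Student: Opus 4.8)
The plan is to run the same two-step argument as in the preceding lemmas: first identify, for each admissible $t$, the simple-root code $\C_t$ together with its Hamming distance via Proposition~\ref{p1}, and then feed these into the distance formula $(\ref{25})$ together with the minimization formula $(\ref{24})$.

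First I would record the relevant multiplicities. We are in Case~1, so $x^5-1=(x-1)\phi_{5}(x)$, and the generator polynomial of $\C=<(x-1)^{p^s}{\phi_{5}(x)}^j>$ has $t_0=p^s$ for the factor $M_0(x)=x-1$ and $t_1=j$ for the factor $M_1(x)=\phi_{5}(x)$. Now fix $t$ with $0\le t\le p^s-1$. Since $t\le p^s-1<p^s=t_0$, the factor $x-1$ always divides $g_t(x)$; and $\phi_{5}(x)$ divides $g_t(x)$ exactly when $j>t$. Hence $g_t(x)=(x-1)\phi_{5}(x)=x^5-1$ for $t<j$, while $g_t(x)=x-1$ for $j\le t\le p^s-1$. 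By Proposition~\ref{p1}, Case~1, this means $d_H(\C_t)=\infty$ when $t<j$ and $d_H(\C_t)=2$ when $j\le t\le p^s-1$.

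Next I would substitute these values into $(\ref{25})$. The indices $t<j$ contribute $P_td_H(\C_t)=\infty$ and may be discarded, so
\[
 d_H(\C)=\min\{P_td_H(\C_t):t\in T\}=2\cdot\min\{P_t:t\in T,\ t\ge j\}.
\]
Since the hypothesis on $j$ is precisely that $p^s-p^{s-\tau_{1}}+\beta_{1}p^{s-\tau_{1}-1}+1\le j\le p^s-p^{s-\tau_{1}}+(\beta_{1}+1)p^{s-\tau_{1}-1}$ with $0\le\beta_{1}\le p-2$ and $0\le\tau_{1}\le s-1$, formula $(\ref{24})$ applied with $l=j$ gives $\min\{P_t:t\in T,\ t\ge j\}=(\beta_{1}+2)p^{\tau_{1}}$, whence $d_H(\C)=2(\beta_{1}+2)p^{\tau_{1}}$.

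I do not expect a genuine obstacle here: this is the $i=p^s$ degeneration of the previous lemma, in which the two branches ``$t\ge i$'' and ``$i>t\ge j$'' of that proof collapse, since there is no $t\in T$ with $t\ge p^s$, leaving the single regime $j\le t\le p^s-1$ on which $\C_t=<x-1>$ has distance $2$. The only point requiring care is the bookkeeping that $t_0=p^s$ strictly exceeds every $t\in T$, so that $\C_t$ is never the full ambient ring $\f_q[x]/(x^5-1)$ and the distance-$2$ code $<x-1>$ is indeed the simple-root code governing the minimum throughout the minimizing range.
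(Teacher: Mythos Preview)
Your proof is correct and follows exactly the approach the paper indicates (``by the similar arguments as above''): you identify that for $i=p^s$ every $t\in T$ satisfies $t<i$, so the only surviving regime is $j\le t\le p^s-1$ with $g_t(x)=x-1$ and $d_H(\C_t)=2$, and then $(\ref{24})$ gives the minimum of $P_t$ over this range. If anything, your write-up is more explicit than the paper's, which simply defers to the preceding lemma.
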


According to the lemmas as above, we have the following theorem.
\begin{theorem}
Let $p\geq 7$ be a prime satisfying $5\nmid(q^2-1)$, $0 \leq \beta_{0},\beta_{1}\leq p -2$ and $0 \leq\tau_{1} \leq\tau_{0}\leq s -1$. Then cyclic codes of length $5p^s$ have  form $\C=<(x-1)^i{\phi_{5}(x)}^j>$. If  $0 \leq j\leq i\leq p^s$, then the minimum Hamming distances of $\C$ are given in Table \ref{Table:31}. If  $0 \leq i \leq j \leq p^s$, then the minimum Hamming distances of $\C$ are also shown in Table \ref{Table:32}.\end{theorem}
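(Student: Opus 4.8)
The plan is to assemble the theorem from the three lemmas that precede it, covering all ranges of the exponent pair $(i,j)$ with $0\le j\le i\le p^s$ (the case $0\le i\le j\le p^s$ being symmetric by swapping the roles of $x-1$ and $\phi_5(x)$, except that $d_H(\C_t)=2$ is replaced by $d_H(\C_t)=4$ when $g_t=\phi_5(x)$, as recorded in Proposition \ref{p1}, Case 1). First I would dispose of the trivial and near-trivial entries: if $i=j=0$ then $\C=\f_q[x]/(x^{5p^s}-1)$ and $d_H(\C)=1$; if $i=p^s=j$ then $\C=\{\mathbf 0\}$ and $d_H(\C)=\infty$; and if $i>0$, $j=0$ the first lemma above gives $d_H(\C)=2$. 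These fill the degenerate rows and columns of Table \ref{Table:31}.

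Next I would treat the generic interior case $0<j\le i\le p^s$. Write $i$ and $j$ in the normalized form demanded by Lemma \ref{l1}'s companion (equation \eqref{24}): choose $\beta_0,\tau_0$ with $p^s-p^{s-\tau_0}+\beta_0 p^{s-\tau_0-1}+1\le i\le p^s-p^{s-\tau_0}+(\beta_0+1)p^{s-\tau_0-1}$, and similarly $\beta_1,\tau_1$ for $j$; the hypothesis $j\le i$ forces $\tau_1\le\tau_0$. Then I would invoke Lemma 3.4 directly for $i<p^s$, and Lemma 3.5 for $i=p^s$, to read off $d_H(\C)=\min\{(\beta_0+2)p^{\tau_0},\,2(\beta_1+2)p^{\tau_1}\}$ in the first instance and $d_H(\C)=2(\beta_1+2)p^{\tau_1}$ in the second. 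The point of the table is simply to organize these outputs by the sub-region of the $(i,j)$-square in which $(\beta_0,\tau_0,\beta_1,\tau_1)$ lives; no new mathematics is needed, only bookkeeping to check that the stated ranges for $i$ and $j$ partition $\{0,\dots,p^s\}^2$ exhaustively and without overlap, and that the boundary cases $\beta=p-1$ (which would be $\tau\mapsto\tau+1,\ \beta\mapsto 0$) are correctly folded in.

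For the symmetric half $0\le i\le j\le p^s$ I would run the identical argument with $i$ and $j$ interchanged, noting that the only change in the formula \eqref{25} is that the contribution of a cyclotomic level $t$ with $t\ge j > t\ge i$ now has $g_t=\phi_5(x)$ and hence $d_H(\C_t)=4$ rather than $2$; thus in that range $\min\{P_t d_H(\C_t)\}\ge 4(\beta_0+2)p^{\tau_0}$, and the overall minimum becomes $\min\{(\beta_1+2)p^{\tau_1},\,4(\beta_0+2)p^{\tau_0}\}$ when $j<p^s$ (and $4(\beta_0+2)p^{\tau_0}$ when $j=p^s$). This produces Table \ref{Table:32}.

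I do not expect any genuine obstacle: the content is entirely carried by Lemma \ref{l1} (Dinh's formula \eqref{25}), the combinatorial minimum \eqref{24}, and Proposition \ref{p1}, all available above. The one place that demands care—and the only thing that could go wrong—is the case analysis at the endpoints: verifying that when $\tau_1=\tau_0$ the two quantities $(\beta_0+2)p^{\tau_0}$ and $2(\beta_1+2)p^{\tau_0}$ are compared correctly (here $\beta_1\le\beta_0$ since $j\le i$, so the first term is the smaller unless $\beta_0+2\le 2\beta_1+4$, which is automatic, forcing $d_H(\C)=(\beta_0+2)p^{\tau_0}$), and that the degenerate columns $j=0$ and rows $i=p^s$ of the tables agree with the three preliminary lemmas. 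So the proof is: cite Lemmas 3.3, 3.4, 3.5, observe the symmetry, and assert that Tables \ref{Table:31} and \ref{Table:32} merely tabulate these results; I would then write ``This completes the proof.''
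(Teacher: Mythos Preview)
Your approach is essentially the paper's own: the theorem is stated immediately after Lemmas 3.1--3.3 with the one-line justification ``According to the lemmas as above,'' and the $i\le j$ half is handled by the same appeal to symmetry (``For $j>i$, the results \dots\ can be computed by the similar technique''). So the plan of citing the three lemmas, tabulating, and swapping $2\leftrightarrow 4$ for the dual table is exactly right.

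One small correction to your aside: in the parenthetical about the case $\tau_0=\tau_1$, the inequality $\beta_0+2\le 2(\beta_1+2)$ is \emph{not} automatic from $\beta_1\le\beta_0$ (take $\beta_1=0$, $\beta_0=5$), so you cannot conclude that $d_H(\C)=(\beta_0+2)p^{\tau_0}$ in general---but this is harmless, since Table \ref{Table:31} records the unresolved $\min$ and no such comparison is required for the proof.
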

\begin{table}\tiny
\begin{center}
\caption{The minimum Hamming distances of $\C=<(x-1)^i{\phi_{5}(x)}^j>$, where $0 \leq j \leq i \leq p^s$.}\label{Table:31}
\begin{tabular}{c|c|c}
    \hline  $i$ & $j$  & $d_H(\C)$ \\\hline
       $0$   &$0$ &$1$\\\hline
        $0<i\leq p^s$ & $j=0$  &$2$\\\hline
      $\begin{array}{c}
  p^s-p^{s-\tau_{0}}+\beta_{0} p^{s-\tau_{0}-1}+1\leq i\\
  \leq p^s-p^{s-\tau_{0}}+(\beta_{0}+1) p^{s-\tau_{0}-1}
\end{array}$
 &  $\begin{array}{c}
  p^s-p^{s-\tau_{1}}+\beta_{1} p^{s-\tau_{1}-1}+1\leq j\\
  \leq p^s-p^{s-\tau_{1}}+(\beta_{1}+1) p^{s-\tau_{1}-1}
\end{array}$ & $\min\{(\beta_{0}+2)p^{\tau_{0}}, 2(\beta_{1}+2)p^{\tau_{1}}\}$
  \\ \hline
$i=p^s$&$\begin{array}{c}
  p^s-p^{s-\tau_{1}}+\beta_{1} p^{s-\tau_{1}-1}+1\leq j\\
  \leq p^s-p^{s-\tau_{1}}+(\beta_{1}+1) p^{s-\tau_{1}-1}
\end{array}$ &$2(\beta_{1}+2)p^{\tau_{1}}$\\ \hline
$i=p^s$&$j=p^s$&$0$\\ \hline
\end{tabular}
\end{center}
\end{table}
\begin{table}\tiny
\begin{center}
\caption{The minimum Hamming distances of $\C=<(x-1)^i{\phi_{5}(x)}^j>$, where $0 \leq i \leq j\leq p^s$.}\label{Table:32}
\begin{tabular}{c|c|c}
    \hline  $i$ & $j$  & $d_H(\C)$ \\\hline
       $0$   &$0$ &$1$\\\hline
        $i=0$ & $0<j\leq p^{s-1}$  &$2$\\\hline
 $i=0$ & $p^{s-1}<j\leq2 p^{s-1}$  &$3$\\\hline
 $i=0$ & $2p^{s-1}<j\leq p^s$  &$4$\\\hline
      $\begin{array}{c}
  p^s-p^{s-\tau_{1}}+\beta_{1} p^{s-\tau_{1}-1}+1\leq i\\
  \leq p^s-p^{s-\tau_{1}}+(\beta_{1}+1) p^{s-\tau_{1}-1}
\end{array}$
 &  $\begin{array}{c}
  p^s-p^{s-\tau_{0}}+\beta_{0} p^{s-\tau_{0}-1}+1\leq j\\
  \leq p^s-p^{s-\tau_{0}}+(\beta_{0}+1) p^{s-\tau_{0}-1}
\end{array}$ & $\min\{(\beta_{0}+2)p^{\tau_{0}}, 4(\beta_{1}+2)p^{\tau_{1}}\}$
  \\ \hline
$\begin{array}{c}
  p^s-p^{s-\tau_{1}}+\beta_{1} p^{s-\tau_{1}-1}+1\leq i\\
  \leq p^s-p^{s-\tau_{1}}+(\beta_{1}+1) p^{s-\tau_{1}-1}
\end{array}$&$j=p^s$ &$4(\beta_{1}+2)p^{\tau_{1}}$\\ \hline
$i=p^s$&$j=p^s$&$0$\\ \hline
\end{tabular}
\end{center}
\end{table}

 It is clear that all cyclic codes of length $5p^s$ have  form $\C=<(x-1)^i{f_{1}(x)}^j{f_{2}(x)}^k>$ with $0 \leq i, j, k\leq p^s$, when $5\mid(q+1)$ and $5\nmid(q-1)$. If $i=j=k=0$, then $\C=\f_{q}[x]/(x^{5p^s}-1)$, $d_H(\C)=1$. If $i=j=k=p^s$, then $\C=\{\mathbf{0}\}$, $d_H(\C)=0$. These are the trivial cyclic codes of length $5p^s$ over $\f_q$. Next, we determine the minimum Hamming distances of all non-trivial cyclic codes of length $5p^s$ over $\f_q$
Next, we will begin to discuss the minimum Hamming distances of all cyclic codes when $i\geq j\geq k$. For $k\geq j\geq i$, the results of the minimum Hamming distances can be computed by the similar technique.

We here detemine the Hamming distances of $\C=<(x-1)^i{f_{1}(x)}^j{f_{2}(x)}^k>$ for $0 \leq k\leq j \leq i \leq p^s$. Using the similar way, we show the Hamming distances of $\C=<(x-1)^i{f_{2}(x)}^j{f_{1}(x)}^k>$ for $0 \leq k\leq j \leq i \leq p^s$.

\begin{lemma}\label{27}
Let $i, j$ be integers such that $0< i\leq p^s$ and $0\leq j\leq p^s$, $k=0$, then
\begin{equation*}\label{222}
 d_{H}(\C)=\left\{
  \begin{array}{ll}

    2,&\mbox{ if }  0< i\leq p^{s-1}, 0\leq j\leq p^{s-1}, \mbox{ or } p^{s-1}<i\leq p^s,  j=0,\\
    3,&\mbox{ if }  p^{s-1}<i\leq 2p^{s-1}, 0<j\leq p^s,\\
 4,&\mbox{ if } 2p^{s-1}<i\leq p^s,  0<j\leq p^s.
\end{array}
\right.
\end{equation*}
\end{lemma}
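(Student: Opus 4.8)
The plan is to invoke the distance formula $(\ref{25})$, namely $d_H(\C)=\min\{P_t\,d_H(\C_t):t\in T\}$, and to read off $d_H(\C_t)$ from Proposition~\ref{p1}. Since $k=0$ the generator of $\C$ is $g(x)=(x-1)^i f_1(x)^j$, and because we work in the range $0\leq j\leq i\leq p^s$, the simple-root code $\C_t=\langle g_t(x)\rangle$ satisfies: $g_t(x)=1$ when $i\leq t$ (whence $d_H(\C_t)=1$); $g_t(x)=x-1$ when $j\leq t<i$ (whence $d_H(\C_t)=2$); and $g_t(x)=(x-1)f_1(x)$ when $0\leq t<j$ (whence $d_H(\C_t)=4$). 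The factor $f_2(x)$ never appears because its multiplicity is $0$, and the three sets just listed exhaust $T$. Hence
\begin{equation*}
d_H(\C)=\min\{A,\ B,\ D\},\qquad A=\min_{i\leq t\leq p^s-1}P_t,\quad B=\min_{j\leq t<i}2P_t,\quad D=\min_{0\leq t<j}4P_t,
\end{equation*}
where a term is dropped whenever its index set is empty.

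Next I would evaluate $A$, $B$, $D$. From $P_t=\prod_{m=0}^{s-1}(t_m+1)$ one has $P_0=1$ and $P_t\geq2$ for every $t\geq1$ (some base-$p$ digit of $t$ is nonzero). Hence $D=4P_0=4$ if $j\geq1$, and $D$ is absent if $j=0$; likewise $B=2P_0=2$ if $j=0$, while $B\geq2\cdot2=4$ (or is absent, when $j=i$) if $j\geq1$. For $A$ I apply $(\ref{24})$ with $l=i$ when $i\leq p^s-1$ (if $i=p^s$ the set $\{t:i\leq t\leq p^s-1\}$ is empty and $A$ is dropped): this gives $A=(\beta+2)p^{\tau}$ according to which interval $i$ lies in, so $A=2$ for $1\leq i\leq p^{s-1}$ (take $\tau=\beta=0$), $A=3$ for $p^{s-1}<i\leq2p^{s-1}$ (take $\tau=0,\ \beta=1$), and $A\geq4$ for $2p^{s-1}<i\leq p^s-1$ (either $\tau=0$ and $\beta\geq2$, or $\tau\geq1$ so that $(\beta+2)p^{\tau}\geq2p\geq4$).

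Finally I assemble these in the three ranges of the statement. If $0<i\leq p^{s-1}$ and $0\leq j\leq p^{s-1}$ (so $j\leq i$ automatically), then $A=2$, $B\geq2$, and $D\in\{4,\text{absent}\}$, so $d_H(\C)=2$; likewise if $p^{s-1}<i\leq p^s$ and $j=0$, then $B=2$ while $A\geq2$ or is absent, so again $d_H(\C)=2$. If $p^{s-1}<i\leq2p^{s-1}$ and $0<j\leq p^s$, then $A=3$, $B\geq4$, $D=4$, so $d_H(\C)=3$. If $2p^{s-1}<i\leq p^s$ and $0<j\leq p^s$, then $A\geq4$ or is absent, $B\geq4$, $D=4$, so $d_H(\C)=4$. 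The only delicate points are bookkeeping ones: keeping track of which of the three index sets is empty — the case $j=0$, where $D$ disappears and $t=0$ contributes weight $2$ through the $g_t=x-1$ regime, and the case $i=p^s$, where $A$ disappears — and verifying via $(\ref{24})$ that $\min\{P_t:t\geq i\}\geq4$ as soon as $i>2p^{s-1}$. I do not expect any substantive obstacle beyond this casework.
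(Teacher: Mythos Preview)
Your proposal is correct and follows essentially the same approach as the paper: apply the formula $(\ref{25})$ together with the values of $d_H(\C_t)$ from Proposition~\ref{p1}, then split into cases according to the range of $i$ and whether $j=0$. The paper's proof is extremely terse (each subcase is dispatched with a single sentence ``according to Proposition~\ref{p1} and $(\ref{25})$, we get $d_H(\C)=\cdots$''), whereas you have made explicit the decomposition into $A$, $B$, $D$ and the use of $(\ref{24})$ and the elementary bounds $P_0=1$, $P_t\geq2$ for $t\geq1$; this is exactly the computation the paper is suppressing.
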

\begin{proof}
\begin{itemize}
\item Case 1. $0< i\leq p^{s-1}$.

 According to Proposition \ref{p1} and $(\ref{25})$, we can get $ d_{H}(\C)=2$;

\item Case 2. $ p^{s-1} <i\leq p^s$.

If $ p^{s-1}<i\leq p^{s}$ and $j=0$, then according to Proposition \ref{p1} and $(\ref{25})$, we can get $ d_{H}(\C)=2$;

If $ p^{s-1}<i\leq 2p^{s-1}$ and $0<j\leq p^s$,  then according to Proposition \ref{p1} and $(\ref{25})$, we can get $ d_{H}(\C)=3$;

If $ 2p^{s-1}<i\leq p^{s}$ and $0<j\leq p^s$, then according to Proposition \ref{p1} and  $(\ref{25})$, we can get $ d_{H}(\C)=4$;

\end{itemize}
According to the cases as above, the proof is done.
\end{proof}
\begin{lemma}\label{28}
Let $i, j, k$ be integers with $p^s-p^{s-\tau_{0}}+\beta_{0} p^{s-\tau_{0}-1}+1\leq i\leq p^s-p^{s-\tau_{0}}+(\beta_{0}+1) p^{s-\tau_{0}-1}$, $p^s-p^{s-\tau_{1}}+\beta_{1}p^{s-\tau_{1}-1}+1\leq j\leq p^s-p^{s-\tau_{1}}+(\beta_{1}+1) p^{s-\tau_{1}-1}$ and $p^s-p^{s-\tau_{2}}+\beta_{2} p^{s-\tau_{2}-1}+1\leq k\leq p^s-p^{s-\tau_{2}}+(\beta_{2}+1) p^{s-\tau_{2}-1}$  where $0 \leq \beta_{0}, \beta_{1}, \beta_{2}\leq p -2$ and $0\leq\tau_{2} \leq\tau_{1} \leq\tau_{0}\leq s -1$.  Let $\C=<(x-1)^i{f_{1}(x)}^j{f_{2}(x)}^k>$, then $d_H(\C) =\min\{ (\beta_{0}+2)p^{\tau_{0}}, 2(\beta_{1}+2)p^{\tau_{1}}, 4(\beta_{2}+2)p^{\tau_{2}}\}$.
\end{lemma}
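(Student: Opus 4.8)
The plan is to combine the two structural lemmas from the Preliminaries—equation~(\ref{25}), which expresses $d_H(\C)$ as a minimum of products $P_t d_H(\C_t)$ over $t\in T$, and equation~(\ref{24}), which evaluates $\min\{P_t : t\ge l,\ t\in T\}$ as $(\beta+2)p^\tau$ on the relevant range—together with the explicit values of $d_H(\C_t)$ given in Proposition~\ref{p1}, Case~2. First I would observe that, since $0\le k\le j\le i\le p^s$, the exponents partition the index set $T$ into (at most) four consecutive blocks according to which of the factors $(x-1),f_1(x),f_2(x)$ survive in $g_t(x)$: for $t\ge i$ all three multiplicities $t_i,t_j,t_k$ are $\le t$ so $g_t=1$ and $d_H(\C_t)=1$; for $j\le t<i$ only $(x-1)$ survives so $g_t=x-1$ and $d_H(\C_t)=2$; for $k\le t<j$ we get $g_t=(x-1)f_1(x)$ and $d_H(\C_t)=4$; and for $0\le t<k$ we get $g_t=(x-1)f_1(x)f_2(x)=x^5-1$, hence $d_H(\C_t)=\infty$, so this last block contributes nothing to the minimum.

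Next I would estimate $\min\{P_t d_H(\C_t)\}$ on each of the three contributing blocks. On $\{t\ge i\}$ Lemma with~(\ref{24}) applied at $l=i$ (whose location is exactly the hypothesis $p^s-p^{s-\tau_0}+\beta_0 p^{s-\tau_0-1}+1\le i\le p^s-p^{s-\tau_0}+(\beta_0+1)p^{s-\tau_0-1}$) gives $\min\{P_t:t\ge i\}=(\beta_0+2)p^{\tau_0}$, and multiplying by $d_H(\C_t)=1$ yields the first term. On $\{j\le t<i\}$ the relevant bound is $\min\{P_t:t\ge j\}=(\beta_1+2)p^{\tau_1}$ by~(\ref{24}) at $l=j$; since $d_H(\C_t)=2$ there, this block contributes at least $2(\beta_1+2)p^{\tau_1}$, and one checks the bound is attained because the minimizing $t$ for $\{t\ge j\}$ lies in $[j,i)$ (this uses $\tau_1\le\tau_0$, which forces the optimal exponent for the $j$-threshold to be strictly below $i$). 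Similarly, on $\{k\le t<j\}$, using~(\ref{24}) at $l=k$ and $d_H(\C_t)=4$, the block contributes $4(\beta_2+2)p^{\tau_2}$, with attainment guaranteed by $\tau_2\le\tau_1$. Taking the minimum over the three blocks gives exactly $d_H(\C)=\min\{(\beta_0+2)p^{\tau_0},\,2(\beta_1+2)p^{\tau_1},\,4(\beta_2+2)p^{\tau_2}\}$.

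I expect the main obstacle to be the attainment claims rather than the lower bounds: equation~(\ref{24}) only directly gives $\min\{P_t : t\ge l\}$, i.e.\ a minimum over the \emph{tail} $t\ge l$, whereas on the middle blocks I need the minimum over a \emph{bounded window} $l\le t<i$ (resp.\ $k\le t<j$). So the careful point is to verify that the value $(\beta_1+2)p^{\tau_1}$ promised by~(\ref{24}) for the threshold $j$ is realized at some $t$ with $t<i$, and likewise for the $k$-threshold with $t<j$; this is where the chain $\tau_2\le\tau_1\le\tau_0$ is used, since a larger $\tau$ corresponds to a threshold $l$ closer to $p^s$, so the minimizer for the smaller threshold automatically sits to the left of the next break point. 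Once that monotonicity bookkeeping is pinned down, the rest is a direct substitution into~(\ref{25}). I would present it as three short cases mirroring the proof of the earlier two-factor lemma, noting that the block $0\le t<k$ is vacuous because $d_H(\C_t)=\infty$ there.
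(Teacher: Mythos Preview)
Your proposal is correct and follows essentially the same approach as the paper: partition $T$ into the ranges $t\ge i$, $j\le t<i$, $k\le t<j$, and $t<k$, read off $d_H(\C_t)$ from Proposition~\ref{p1} (Case~2), and bound $\min P_t$ on each range via~(\ref{24}). You are in fact more careful than the paper, which records only the inequalities $\ge 2(\beta_1+2)p^{\tau_1}$ and $\ge 4(\beta_2+2)p^{\tau_2}$ on the two middle blocks and then passes directly to the equality $d_H(\C)=\min\{\dots\}$ without addressing the attainment issue you flag.
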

\begin{proof}

If $t\geq i\geq j\geq k$, then   $\min\{P_t :  t \in T  \mbox{ and } t\geq i\geq j\geq k\} = (\beta_{0}+2)p^{\tau_{0}}$ by $(\ref{24})$, and $d_H(\C_t)=1$ by Proposition $\ref{p1}$. Thus,
\begin{equation*}
 \min\{P_td_H(\C_t) : t \in T, t\geq i\geq j\geq k\}=(\beta_{0}+2)p^{\tau_{0}}.
\end{equation*}

If $i>t\geq j\geq k$, then   $\min\{P_t :  t \in T  \mbox{ and } i>t\geq j\geq k\}\geq(\beta_{1}+2)p^{\tau_{1}}$ by $(\ref{24})$, and $d_H(\C_t)=2$ by Proposition $\ref{p1}$. Thus,
\begin{equation*}
 \min\{P_td_H(\C_t) : t \in T, i>t\geq j\geq k\}\geq2(\beta_{1}+2)p^{\tau_{1}}.
\end{equation*}

If $i\geq j>t\geq k$, then   $\min\{P_t :  t \in T  \mbox{ and } i\geq j>t\geq k\}\geq(\beta_{2}+2)p^{\tau_{2}}$ by
 $(\ref{24})$, and $d_H(\C_t)=4$ by Proposition $\ref{p1}$. Thus,
\begin{equation*}
 \min\{P_td_H(\C_t) : t \in T, i>t\geq j\geq k\}\geq4(\beta_{2}+2)p^{\tau_{2}}.
\end{equation*}

In summary, $d_H(\C) =\min\{ (\beta_{0}+2)p^{\tau_{0}}, 2(\beta_{1}+2)p^{\tau_{1}}, 4(\beta_{2}+2)p^{\tau_{2}}\}$. This completes the proof of this lemma.
\end{proof}

By the similar arguments as above, we can obtain the following lemmas immediately.

\begin{lemma}
Let $i, j, k$ be integers with $i=p^s$, $ p^s-p^{s-\tau_{1}}+\beta_{1}p^{s-\tau_{1}-1}+1\leq j\leq p^s-p^{s-\tau_{1}}+(\beta_{1}+1) p^{s-\tau_{1}-1}$ and $p^s-p^{s-\tau_{2}}+\beta_{2}p^{s-\tau_{2}-1}+1\leq k\leq p^s-p^{s-\tau_{2}}+(\beta_{2}+1) p^{s-\tau_{2}-1}$  where $0 \leq \beta_{1}\leq\beta_{2}\leq p -2$ and $0 \leq\tau_{2}\leq\tau_{1}\leq s -1$.  Let  $\C=<(x-1)^{p^s}{f_{1}(x)}^j{f_{2}(x)}^k>$, then
$d_H(\C) =\min\{2(\beta_{1}+2)p^{\tau_{1}}, 4(\beta_{2}+2)p^{\tau_{2}}\}$.
\end{lemma}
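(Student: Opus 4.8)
The plan is to run the argument of Lemma~\ref{28} in the degenerate case $i=p^s$: since $T=\{0,\dots,p^s-1\}$ contains no $t\geq i=p^s$, the range $t\geq i$ is empty and the first of the three terms appearing in Lemma~\ref{28} simply drops out. Throughout this part of the paper we are in the case $i\geq j\geq k$, so here $p^s\geq j\geq k$, and the generator polynomial is $g(x)=(x-1)^{p^s}f_1(x)^jf_2(x)^k$ with $x-1$ occurring to the maximal multiplicity $p^s$. Together with the standing assumption $j\geq k$, the interval hypotheses on $j$ and $k$ give $1\leq k\leq j\leq p^s-1$, which I would record at the outset.

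First I would determine $g_t(x)$ for every $t\in T$ from the definition of $\C_t$. Because $x-1$ occurs with multiplicity $p^s>t$ for all $t\in T$, the factor $x-1$ always divides $g_t(x)$; the factor $f_1(x)$ divides $g_t(x)$ exactly when $t<j$, and $f_2(x)$ exactly when $t<k$. As $j\geq k$, this partitions $T$ into three ranges, and Proposition~\ref{p1} yields $g_t=x-1$ with $d_H(\C_t)=2$ for $j\leq t\leq p^s-1$; $g_t=(x-1)f_1(x)$ with $d_H(\C_t)=4$ for $k\leq t\leq j-1$; and $g_t=(x-1)f_1(x)f_2(x)=x^5-1$ with $d_H(\C_t)=\infty$ for $0\leq t\leq k-1$. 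Feeding this into $(\ref{25})$ and discarding the $\infty$ range gives
\[
 d_H(\C)=\min\Bigl\{\,2\min_{j\leq t\leq p^s-1}P_t,\ \ 4\min_{k\leq t\leq j-1}P_t\,\Bigr\}.
\]

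Next I would evaluate the two inner minima using $(\ref{24})$. The hypothesis places $j$ in the interval attached to $(\beta_1,\tau_1)$, so $\min\{P_t:t\geq j,\ t\in T\}=(\beta_1+2)p^{\tau_1}$ exactly; likewise $\min\{P_t:t\geq k,\ t\in T\}=(\beta_2+2)p^{\tau_2}$, and since $\{t:k\leq t\leq j-1\}\subseteq\{t:t\geq k\}$ the second inner minimum is $\geq(\beta_2+2)p^{\tau_2}$. This already gives $d_H(\C)\geq\min\{2(\beta_1+2)p^{\tau_1},\,4(\beta_2+2)p^{\tau_2}\}$; and because a minimizing $t$ for $\min\{P_t:t\geq j\}$ lies in the first range, where $d_H(\C_t)=2$, we also get $d_H(\C)\leq 2(\beta_1+2)p^{\tau_1}$.

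The one step requiring genuine care — and the main obstacle — is the matching bound $d_H(\C)\leq 4(\beta_2+2)p^{\tau_2}$ in the case where this quantity is the smaller candidate. Reading off its base-$p$ digits, $\min\{P_t:t\geq k\}$ is attained at $t_0:=p^s-p^{s-\tau_2}+(\beta_2+1)p^{s-\tau_2-1}$, and one must verify that $t_0$ lies in the second range $k\leq t\leq j-1$. When $\tau_2<\tau_1$ this follows from the estimate $t_0\leq p^s-p^{s-\tau_2-1}<p^s-p^{s-\tau_1}+1\leq j$, so $\C_{t_0}$ has $d_H=4$ and the bound is attained. When $\tau_2=\tau_1$, comparing the endpoints of the two intervals with $j\geq k$ forces $\beta_1=\beta_2$, whence $4(\beta_2+2)p^{\tau_2}>2(\beta_1+2)p^{\tau_1}$ and this bound is not needed. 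Combining the inequalities yields $d_H(\C)=\min\{2(\beta_1+2)p^{\tau_1},\,4(\beta_2+2)p^{\tau_2}\}$; apart from this endpoint bookkeeping, the argument is a direct transcription of the proof of Lemma~\ref{28}.
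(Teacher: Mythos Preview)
Your proposal is correct and follows exactly the approach the paper intends: the paper gives no explicit proof for this lemma, merely stating that it follows ``by the similar arguments as above'' (i.e., as in Lemma~\ref{28}), and your write-up is precisely that argument specialized to $i=p^s$, where the range $t\geq i$ is empty and the first term drops out. Your handling of the upper bound --- checking that the minimizer $t_0$ actually lands in the range $k\leq t<j$ when $\tau_2<\tau_1$, and disposing of the case $\tau_1=\tau_2$ by showing it forces $\beta_1=\beta_2$ so that $4(\beta_2+2)p^{\tau_2}$ is never the minimum --- is in fact more careful than the paper's own proof of Lemma~\ref{28}, which asserts the final equality directly from the one-sided bounds without this verification.
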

\begin{lemma}
Let $i, j, k$ be integers with $i=p^s, j=p^s$ and $p^s-p^{s-\tau_{2}}+\beta_{2}p^{s-\tau_{2}-1}+1\leq k\leq p^s-p^{s-\tau_{2}}+(\beta_{2}+1) p^{s-\tau_{2}-1}$  where $0 \leq \beta_{2}\leq p -2$ and $0 \leq\tau_{2}\leq s -1$.  Let  $\C=<(x-1)^{p^s}{f_{1}(x)}^{p^s}{f_{2}(x)}^k>$, then
$d_H(\C) =4(\beta_{2}+2)p^{\tau_{2}}$.
\end{lemma}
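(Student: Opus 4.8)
The plan is to specialize the master distance formula \eqref{25} from \cite{B18}, namely $d_H(\C)=\min\{P_td_H(\C_t):t\in T\}$ with $T=\{0,1,\dots,p^s-1\}$, to the generator polynomial $g(x)=(x-1)^{p^s}f_1(x)^{p^s}f_2(x)^k$, and then to evaluate the resulting minimum using the counting formula \eqref{24} from \cite{L2013} together with the tabulated distances $d_H(\C_t)$ from Proposition~\ref{p1} (Case~2). Structurally this is the $i=j=p^s$ boundary case of Lemma~\ref{28}, and the proof is a routine adaptation of that one.

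First I would record the generator polynomials of the simple-root codes $\C_t$ for $t\in T$. In $g(x)$ the three irreducible factors $x-1$, $f_1(x)$ and $f_2(x)$ occur with multiplicities $p^s$, $p^s$ and $k$, and $M_i(x)$ divides $g_t(x)$ exactly when its multiplicity is strictly greater than $t$. Since $t\le p^s-1<p^s$, both $x-1$ and $f_1(x)$ always divide $g_t(x)$; hence $g_t(x)=(x-1)f_1(x)f_2(x)=x^5-1$ whenever $t<k$, so that $\C_t=\{\mathbf{0}\}$ and $d_H(\C_t)=\infty$, whereas $g_t(x)=(x-1)f_1(x)$ whenever $k\le t\le p^s-1$, so that $d_H(\C_t)=4$ by Proposition~\ref{p1} (Case~2). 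I would also note that the hypotheses $0\le\tau_2\le s-1$ and $0\le\beta_2\le p-2$ force $1\le k\le p^s-p^{s-\tau_2-1}\le p^s-1$, so the two ranges $\{0,\dots,k-1\}$ and $\{k,\dots,p^s-1\}$ are nonempty and together exhaust $T$.

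Next I would discard the indices $t<k$, which contribute $+\infty$ to the minimum in \eqref{25}, and so reduce to $d_H(\C)=4\cdot\min\{P_t:t\ge k,\ t\in T\}$. Since the hypothesis places $k$ in the range $p^s-p^{s-\tau_2}+\beta_2p^{s-\tau_2-1}+1\le k\le p^s-p^{s-\tau_2}+(\beta_2+1)p^{s-\tau_2-1}$, formula \eqref{24} applies with $l=k$, $\beta=\beta_2$ and $\tau=\tau_2$, giving $\min\{P_t:t\ge k,\ t\in T\}=(\beta_2+2)p^{\tau_2}$. Substituting this back yields $d_H(\C)=4(\beta_2+2)p^{\tau_2}$, which is the assertion.

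I do not expect a genuine obstacle here. The only points that need care are the bookkeeping of which irreducible factors remain in $g_t(x)$ on each of the two $t$-ranges (so that $d_H(\C_t)$ is taken from the correct line of Proposition~\ref{p1}) and the check that the interval constraints on $k$ are compatible with $k\le p^s-1$, which is precisely what legitimizes the appeal to \eqref{24} with $l=k$.
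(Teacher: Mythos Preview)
Your proposal is correct and follows essentially the same approach as the paper: the paper states this lemma without an explicit proof, indicating only that it is obtained ``by the similar arguments as above,'' i.e., by specializing the proof of Lemma~\ref{28} to the boundary case $i=j=p^s$. Your argument carries out exactly this specialization, using \eqref{25}, Proposition~\ref{p1} (Case~2), and \eqref{24} in the expected way, and your verification that $1\le k\le p^s-1$ (so both $t$-ranges are nonempty and \eqref{24} applies) fills in the only detail that might otherwise be overlooked.
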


We here detemine the Hamming distances of $\C=<{f_{1}(x)}^i(x-1)^j{f_{2}(x)}^k>$ for $0 \leq k\leq j \leq i \leq p^s$.

\begin{lemma}\label{27}
Let $i, j$ be integers such that $0< i\leq p^s$,  $0\leq j\leq p^s$ and $k=0$, then
\begin{equation*}\label{222}
 d_{H}(\C)=\left\{
  \begin{array}{ll}

    2,&\mbox{ if }  0< i\leq p^{s-1}, 0\leq j\leq p^{s-1}, \\
    3,&\mbox{ if }  p^{s-1}<i\leq 2p^{s-1}, 0<j\leq p^s,\mbox{ or } p^{s-1}<i\leq p^s,  j=0,\\
 4,&\mbox{ if } 2p^{s-1}<i\leq p^s,  0<j\leq p^s.
\end{array}
\right.
\end{equation*}
\end{lemma}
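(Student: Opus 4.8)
The plan is to follow the scheme of the proof of Lemma~\ref{28}. By the decomposition $(\ref{25})$ we have $d_H(\C)=\min\{P_td_H(\C_t):t\in T\}$, so it suffices to partition $T=\{0,1,\dots,p^s-1\}$ according to which simple-root code $\C_t$ occurs, to read off each $d_H(\C_t)$ from Proposition~\ref{p1} (Case~2), and to evaluate the ensuing minima of $P_t$ by $(\ref{24})$. The only structural change from Lemma~\ref{28} is the order of the multiplicities: here $f_1(x)$ carries the largest multiplicity $i$, the factor $x-1$ carries $j$, and $f_2(x)$ carries $k=0$, with $0\le j\le i\le p^s$ and $i>0$ by the standing hypothesis. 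Consequently $g_t=1$ and $d_H(\C_t)=1$ for $i\le t\le p^s-1$; $g_t=f_1(x)$ and $d_H(\C_t)=3$ for $j\le t<i$; and $g_t=(x-1)f_1(x)$ and $d_H(\C_t)=4$ for $0\le t<j$, this last block being non-empty exactly when $j\ge 1$.

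I would then record the three contributions to the minimum. Over $\{i\le t\le p^s-1\}$ the contribution is $\min\{P_t:t\ge i,\ t\in T\}$, which by $(\ref{24})$ equals $2$ for $1\le i\le p^{s-1}$, equals $3$ for $p^{s-1}<i\le 2p^{s-1}$, and is at least $4$ for $2p^{s-1}<i\le p^s$; the last claim uses that $i\mapsto\min\{P_t:t\ge i,\ t\in T\}$ is non-decreasing and that $\min\{P_t:t\ge 2p^{s-1}+1\}=4$ by $(\ref{24})$. Over $\{j\le t<i\}$ the contribution is $3\min\{P_t:j\le t<i\}$, which is at least $3$ always, and is at least $6$ as soon as $j\ge 1$, since then the only index with $P_t=1$, namely $t=0$, is excluded from the range. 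Over $\{0\le t<j\}$, non-empty only for $j\ge 1$, the contribution is $4\min\{P_t:0\le t<j\}=4$, realised at $t=0$.

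Combining these in the four regimes of the statement is then immediate. For $1\le i\le p^{s-1}$ the first contribution is $2$ and the others are at least $3$, so $d_H(\C)=2$. For $p^{s-1}<i\le 2p^{s-1}$ the first contribution is $3$ and the others are at least $3$, so $d_H(\C)=3$. For $2p^{s-1}<i\le p^s$ with $j=0$ the block $\{j\le t<i\}=\{0\le t<i\}$ contributes exactly $3$ through $t=0$ while the first block contributes at least $4$, so $d_H(\C)=3$. For $2p^{s-1}<i\le p^s$ with $j\ge 1$ the three contributions are at least $4$, at least $6$ and exactly $4$ respectively, so $d_H(\C)=4$. A short concluding sentence ties this to the three cases of the statement.

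The only delicate point, and the one I expect to be the main obstacle, is the bookkeeping that separates $j=0$ from $j\ge 1$ when $i>2p^{s-1}$: it is precisely the presence of the factor $x-1$ that moves the cheap index $t=0$ from the $g_t=f_1(x)$ block (contribution $3$) into the $g_t=(x-1)f_1(x)$ block (contribution $4$), which is what lifts $d_H(\C)$ from $3$ to $4$. Everything else is a routine application of $(\ref{24})$ and Proposition~\ref{p1}, exactly as in the preceding lemmas.
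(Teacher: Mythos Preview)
Your proposal is correct and follows essentially the same approach as the paper's own proof: both use the decomposition $(\ref{25})$ together with Proposition~\ref{p1} (Case~2) to identify $d_H(\C_t)$ on each block of $T$, and then invoke $(\ref{24})$ to evaluate the relevant minima of $P_t$. Your write-up is actually more explicit than the paper's, which simply asserts the value of $d_H(\C)$ in each sub-case ``according to Proposition~\ref{p1} and $(\ref{25})$'' without spelling out the three-block partition or the $j=0$ versus $j\ge 1$ bookkeeping that you correctly flag as the key point.
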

\begin{proof}
\begin{itemize}
\item Case 1. $0< i\leq p^{s-1}$.

 According to Proposition \ref{p1} and $(\ref{25})$, we can get $ d_{H}(\C)=2$;

\item Case 2. $ p^{s-1} <i\leq p^s$.

If $ p^{s-1}<i\leq p^{s}$ and $j=0$, then according to Proposition \ref{p1} and $(\ref{25})$, we can get $ d_{H}(\C)=3$;

If $ p^{s-1}<i\leq 2p^{s-1}$ and $0<j\leq p^s$,  then according to Proposition \ref{p1} and $(\ref{25})$, we can get $ d_{H}(\C)=3$;

If $ 2p^{s-1}<i\leq p^{s}$ and $0<j\leq p^s$, then according to Proposition \ref{p1} and  $(\ref{25})$, we can get $ d_{H}(\C)=4$;

\end{itemize}
According to the cases as above, the proof is done.
\end{proof}

By the similar arguments as  Lemma \ref{28}, we can obtain the following lemmas immediately.
\begin{lemma}
Let $i, j, k$ be integers with $p^s-p^{s-\tau_{0}}+\beta_{0} p^{s-\tau_{0}-1}+1\leq i\leq p^s-p^{s-\tau_{0}}+(\beta_{0}+1) p^{s-\tau_{0}-1}$, $p^s-p^{s-\tau_{1}}+\beta_{1}p^{s-\tau_{1}-1}+1\leq j\leq p^s-p^{s-\tau_{1}}+(\beta_{1}+1) p^{s-\tau_{1}-1}$ and $p^s-p^{s-\tau_{2}}+\beta_{2} p^{s-\tau_{2}-1}+1\leq k\leq p^s-p^{s-\tau_{2}}+(\beta_{2}+1) p^{s-\tau_{2}-1}$  where $0 \leq \beta_{0}, \beta_{1}, \beta_{2}\leq p -2$ and $0\leq\tau_{2} \leq\tau_{1} \leq\tau_{0}\leq s -1$.  Let $\C=<{f_{1}(x)}^i(x-1)^j{f_{2}(x)}^k>$, then $d_H(\C) =\min\{ (\beta_{0}+2)p^{\tau_{0}}, 3(\beta_{1}+2)p^{\tau_{1}}, 4(\beta_{2}+2)p^{\tau_{2}}\}$.
\end{lemma}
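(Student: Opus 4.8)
The plan is to mimic exactly the structure of the proof of Lemma \ref{28}, replacing the multiplicities $(1,2,4)$ attached to the simple-root codes by the multiplicities $(1,3,4)$ dictated by Proposition \ref{p1} when the polynomial $f_1(x)$ is the factor occurring first. The key observation is that in the code $\C=<{f_1(x)}^i(x-1)^j{f_2(x)}^k>$ with $i\ge j\ge k$, the multiplicity of $f_1(x)$ is $i$, of $(x-1)$ is $j$, and of $f_2(x)$ is $k$; so for a fixed threshold $t$, the simple-root code $\C_t$ is generated by the product of those $M_\ell(x)$ whose multiplicity strictly exceeds $t$. Thus $\C_t=1$ when $t\ge i$, $\C_t=f_1(x)$ when $i>t\ge j$, $\C_t=f_1(x)(x-1)$ when $i\ge j>t\ge k$, and $\C_t=f_1(x)(x-1)f_2(x)=x^5-1$ when $t<k$ (this last case contributes $\infty$ and is irrelevant).

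First I would invoke $(\ref{25})$, $d_H(\C)=\min\{P_td_H(\C_t):t\in T\}$, and split the range of $t$ into the three relevant subintervals. On $\{t\in T: t\ge i\ge j\ge k\}$, Lemma \ref{l1}'s companion $(\ref{24})$ gives $\min\{P_t\}=(\beta_0+2)p^{\tau_0}$ using the hypothesis on $i$, and Proposition \ref{p1} gives $d_H(\C_t)=1$, so this subinterval contributes exactly $(\beta_0+2)p^{\tau_0}$. On $\{t\in T:i>t\ge j\ge k\}$, $(\ref{24})$ applied with the hypothesis on $j$ yields $\min\{P_t\}\ge(\beta_1+2)p^{\tau_1}$, and Proposition \ref{p1} gives $d_H(\C_t)=d_H(<f_1(x)>)=3$, so this subinterval contributes $\ge 3(\beta_1+2)p^{\tau_1}$. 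On $\{t\in T:i\ge j>t\ge k\}$, $(\ref{24})$ with the hypothesis on $k$ yields $\min\{P_t\}\ge(\beta_2+2)p^{\tau_2}$, and Proposition \ref{p1} gives $d_H(\C_t)=d_H(<f_1(x)(x-1)>)=4$, so this subinterval contributes $\ge 4(\beta_2+2)p^{\tau_2}$. Taking the minimum over the three pieces gives $d_H(\C)=\min\{(\beta_0+2)p^{\tau_0},3(\beta_1+2)p^{\tau_1},4(\beta_2+2)p^{\tau_2}\}$.

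The one point that needs a little care — and is really the only substantive step — is verifying that the first subinterval attains its bound with equality while the other two only need lower bounds: this is fine because $(\beta_0+2)p^{\tau_0}$ is realized by an actual $t$ in $\{t\ge i\}$ with $d_H(\C_t)=1$, so the overall minimum is pinned down from above by $(\beta_0+2)p^{\tau_0}$, and the other two contributions can only increase the candidate values; combined with the matching lower bounds on each piece, equality in the stated formula follows. One should also check the monotonicity constraint $\tau_2\le\tau_1\le\tau_0$ is exactly what makes the three nested intervals for $i,j,k$ consistent with $i\ge j\ge k$, which is automatic from the hypotheses. I would write this up in four or five lines, exactly parallel to the proof of Lemma \ref{28}, noting only the single change $d_H(<f_1(x)>)=3$ (rather than $d_H(<(x-1)>)=2$) coming from Proposition \ref{p1} Case 2; no genuinely new obstacle arises.
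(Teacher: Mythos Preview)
Your proposal is correct and follows exactly the approach the paper intends: the paper does not give a separate proof for this lemma but simply says it follows ``by the similar arguments as Lemma \ref{28},'' and your write-up is precisely that argument with the single substitution $d_H(\langle f_1(x)\rangle)=3$ in place of $d_H(\langle x-1\rangle)=2$ from Proposition \ref{p1}, Case 2. Your extra paragraph about why the first subinterval attains equality while the others only need lower bounds is a reasonable remark, though note the paper's own proof of Lemma \ref{28} is equally informal on this point.
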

\begin{lemma}
Let $i, j, k$ be integers with $i=p^s$, $ p^s-p^{s-\tau_{1}}+\beta_{1}p^{s-\tau_{1}-1}+1\leq j\leq p^s-p^{s-\tau_{1}}+(\beta_{1}+1) p^{s-\tau_{1}-1}$ and $p^s-p^{s-\tau_{2}}+\beta_{2}p^{s-\tau_{2}-1}+1\leq k\leq p^s-p^{s-\tau_{2}}+(\beta_{2}+1) p^{s-\tau_{2}-1}$  where $0 \leq \beta_{1}\leq\beta_{2}\leq p -2$ and $0 \leq\tau_{2}\leq\tau_{1}\leq s -1$.  Let $\C=<{f_{1}(x)}^{p^s}(x-1)^j{f_{2}(x)}^k>$, then
$d_H(\C) =\min\{3(\beta_{1}+2)p^{\tau_{1}}, 4(\beta_{2}+2)p^{\tau_{2}}\}$.
\end{lemma}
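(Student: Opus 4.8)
The plan is to argue exactly as in Lemma~\ref{28}, combining the distance formula $(\ref{25})$ with Proposition~\ref{p1} (Case~2, since $5\mid(q+1)$ and $5\nmid(q-1)$) and the minimization identity $(\ref{24})$.

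First I would determine, for each $t\in T=\{0,\dots,p^s-1\}$, which simple-root code $\C_t=\langle g_t(x)\rangle$ occurs. Because $f_1(x)$ appears in the generator polynomial with the maximal multiplicity $p^s$, it divides $g_t(x)$ for every $t\in T$; the factor $x-1$ divides $g_t(x)$ exactly when $j>t$, and $f_2(x)$ divides $g_t(x)$ exactly when $k>t$. Using the assumption $k\le j$, this splits $T$ into three intervals on which Proposition~\ref{p1} reads: $g_t=f_1(x)$ and $d_H(\C_t)=3$ for $j\le t\le p^s-1$; $g_t=(x-1)f_1(x)$ and $d_H(\C_t)=4$ for $k\le t<j$; and $g_t=x^5-1$ and $d_H(\C_t)=\infty$ for $0\le t<k$. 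Applying $(\ref{24})$ with $l=j$ gives $\min\{P_t:t\in T,\ t\ge j\}=(\beta_1+2)p^{\tau_1}$, so the first interval contributes exactly $3(\beta_1+2)p^{\tau_1}$; applying $(\ref{24})$ with $l=k$ gives $\min\{P_t:t\in T,\ t\ge k\}=(\beta_2+2)p^{\tau_2}$, so the second interval contributes at least $4(\beta_2+2)p^{\tau_2}$, and the third contributes $\infty$. Feeding this into $(\ref{25})$ gives at once $\min\{3(\beta_1+2)p^{\tau_1},\,4(\beta_2+2)p^{\tau_2}\}\le d_H(\C)\le 3(\beta_1+2)p^{\tau_1}$.

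It remains to show $d_H(\C)\le 4(\beta_2+2)p^{\tau_2}$ when $4(\beta_2+2)p^{\tau_2}<3(\beta_1+2)p^{\tau_1}$, i.e.\ that the contribution of the second interval equals $4(\beta_2+2)p^{\tau_2}$. In that case one must have $\tau_2<\tau_1$: if $\tau_1=\tau_2$, then $k\le j$ together with $\beta_1\le\beta_2$ forces $\beta_1=\beta_2$, whence $3(\beta_1+2)p^{\tau_1}<4(\beta_2+2)p^{\tau_2}$, a contradiction. Assuming $\tau_2<\tau_1$, I would use the explicit exponent $t^{*}=p^s-p^{s-\tau_2}+(\beta_2+1)p^{s-\tau_2-1}$; its base-$p$ expansion has digit $p-1$ in positions $s-\tau_2,\dots,s-1$, digit $\beta_2+1$ in position $s-\tau_2-1$ (valid since $\beta_2\le p-2$), and digit $0$ elsewhere, so $P_{t^{*}}=(\beta_2+2)p^{\tau_2}$. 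The upper bound on $k$ gives $t^{*}\ge k$, and the lower bound on $j$ together with the elementary estimate $p^{s-\tau_1-1}(p-\beta_1)\le p^{s-\tau_2-1}(p-\beta_2-1)$, which holds because $\tau_2\le\tau_1-1$ and $0\le\beta_1,\,\beta_2\le p-2$, gives $t^{*}<j$. Hence $t^{*}$ lies in the second interval, so $d_H(\C)\le P_{t^{*}}d_H(\C_{t^{*}})=4(\beta_2+2)p^{\tau_2}$, which finishes the proof.

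The only genuine point is this last verification: that the natural $P_t$-minimizer over $\{t\ge k\}$ still sits strictly below $j$, and that it can fail to do so only in the degenerate case $\tau_1=\tau_2$, which is harmless since there $3(\beta_1+2)p^{\tau_1}$ is automatically the smaller candidate. Everything else is the routine bookkeeping with Proposition~\ref{p1} and $(\ref{24})$ already used in Lemma~\ref{28}.
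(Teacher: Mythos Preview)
Your proof is correct and follows the same route as the paper, which simply refers back to the argument of Lemma~\ref{28}: split $T$ according to how $t$ compares with the exponents, identify $g_t(x)$ and hence $d_H(\C_t)$ via Proposition~\ref{p1}, and minimize $P_t$ on each piece via $(\ref{24})$.

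Where you differ is in rigor. The paper's template argument in Lemma~\ref{28} only establishes $\min\{P_td_H(\C_t):t\ge j\}=3(\beta_1+2)p^{\tau_1}$ exactly and $\min\{P_td_H(\C_t):k\le t<j\}\ge 4(\beta_2+2)p^{\tau_2}$ as a lower bound, then asserts the announced minimum without checking that $4(\beta_2+2)p^{\tau_2}$ is actually attained when it is the smaller candidate. You close this gap by exhibiting the explicit minimizer $t^{*}=p^s-p^{s-\tau_2}+(\beta_2+1)p^{s-\tau_2-1}$ and verifying $k\le t^{*}<j$ via the inequality $p^{s-\tau_1-1}(p-\beta_1)\le p^{s-\tau_2-1}(p-\beta_2-1)$, valid because $\tau_2\le\tau_1-1$. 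Your observation that the case $\tau_1=\tau_2$ forces $\beta_1=\beta_2$ (from $k\le j$ and $\beta_1\le\beta_2$) and hence makes $3(\beta_1+2)p^{\tau_1}$ the smaller candidate is also correct. So your argument is the paper's argument made complete.
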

\begin{lemma}
Let $i, j, k$ be integers with $i=p^s, j=p^s$ and $p^s-p^{s-\tau_{2}}+\beta_{2}p^{s-\tau_{2}-1}+1\leq k\leq p^s-p^{s-\tau_{2}}+(\beta_{2}+1) p^{s-\tau_{2}-1}$  where $0 \leq \beta_{2}\leq p -2$ and $0 \leq\tau_{2}\leq s -1$.  Let  $\C=<{f_{1}(x)}^{p^s}(x-1)^{p^s}{f_{2}(x)}^k>$, then
$d_H(\C) =4(\beta_{2}+2)p^{\tau_{2}}$.
\end{lemma}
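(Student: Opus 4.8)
The plan is to apply the decomposition formula $(\ref{25})$ together with the minimization lemma $(\ref{24})$, exactly in the spirit of the preceding lemmas of this section. Since we are in the regime $5\mid(q+1)$, $5\nmid(q-1)$, we have $x^{5p^s}-1=\bigl((x-1)f_1(x)f_2(x)\bigr)^{p^s}$, and the generator polynomial $g(x)=f_1(x)^{p^s}(x-1)^{p^s}f_2(x)^k$ has multiplicity pattern $t_{x-1}=t_{f_1}=p^s$ and $t_{f_2}=k$. Everything will come down to identifying the simple-root codes $\C_t$ for $t$ in $T=\{t:0\le t\le p^s-1\}$.

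First I would pin down $g_t(x)$ for each $t\in T$. Since $t\le p^s-1<p^s=t_{x-1}=t_{f_1}$, both $x-1$ and $f_1(x)$ are factors of $g_t(x)$ for every $t\in T$, while $f_2(x)\mid g_t(x)$ if and only if $k>t$. Thus only two cases occur: if $t<k$ then $g_t(x)=(x-1)f_1(x)f_2(x)=x^5-1$, so $\C_t=\{\mathbf{0}\}$ and $d_H(\C_t)=\infty$; if $t\ge k$ then $g_t(x)=(x-1)f_1(x)$, so $d_H(\C_t)=4$ by Proposition \ref{p1} (Case 2). In particular the values $1,2,3,5$ for $d_H(\C_t)$ never arise here, which is why the final Hamming distance is a single term and not a minimum of several.

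Substituting into $(\ref{25})$, the summands with $t<k$ are infinite and may be dropped, so $d_H(\C)=\min\{4P_t:t\in T,\ t\ge k\}=4\min\{P_t:t\in T,\ t\ge k\}$. This set is non-empty because $k\le p^s-p^{s-\tau_2}+(\beta_2+1)p^{s-\tau_2-1}\le p^s-p^{s-\tau_2}+(p-1)p^{s-\tau_2-1}=p^s-p^{s-\tau_2-1}\le p^s-1$. Finally, the hypothesis imposed on $k$ is precisely the hypothesis of Lemma $(\ref{24})$ with $\beta=\beta_2$ and $\tau=\tau_2$, which gives $\min\{P_t:t\in T,\ t\ge k\}=(\beta_2+2)p^{\tau_2}$, and hence $d_H(\C)=4(\beta_2+2)p^{\tau_2}$. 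The only step requiring genuine care---though it is hardly an obstacle---is the multiplicity bookkeeping forcing $g_t(x)$ to be a multiple of $(x-1)f_1(x)$ for all relevant $t$, so that $d_H(\C_t)\in\{4,\infty\}$ and the $d_H(\C_t)=4$ regime is actually attained; the rest is a direct appeal to $(\ref{25})$, $(\ref{24})$ and Proposition \ref{p1}.
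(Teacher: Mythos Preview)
Your proof is correct and follows exactly the approach the paper intends: the paper does not write out a separate proof for this lemma but says it follows ``by the similar arguments as Lemma~\ref{28}'', and your argument is precisely that specialization---with $i=j=p^s$ the ranges $t\ge i$ and $i>t\ge j$ are empty, leaving only the case $g_t(x)=(x-1)f_1(x)$ (so $d_H(\C_t)=4$) for $t\ge k$ and $g_t(x)=x^5-1$ for $t<k$, after which $(\ref{25})$ and $(\ref{24})$ finish the job. Your explicit verification that $k\le p^s-1$ (so the relevant range of $t$ is non-empty) is a nice touch that the paper leaves implicit.
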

According to the lemmas as above, we have the following theorem.
\begin{theorem}
Let $p\geq7$ be a prime satisfying $5\mid(q+1)$ and $5\nmid(q-1)$, $0 \leq \beta_{0}, \beta_{1}, \beta_{2}\leq p -2$ and $0\leq\tau_{2} \leq\tau_{1} \leq\tau_{0}\leq s -1$. Then cyclic codes of length $5p^s$ have  form $\C=<(x-1)^i{f_{1}(x)}^j{f_{2}(x)}^k>$ with $0 \leq i, j, k\leq p^s$, and  the minimum Hamming distances of $\C$ are given in Table \ref{Table:33}, \ref{Table:331}, \ref{Table:34} and \ref{Table:341}.  \end{theorem}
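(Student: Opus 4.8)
The plan is to deduce the theorem from the structure of the ambient ring together with the lemmas already proved, so the argument is largely organizational. First, since $\mathrm{char}(\f_q)=p$ and $p\ge 7$, one has $x^{5p^s}-1=(x^5-1)^{p^s}=(x-1)^{p^s}f_1(x)^{p^s}f_2(x)^{p^s}$, where in Case 2 the factors $x-1$, $f_1(x)$, $f_2(x)$ are pairwise coprime and irreducible over $\f_q$. Hence every ideal of $\f_q[x]/(x^{5p^s}-1)$ is generated by a divisor of this product, so every cyclic code of length $5p^s$ over $\f_q$ has the asserted form $\C=<(x-1)^if_1(x)^jf_2(x)^k>$ with $0\le i,j,k\le p^s$. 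The two extreme triples $(0,0,0)$ and $(p^s,p^s,p^s)$ give the whole space ($d_H(\C)=1$) and the zero code ($d_H(\C)=0$); for all the remaining codes we use $(\ref{25})$ and Proposition \ref{p1}.

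Next I would cut down the number of cases. The value $d_H(\C)=\min\{P_td_H(\C_t):t\in T\}$ depends only on which exponent is carried by $x-1$ and on the unordered pair of exponents carried by $\{f_1,f_2\}$, because the interchange $f_1\leftrightarrow f_2$ fixes every relevant simple-root distance: by Proposition \ref{p1}, $d_H(<f_1>)=d_H(<f_2>)=3$, $d_H(<(x-1)f_1>)=d_H(<(x-1)f_2>)=4$ and $d_H(<f_1f_2>)=5$. So it is enough to treat, up to relabelling which factor carries which exponent, the orderings of the three exponents; this is exactly what the lemmas above do (the families $\C=<(x-1)^if_1(x)^jf_2(x)^k>$ and $\C=<f_1(x)^i(x-1)^jf_2(x)^k>$ for $0\le k\le j\le i\le p^s$, the boundary versions with $i=p^s$ or $i=j=p^s$, and the small-exponent lemma with $k=0$), the other orderings following by the symmetry just noted.

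I would then run through each lemma. Fixing an ordering, say $i\ge j\ge k$, the set $T=\{0,1,\dots,p^s-1\}$ splits into $\{t\ge i\}$, $\{j\le t<i\}$, $\{k\le t<j\}$ and $\{t<k\}$, on which $g_t$ equals, respectively, $1$, the top-multiplicity factor, the product of the two top-multiplicity factors, and $x^5-1$; Proposition \ref{p1} then gives $d_H(\C_t)$ on each piece, the last one contributing $\infty$ and so being irrelevant. Writing $i,j,k$ in the normalized form $p^s-p^{s-\tau}+\beta p^{s-\tau-1}+1\le\,\cdot\,\le p^s-p^{s-\tau}+(\beta+1)p^{s-\tau-1}$ with $0\le\beta\le p-2$, $0\le\tau\le s-1$, and applying $(\ref{24})$ to evaluate $\min\{P_t:t\ge l\}$ on each range, one gets the three-term minima of the lemmas — $\min\{(\beta_0+2)p^{\tau_0},2(\beta_1+2)p^{\tau_1},4(\beta_2+2)p^{\tau_2}\}$ when $x-1$ carries the top exponent, and the same with a $3$ in place of the $2$ when $f_1$ does; the cases where one or two of $i,j,k$ equal $p^s$ (so the corresponding factor divides $g_t$ for every $t\in T$) drop the first one or two terms of the minimum, giving the boundary lemmas. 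Assembling these and checking that each triple $(i,j,k)\in\{0,\dots,p^s\}^3$ lies in exactly one row yields Tables \ref{Table:33}, \ref{Table:331}, \ref{Table:34} and \ref{Table:341}.

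The hard part will not be any single computation but the bookkeeping in this last step: one must check that the normalized intervals for $i,j,k$ tile $\{0,\dots,p^s\}$, that the distance formulas in adjacent regimes agree on their common boundaries, that the $\infty$-contribution from $\{t<k\}$ is always dominated, and that the four tables together exhaust all orderings of the exponents and all attachments of the three factors. Once Proposition \ref{p1}, $(\ref{25})$ and $(\ref{24})$ are in hand, each individual verification is routine.
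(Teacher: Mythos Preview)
Your proposal is correct and follows essentially the same approach as the paper: the theorem is obtained by assembling the preceding lemmas (which compute $d_H(\C)$ via $(\ref{25})$, $(\ref{24})$ and Proposition~\ref{p1} in each exponent regime), together with the observation that the $f_1\leftrightarrow f_2$ symmetry and the various orderings of the three exponents exhaust all codes. The paper simply states ``According to the lemmas as above, we have the following theorem,'' so your write-up is in fact more explicit about the bookkeeping than the original.
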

\begin{table}\tiny
\begin{center}
\caption{The minimum Hamming distances of $\C=<(x-1)^i{f_{1}(x)}^j{f_{2}(x)}^k>$ or $<(x-1)^i{f_{2}(x)}^j{f_{1}(x)}^k>$, where $0 \leq k\leq j \leq i \leq p^s$.}\label{Table:33}
\begin{tabular}{c|c|c|c}
    \hline  $i$ & $j$ & $k$ & $d_H(\C)$ \\ \hline
       $0$   &$0$ &$0$ &$1$\\\hline
        $0<i\leq p^{s-1}$ & $0\leq j\leq p^{s-1}$ &$k=0$  &$2$\\ \hline
$ p^{s-1}<i\leq p^{s}$ & $j=0$ &$k=0$  &$2$\\ \hline
$p^{s-1}<i\leq 2p^{s-1}$ & $0<j\leq p^{s}$ &$k=0$  &$3$\\ \hline
$2p^{s-1}<i\leq p^{s}$ & $0< j\leq p^{s}$ &$k=0$  &$4$\\ \hline
      $\begin{array}{c}
  p^s-p^{s-\tau_{0}}+\\ \beta_{0} p^{s-\tau_{0}-1}+1\leq i\\
  \leq p^s-p^{s-\tau_{0}}+\\(\beta_{0}+1) p^{s-\tau_{0}-1}
\end{array}$
 &  $\begin{array}{c}
  p^s-p^{s-\tau_{1}}+\\ \beta_{1} p^{s-\tau_{1}-1}+1\leq j\\
  \leq p^s-p^{s-\tau_{1}}+\\(\beta_{1}+1) p^{s-\tau_{1}-1}
\end{array}$ &
 $\begin{array}{c}
  p^s-p^{s-\tau_{2}}+\\ \beta_{2} p^{s-\tau_{2}-1}+1\leq k\\
  \leq p^s-p^{s-\tau_{2}}+\\(\beta_{2}+1) p^{s-\tau_{2}-1}
\end{array}$ & $\begin{array}{c}\min\{(\beta_{0}+2)p^{\tau_{0}}, 2(\beta_{1}+2)p^{\tau_{1}},\\ 4(\beta_{2}+2)p^{\tau_{2}}\}\end{array}$
  \\ \hline
$i=p^s$&$\begin{array}{c}
  p^s-p^{s-\tau_{1}}+\\ \beta_{1} p^{s-\tau_{1}-1}+1\leq j\\
  \leq p^s-p^{s-\tau_{1}}+\\(\beta_{1}+1) p^{s-\tau_{1}-1}
\end{array}$ & $\begin{array}{c}
  p^s-p^{s-\tau_{2}}+\\ \beta_{2} p^{s-\tau_{2}-1}+1\leq k\\
  \leq p^s-p^{s-\tau_{2}}+\\(\beta_{2}+1) p^{s-\tau_{2}-1}
\end{array}$ &$\min\{2(\beta_{1}+2)p^{\tau_{1}}, 4(\beta_{2}+2)p^{\tau_{2}}\}$\\ \hline
$i=p^s$&$j=p^s$&$\begin{array}{c}
  p^s-p^{s-\tau_{2}}+\\ \beta_{2} p^{s-\tau_{2}-1}+1\leq k\\
  \leq p^s-p^{s-\tau_{2}}+\\(\beta_{2}+1) p^{s-\tau_{2}-1}
\end{array}$ & $4(\beta_{2}+2)p^{\tau_{2}}$\\ \hline
$i=p^s$&$j=p^s$&$k=p^s$&$0$\\ \hline
\end{tabular}
\end{center}
\end{table}
\begin{table}\tiny
\begin{center}
\caption{The minimum Hamming distances of $\C=<{f_{1}(x)}^i(x-1)^j{f_{2}(x)}^k>$, where $0 \leq k\leq j \leq i \leq p^s$.}\label{Table:331}
\begin{tabular}{c|c|c|c}
    \hline  $i$ & $j$ & $k$ & $d_H(\C)$ \\ \hline
       $0$   &$0$ &$0$ &$1$\\\hline
        $0<i\leq p^{s-1}$ & $0\leq j\leq p^{s-1}$ &$k=0$  &$2$\\ \hline
$ p^{s-1}<i\leq p^{s}$ & $j=0$ &$k=0$  &$3$\\ \hline
$p^{s-1}<i\leq 2p^{s-1}$ & $0<j\leq p^{s}$ &$k=0$  &$3$\\ \hline
$2p^{s-1}<i\leq p^{s}$ & $0< j\leq p^{s}$ &$k=0$  &$4$\\ \hline
      $\begin{array}{c}
  p^s-p^{s-\tau_{0}}+\\ \beta_{0} p^{s-\tau_{0}-1}+1\leq i\\
  \leq p^s-p^{s-\tau_{0}}+\\(\beta_{0}+1) p^{s-\tau_{0}-1}
\end{array}$
 &  $\begin{array}{c}
  p^s-p^{s-\tau_{1}}+\\ \beta_{1} p^{s-\tau_{1}-1}+1\leq j\\
  \leq p^s-p^{s-\tau_{1}}+\\(\beta_{1}+1) p^{s-\tau_{1}-1}
\end{array}$ &
 $\begin{array}{c}
  p^s-p^{s-\tau_{2}}+\\ \beta_{2} p^{s-\tau_{2}-1}+1\leq k\\
  \leq p^s-p^{s-\tau_{2}}+\\(\beta_{2}+1) p^{s-\tau_{2}-1}
\end{array}$ & $\begin{array}{c}\min\{(\beta_{0}+2)p^{\tau_{0}}, 3(\beta_{1}+2)p^{\tau_{1}},\\ 4(\beta_{2}+2)p^{\tau_{2}}\}\end{array}$
  \\ \hline
$i=p^s$&$\begin{array}{c}
  p^s-p^{s-\tau_{1}}+\\ \beta_{1} p^{s-\tau_{1}-1}+1\leq j\\
  \leq p^s-p^{s-\tau_{1}}+\\(\beta_{1}+1) p^{s-\tau_{1}-1}
\end{array}$ & $\begin{array}{c}
  p^s-p^{s-\tau_{2}}+\\ \beta_{2} p^{s-\tau_{2}-1}+1\leq k\\
  \leq p^s-p^{s-\tau_{2}}+\\(\beta_{2}+1) p^{s-\tau_{2}-1}
\end{array}$ &$\min\{3(\beta_{1}+2)p^{\tau_{1}}, 4(\beta_{2}+2)p^{\tau_{2}}\}$\\ \hline
$i=p^s$&$j=p^s$&$\begin{array}{c}
  p^s-p^{s-\tau_{2}}+\\ \beta_{2} p^{s-\tau_{2}-1}+1\leq k\\
  \leq p^s-p^{s-\tau_{2}}+\\(\beta_{2}+1) p^{s-\tau_{2}-1}
\end{array}$ & $4(\beta_{2}+2)p^{\tau_{2}}$\\ \hline
$i=p^s$&$j=p^s$&$k=p^s$&$0$\\ \hline
\end{tabular}
\end{center}
\end{table}
\begin{table}\tiny
\begin{center}
\caption{The minimum Hamming distances of $\C=<(x-1)^i{f_{1}(x)}^j{f_{2}(x)}^k>$ or $<(x-1)^i{f_{2}(x)}^j{f_{1}(x)}^k>$, where $0 \leq i \leq j \leq k\leq p^s$.}\label{Table:34}
\begin{tabular}{c|c|c|c}
    \hline  $i$ & $j$ & $k$ & $d_H(\C)$ \\ \hline
       $0$   &$0$ &$0$ &$1$\\\hline
        $i=0$ & $0\leq j\leq p^{s-1}$ &$0<k\leq p^{s-1}$  &$2$\\ \hline
$i=0$ & $j=0$ &$p^{s-1}<k\leq p^{s}$  &$3$\\ \hline
 $i=0$ & $0< j\leq p^{s}$ &$p^{s-1}<k\leq 2p^{s-1}$  &$3$\\ \hline
$i=0$ & $0< j\leq p^{s}$ &$2p^{s-1}<k\leq3p^{s-1}$  &$4$\\ \hline
$i=0$ & $0<j\leq p^{s}$ &$3p^{s-1}<k\leq p^{s}$  &$5$\\ \hline
      $\begin{array}{c}
  p^s-p^{s-\tau_{2}}+\\ \beta_{2} p^{s-\tau_{2}-1}+1\leq i\\
  \leq p^s-p^{s-\tau_{2}}+\\(\beta_{2}+1) p^{s-\tau_{2}-1}
\end{array}$
 &  $\begin{array}{c}
  p^s-p^{s-\tau_{1}}+\\ \beta_{1} p^{s-\tau_{1}-1}+1\leq j\\
  \leq p^s-p^{s-\tau_{1}}+\\(\beta_{1}+1) p^{s-\tau_{1}-1}
\end{array}$ &
 $\begin{array}{c}
  p^s-p^{s-\tau_{0}}+\\ \beta_{0} p^{s-\tau_{0}-1}+1\leq k\\
  \leq p^s-p^{s-\tau_{0}}+\\(\beta_{0}+1) p^{s-\tau_{0}-1}
\end{array}$ & $\begin{array}{c}\min\{(\beta_{0}+2)p^{\tau_{0}}, 3(\beta_{1}+2)p^{\tau_{1}},\\ 5(\beta_{2}+2)p^{\tau_{2}}\}\end{array}$
  \\ \hline
$\begin{array}{c}
  p^s-p^{s-\tau_{2}}+\\ \beta_{2} p^{s-\tau_{2}-1}+1\leq i\\
  \leq p^s-p^{s-\tau_{2}}+\\(\beta_{2}+1) p^{s-\tau_{2}-1}
\end{array}$&$\begin{array}{c}
  p^s-p^{s-\tau_{1}}+\\ \beta_{1} p^{s-\tau_{1}-1}+1\leq j\\
  \leq p^s-p^{s-\tau_{1}}+\\(\beta_{1}+1) p^{s-\tau_{1}-1}
\end{array}$ & $k=p^s$ &$\min\{3(\beta_{1}+2)p^{\tau_{1}}, 5(\beta_{2}+2)p^{\tau_{2}}\}$\\ \hline
$\begin{array}{c}
  p^s-p^{s-\tau_{2}}+\\ \beta_{2} p^{s-\tau_{2}-1}+1\leq i\\
  \leq p^s-p^{s-\tau_{2}}+\\(\beta_{2}+1) p^{s-\tau_{2}-1}
\end{array}$&$j=p^s$&$k=p^s$ & $5(\beta_{2}+2)p^{\tau_{2}}$\\ \hline
$i=p^s$&$j=p^s$&$k=p^s$&$0$\\ \hline
\end{tabular}
\end{center}
\end{table}
\begin{table}\tiny
\begin{center}
\caption{The minimum Hamming distances of $\C=<{f_{1}(x)}^i(x-1)^j{f_{2}(x)}^k>$, where $0 \leq i \leq j \leq k\leq p^s$.}\label{Table:341}
\begin{tabular}{c|c|c|c}
    \hline  $i$ & $j$ & $k$ & $d_H(\C)$ \\ \hline
       $0$   &$0$ &$0$ &$1$\\\hline
        $i=0$ & $0\leq j\leq p^{s-1}$ &$0<k\leq p^{s-1}$  &$2$\\ \hline
$i=0$ & $j=0$ &$p^{s-1}<k\leq p^{s}$  &$3$\\ \hline
 $i=0$ & $0< j\leq p^{s}$ &$p^{s-1}<k\leq 2p^{s-1}$  &$3$\\ \hline
$i=0$ & $0< j\leq p^{s}$ &$2p^{s-1}<k\leq p^{s}$  &$4$\\ \hline
     $\begin{array}{c}
  p^s-p^{s-\tau_{2}}+\\ \beta_{2} p^{s-\tau_{2}-1}+1\leq i\\
  \leq p^s-p^{s-\tau_{2}}+\\(\beta_{2}+1) p^{s-\tau_{2}-1}
\end{array}$
 &  $\begin{array}{c}
  p^s-p^{s-\tau_{1}}+\\ \beta_{1} p^{s-\tau_{1}-1}+1\leq j\\
  \leq p^s-p^{s-\tau_{1}}+\\(\beta_{1}+1) p^{s-\tau_{1}-1}
\end{array}$ &
 $\begin{array}{c}
  p^s-p^{s-\tau_{0}}+\\ \beta_{0} p^{s-\tau_{0}-1}+1\leq k\\
  \leq p^s-p^{s-\tau_{0}}+\\(\beta_{0}+1) p^{s-\tau_{0}-1}
\end{array}$ & $\begin{array}{c}\min\{(\beta_{0}+2)p^{\tau_{0}}, 3(\beta_{1}+2)p^{\tau_{1}},\\ 4(\beta_{2}+2)p^{\tau_{2}}\}\end{array}$
  \\ \hline
$\begin{array}{c}
  p^s-p^{s-\tau_{2}}+\\ \beta_{2} p^{s-\tau_{2}-1}+1\leq i\\
  \leq p^s-p^{s-\tau_{2}}+\\(\beta_{2}+1) p^{s-\tau_{2}-1}
\end{array}$&$\begin{array}{c}
  p^s-p^{s-\tau_{1}}+\\ \beta_{1} p^{s-\tau_{1}-1}+1\leq j\\
  \leq p^s-p^{s-\tau_{1}}+\\(\beta_{1}+1) p^{s-\tau_{1}-1}
\end{array}$ & $k=p^s$ &$\min\{3(\beta_{1}+2)p^{\tau_{1}}, 4(\beta_{2}+2)p^{\tau_{2}}\}$\\ \hline
$\begin{array}{c}
  p^s-p^{s-\tau_{2}}+\\ \beta_{2} p^{s-\tau_{2}-1}+1\leq i\\
  \leq p^s-p^{s-\tau_{2}}+\\(\beta_{2}+1) p^{s-\tau_{2}-1}
\end{array}$&$j=p^s$&$k=p^s$ & $4(\beta_{2}+2)p^{\tau_{2}}$\\ \hline
$i=p^s$&$j=p^s$&$k=p^s$&$0$\\ \hline
\end{tabular}
\end{center}
\end{table}
 It is clear that all cyclic codes of length $5p^s$ have  form $\C=<(x-1)^{i_{0}}(x-\omega)^{i_{1}}(x-\omega^2)^{i_{2}}(x-\omega^3)^{i_{3}}(x-\omega^4)^{i_{4}}>$ with $0 \leq i_{0}, i_{1},i_{2}, i_{3}, i_{4}\leq p^s$, when $5\mid(q-1)$. If $i_{0}=i_{1}=i_{2}=i_{3}=i_{4}=0$, then $\C=\f_{q}[x]/(x^{5p^s}-1)$, $d_H(\C)=1$. If $i_{0}=i_{1}=i_{2}=i_{3}=i_{4}=p^s$, then $\C=\{\mathbf{0}\}$, $d_H(\C)=0$. These are the trivial cyclic codes of length $5p^s$ over $\f_q$. Next, we determine the minimum Hamming distances of all non-trivial cyclic codes of length $5p^s$ over $\f_q$. Without losing generality, we only consider the Hamming distances of code $\C$ when $i_{4}\leq i_{3}\leq i_{2}\leq i_{1}\leq i_{0}$.
\begin{lemma}\small
Let $i_{0}, i_{1}, i_{2}, i_{3}, i_{4}$ be integers such that $0<i_{0}\leq p^s$ and $0\leq i_{3}\leq i_{2}\leq i_{1}\leq p^s$, $i_{4}=0$, then
\begin{equation*}\label{222}
 d_{H}(\C)=\left\{
  \begin{array}{ll}
 2,&\mbox{ if } 0<i_{0}\leq p^{s-1}, 0\leq i_{3}\leq i_{2}\leq i_{1}\leq i_{0}\leq p^{s-1}, i_{4}=0,\\
&~\mbox{ or } p^{s-1}< i_{0}\leq p^s, i_{1}=i_{2}=i_{3}=i_{4}=0,\\
  3,&\mbox{ if } p^{s-1}< i_{0}\leq 2p^{s-1}, 0< i_{1}\leq p^s, 0\leq i_{3}\leq i_{2}\leq p^s, i_{4}=0,\\
&~\mbox{ or } 2p^{s-1}< i_{0}\leq p^s, 0< i_{1}\leq p^s, i_{2}=i_{3}=i_{4}=0,\\
4,&\mbox{ if } 2p^{s-1}< i_{0}\leq 3 p^{s-1}, 0< i_{2}\leq i_{1}\leq p^s, 0\leq i_{3}\leq p^s, i_{4}=0,\\
&~\mbox{ or }3p^{s-1}< i_{0}\leq p^s, 0< i_{2}\leq i_{1}\leq p^s, i_{3}=i_{4}=0,\\
&~\mbox{ or }3p^{s-1}< i_{0}\leq p^s, 0< i_{3}\leq i_{2}\leq i_{1}\leq p^{s-1}, i_{4}=0,\\
5,&\mbox{ if }
3p^{s-1}< i_{0}\leq p^{s}, p^{s-1 }< i_{1}\leq p^s, 0< i_{3}\leq i_{2}\leq p^s,  i_{4}=0.
\end{array}
\right.
\end{equation*}
\end{lemma}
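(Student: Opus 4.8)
The plan is to apply $(\ref{25})$, $d_H(\C)=\min\{P_td_H(\C_t):t\in T\}$, with each $d_H(\C_t)$ read off from Case~3 of Proposition~\ref{p1}. Since $0=i_4\le i_3\le i_2\le i_1\le i_0$, for $t\in T$ the code $\C_t$ has generator polynomial the product of exactly those $(x-\omega^r)$ with $i_r>t$, so $d_H(\C_t)=k+1$, where $k=\#\{r\in\{0,\dots,4\}:i_r>t\}$; concretely, as $t$ runs over $T$ this is $1$ on $t\ge i_0$, $2$ on $i_1\le t<i_0$, $3$ on $i_2\le t<i_1$, $4$ on $i_3\le t<i_2$, and $5$ on $0\le t<i_3$ (the block where all five multiplicities exceed $t$ is empty because $i_4=0$). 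Partitioning $T$ into these five possibly-empty blocks, $(\ref{25})$ becomes
\[
d_H(\C)=\min\{\mu_0,\ 2\mu_1,\ 3\mu_2,\ 4\mu_3,\ 5\mu_4\},\qquad \mu_k=\min\{P_t:t\text{ in block }k\},
\]
with $\mu_k=\infty$ when block $k$ is empty.

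All estimates for the $\mu_k$ come from $(\ref{24})$ together with the reading of $P_t=\prod_m(t_m+1)$ off the base-$p$ digits of $t$: for $0\le c\le p-2$ one has $\min\{P_t:t\ge cp^{s-1}+1,\ t\in T\}=c+2$, while $P_t=1$ only at $t=0$ and $P_{cp^{s-1}}=c+1$; also $cp^{s-1}\in T$ for $1\le c\le 4$ since $p\ge 7$. For each displayed sub-case I then argue in two halves. For the lower bound $d_H(\C)\ge d$ I check $(k+1)\mu_k\ge d$ on every nonempty block, using only that $t\ge i_k$ there and the hypothesis locating $i_k$ on the $p^{s-1}$-scale (for example $i_0>3p^{s-1}$ gives $\mu_0\ge 5$; $i_1>p^{s-1}$ gives $2\mu_1\ge 6$; $i_3>0$ gives $4\mu_3\ge 8$, and so on). For the matching upper bound I produce one explicit $t$: either $t=0$, which yields $P_td_H(\C_t)=\#\{r:i_r>0\}+1$, or $t=cp^{s-1}$ with $cp^{s-1}\ge i_0$, which yields $P_td_H(\C_t)=c+1$; the hypotheses of each sub-case are exactly what is needed to place the chosen witness inside its block.

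The point requiring care — the main obstacle beyond bookkeeping — is that the optimal $t$ is \emph{not} always $t=0$. E.g.\ if $3p^{s-1}<i_0\le p^s$, $p^{s-1}<i_1$, $i_3>0$, then every block contributes $\ge 5$ and $t=0$ (four linear factors in $g_0$) attains $5$, so $d_H(\C)=5$; but if instead $3p^{s-1}<i_0\le p^s$ and $0<i_3\le i_2\le i_1\le p^{s-1}$, the block $\{i_1\le t<i_0\}$ already contains $t=p^{s-1}$ with $2P_{p^{s-1}}=4$, so $d_H(\C)=4$ even though $g_0$ still has four factors. One also has to confirm the four value-regimes are exhaustive: first split on which of $(0,p^{s-1}],(p^{s-1},2p^{s-1}],(2p^{s-1},3p^{s-1}],(3p^{s-1},p^s]$ contains $i_0$, then on the largest $r$ with $i_r>0$, and — only when $i_0>3p^{s-1}$ and $i_3>0$ — on whether $i_1\le p^{s-1}$; pushing each branch through the block inequalities above reproduces exactly the stated values. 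What remains is routine: the per-block arithmetic, every instance of which is the digit bound.
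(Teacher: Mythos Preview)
Your proposal is correct and follows the same approach as the paper: both apply $(\ref{25})$ together with the values of $d_H(\C_t)$ from Case~3 of Proposition~\ref{p1}, splitting according to where $t$ falls relative to the $i_r$. Your write-up is in fact considerably more explicit than the paper's own proof, which for each sub-case simply asserts the value ``according to Proposition~\ref{p1} and $(\ref{25})$'' without spelling out the block decomposition, the lower bounds $(k+1)\mu_k\ge d$, or the explicit witnesses $t=0$ and $t=cp^{s-1}$ that you identify.
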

\begin{proof}
\begin{itemize}
\item Case 1. $0< i_{0}\leq p^{s-1}$.

 According to Proposition \ref{p1} and $(\ref{25})$, we can get $ d_{H}(\C)=2$;

\item Case 2. $ p^{s-1} <i_{0}\leq p^s$.

If $ p^{s-1}<i_{0}\leq p^{s}$ and $i_{1}=i_{2}=i_{3}=0$, then according to Proposition \ref{p1} and $(\ref{25})$, we can get $ d_{H}(\C)=2$;

If $ p^{s-1}<i_{0}\leq 2p^{s-1}$, $0< i_{1}\leq p^s$ and $0\leq i_{3}\leq i_{2}\leq p^s$, then according to Proposition \ref{p1} and $(\ref{25})$, we can get $ d_{H}(\C)=3$;

If $ 2p^{s-1}<i_{0}\leq p^{s}$, $0< i_{1}\leq p^s$ and $i_{2}=i_{3}=0$, then according to Proposition \ref{p1} and  $(\ref{25})$, we can get $ d_{H}(\C)=3$;

If $2p^{s-1}< i_{0}\leq 3 p^{s-1}, 0< i_{2}\leq i_{1}\leq p^s$ and $0\leq i_{3}\leq p^s$, then according to Proposition \ref{p1} and $(\ref{25})$, we can get $ d_{H}(\C)=4$;

If $3p^{s-1}< i_{0}\leq p^s, 0< i_{2}\leq i_{1}\leq p^s$ and $i_{3}=0$, then according to Proposition \ref{p1} and $(\ref{25})$, we can get $ d_{H}(\C)=4$;

If $3p^{s-1}< i_{0}\leq p^s$ and $0< i_{3}\leq i_{2}\leq i_{1}\leq p^{s-1}$, then according to Proposition \ref{p1} and $(\ref{25})$, we can get $ d_{H}(\C)=4$;

If $3p^{s-1}< i_{0}\leq p^{s}, p^{s-1 }< i_{1}\leq p^s$ and $0< i_{3}\leq i_{2}\leq p^s$, then according to Proposition \ref{p1} and $(\ref{25})$, we can get $ d_{H}(\C)=5$;
\end{itemize}
According to the cases as above, the proof  is done.
\end{proof}
\begin{lemma}
Let $i_{0}, i_{1}, i_{2}, i_{3}, i_{4}$ be integers with $p^s-p^{s-\tau_{0}}+\beta_{0} p^{s-\tau_{0}-1}+1\leq i_{0}\leq p^s-p^{s-\tau_{0}}+(\beta_{0}+1) p^{s-\tau_{0}-1}$, $p^s-p^{s-\tau_{1}}+\beta_{1}p^{s-\tau_{1}-1}+1\leq i_{1}\leq p^s-p^{s-\tau_{1}}+(\beta_{1}+1) p^{s-\tau_{1}-1}$, $p^s-p^{s-\tau_{2}}+\beta_{2} p^{s-\tau_{2}-1}+1\leq i_{2}\leq p^s-p^{s-\tau_{2}}+(\beta_{2}+1) p^{s-\tau_{2}-1}$, $p^s-p^{s-\tau_{3}}+\beta_{3} p^{s-\tau_{3}-1}+1\leq i_{3}\leq p^s-p^{s-\tau_{3}}+(\beta_{3}+1) p^{s-\tau_{3}-1}$ and $p^s-p^{s-\tau_{4}}+\beta_{4} p^{s-\tau_{4}-1}+1\leq i_{4}\leq p^s-p^{s-\tau_{4}}+(\beta_{4}+1) p^{s-\tau_{4}-1}$,  where $0 \leq \beta_{0}, \beta_{1}, \beta_{2}, \beta_{3}, \beta_{4}\leq p -2$ and $0\leq\tau_{4}\leq\tau_{3}\leq\tau_{2} \leq\tau_{1} \leq\tau_{0}\leq s -1$.  Let $\C=<(x-1)^{i_{0}}(x-\omega)^{i_{1}}(x-\omega^2)^{i_{2}}(x-\omega^3)^{i_{3}}(x-\omega^4)^{i_{4}}>$, then $d_H(\C) =\min\{ (\beta_{0}+2)p^{\tau_{0}}, 2(\beta_{1}+2)p^{\tau_{1}}, 3(\beta_{2}+2)p^{\tau_{2}}, 4(\beta_{3}+2)p^{\tau_{3}}, 5(\beta_{4}+2)p^{\tau_{4}} \}$.
\end{lemma}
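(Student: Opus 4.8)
The plan is to run the same argument as in the proof of Lemma~\ref{28}, now carrying the five linear factors $x-1,x-\omega,x-\omega^{2},x-\omega^{3},x-\omega^{4}$ of $x^{5}-1$ rather than the three factors available in Case~2; recall that throughout this part of the discussion we assume $i_{4}\le i_{3}\le i_{2}\le i_{1}\le i_{0}$. First I would invoke $(\ref{25})$ to write $d_{H}(\C)=\min\{P_{t}d_{H}(\C_{t}):t\in T\}$ and split $T=\{0,1,\dots,p^{s}-1\}$ into the six consecutive ranges $t\ge i_{0}$, $i_{0}>t\ge i_{1}$, $i_{1}>t\ge i_{2}$, $i_{2}>t\ge i_{3}$, $i_{3}>t\ge i_{4}$, and $t<i_{4}$. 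On the range $i_{k-1}>t\ge i_{k}$ ($k=1,2,3,4$) exactly the exponents $i_{0},\dots,i_{k-1}$ exceed $t$, so $g_{t}=(x-1)(x-\omega)\cdots(x-\omega^{k-1})$ is a product of $k$ distinct linear factors and Proposition~\ref{p1} (Case~3) gives $d_{H}(\C_{t})=k+1$; on $t\ge i_{0}$ one has $g_{t}=1$ and $d_{H}(\C_{t})=1$; on $t<i_{4}$ one has $g_{t}=x^{5}-1$ and $d_{H}(\C_{t})=\infty$, so that range is irrelevant.

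Next I would apply $(\ref{24})$ on each range, using that $i_{k}$ lies in $[p^{s}-p^{s-\tau_{k}}+\beta_{k}p^{s-\tau_{k}-1}+1,\ p^{s}-p^{s-\tau_{k}}+(\beta_{k}+1)p^{s-\tau_{k}-1}]$: this gives $\min\{P_{t}:t\ge i_{0}\}=(\beta_{0}+2)p^{\tau_{0}}$ and $\min\{P_{t}:i_{k-1}>t\ge i_{k}\}\ge(\beta_{k}+2)p^{\tau_{k}}$ for $k=1,2,3,4$. Multiplying by the value of $d_{H}(\C_{t})$ on each range and taking the overall minimum yields
\[
d_{H}(\C)\ \ge\ \min\{(\beta_{0}+2)p^{\tau_{0}},\,2(\beta_{1}+2)p^{\tau_{1}},\,3(\beta_{2}+2)p^{\tau_{2}},\,4(\beta_{3}+2)p^{\tau_{3}},\,5(\beta_{4}+2)p^{\tau_{4}}\},
\]
while the first range also gives $d_{H}(\C)\le(\beta_{0}+2)p^{\tau_{0}}$, attained at $t=p^{s}-p^{s-\tau_{0}}+(\beta_{0}+1)p^{s-\tau_{0}-1}$ (where $g_{t}=1$).

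For the matching upper bound I would, for each $k\in\{1,2,3,4\}$, try the candidate $t_{k}^{\ast}=p^{s}-p^{s-\tau_{k}}+(\beta_{k}+1)p^{s-\tau_{k}-1}$: a direct base-$p$ digit count gives $P_{t_{k}^{\ast}}=(\beta_{k}+2)p^{\tau_{k}}$, and $t_{k}^{\ast}\ge i_{k}$ always. The point to check is $t_{k}^{\ast}<i_{k-1}$, so that $t_{k}^{\ast}$ actually lands in the range $i_{k-1}>t\ge i_{k}$ and hence $P_{t_{k}^{\ast}}d_{H}(\C_{t_{k}^{\ast}})=(k+1)(\beta_{k}+2)p^{\tau_{k}}$; comparing base-$p$ expansions and using the hypotheses $\tau_{k}\le\tau_{k-1}$ and (when $\tau_{k}=\tau_{k-1}$) the inequality $\beta_{k}\le\beta_{k-1}$ forced by $i_{k}\le i_{k-1}$ shows $t_{k}^{\ast}<i_{k-1}$ in every case except $\tau_{k}=\tau_{k-1}$ together with $\beta_{k}=\beta_{k-1}$.

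The hard part will be disposing of that last degenerate case cleanly. I would argue that in it $(k+1)(\beta_{k}+2)p^{\tau_{k}}>k(\beta_{k-1}+2)p^{\tau_{k-1}}$, i.e., the $k$-th term of the minimum strictly exceeds the $(k-1)$-st term, so it can never be the value of the overall minimum; consequently, whenever the minimum on the right-hand side is realised by a term of index $k\ge1$ we are automatically outside the degenerate case and that term is attained at $t_{k}^{\ast}$, while if it is realised by the first term it is attained on the range $t\ge i_{0}$. Combining this with the lower bound above gives $d_{H}(\C)=\min\{(\beta_{0}+2)p^{\tau_{0}},2(\beta_{1}+2)p^{\tau_{1}},3(\beta_{2}+2)p^{\tau_{2}},4(\beta_{3}+2)p^{\tau_{3}},5(\beta_{4}+2)p^{\tau_{4}}\}$, exactly as in Lemma~\ref{28}.
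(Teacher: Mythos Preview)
Your proposal is correct and follows the same range-splitting strategy the paper uses; indeed the paper's own proof of this lemma is simply ``similar to that of Lemma~\ref{28}'' with no further details. Your treatment is in fact more careful than the paper's proof of Lemma~\ref{28} itself: that proof only records $\min\{P_t d_H(\C_t)\}\ge (k+1)(\beta_k+2)p^{\tau_k}$ on each intermediate range and then asserts the equality, whereas you explicitly exhibit the witnesses $t_k^\ast$ and correctly dispose of the boundary case $(\tau_k,\beta_k)=(\tau_{k-1},\beta_{k-1})$ by observing that the corresponding term can never realise the minimum.
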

\begin{proof}
The proof is similar to that of Lemma \ref{28}. So we omit it.
\end{proof}
By the similar arguments as above, we can obtain the following lemmas immediately.
\begin{lemma}
Let $i_{0}, i_{1}, i_{2}, i_{3}, i_{4}$ be integers with $i_{0}=p^s$, $p^s-p^{s-\tau_{1}}+\beta_{1}p^{s-\tau_{1}-1}+1\leq i_{1}\leq p^s-p^{s-\tau_{1}}+(\beta_{1}+1) p^{s-\tau_{1}-1}$, $p^s-p^{s-\tau_{2}}+\beta_{2} p^{s-\tau_{2}-1}+1\leq i_{2}\leq p^s-p^{s-\tau_{2}}+(\beta_{2}+1) p^{s-\tau_{2}-1}$, $p^s-p^{s-\tau_{3}}+\beta_{3} p^{s-\tau_{3}-1}+1\leq i_{3}\leq p^s-p^{s-\tau_{3}}+(\beta_{3}+1) p^{s-\tau_{3}-1}$ and $p^s-p^{s-\tau_{4}}+\beta_{4} p^{s-\tau_{4}-1}+1\leq i_{4}\leq p^s-p^{s-\tau_{4}}+(\beta_{4}+1) p^{s-\tau_{4}-1}$,  where $0 \leq \beta_{1}, \beta_{2}, \beta_{3}, \beta_{4}\leq p -2$ and $0\leq\tau_{4}\leq\tau_{3}\leq\tau_{2} \leq\tau_{1} \leq s -1$.  Let $\C=<(x-1)^{p^s}(x-\omega)^{i_{1}}(x-\omega^2)^{i_{2}}(x-\omega^3)^{i_{3}}(x-\omega^4)^{i_{4}}>$, then $d_H(\C) =\min\{ (2(\beta_{1}+2)p^{\tau_{1}}, 3(\beta_{2}+2)p^{\tau_{2}}, 4(\beta_{3}+2)p^{\tau_{3}}, 5(\beta_{4}+2)p^{\tau_{4}} \}$.
\end{lemma}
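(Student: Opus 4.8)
The plan is to reuse the block-decomposition argument of Lemma \ref{28}, replacing the simple-root distances of the $5\mid(q+1)$ case by those of Proposition \ref{p1} Case 3, and exploiting that $i_{0}=p^{s}$ forces the factor $x-1$ into every $g_{t}(x)$.

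First I would record the set-up. Since $i_{0}=p^{s}$, for every $t\in T=\{0,1,\dots ,p^{s}-1\}$ the factor $x-1$ occurs in $g(x)$ with multiplicity $p^{s}>t$, so $(x-1)\mid g_{t}(x)$; the hypotheses give $1\le i_{k}\le p^{s}-1$ for $k=1,2,3,4$, and under the standing assumption $i_{4}\le i_{3}\le i_{2}\le i_{1}$ the pair $(\tau_{k},\beta_{k})$ is exactly the parameter attached to $i_{k}$ by (\ref{24}). I would then partition $T$ into
\[
R_{1}=\{t:i_{1}\le t\le p^{s}-1\},\ R_{2}=\{t:i_{2}\le t<i_{1}\},\ R_{3}=\{t:i_{3}\le t<i_{2}\},
\]
$R_{4}=\{t:i_{4}\le t<i_{3}\}$, $R_{5}=\{t:0\le t<i_{4}\}$, and observe that on these blocks $g_{t}(x)$ equals $x-1$, $(x-1)(x-\omega)$, $(x-1)(x-\omega)(x-\omega^{2})$, $(x-1)(x-\omega)(x-\omega^{2})(x-\omega^{3})$ and $x^{5}-1$, so by Proposition \ref{p1} Case 3 one has $d_{H}(\C_{t})=2,3,4,5,\infty$ on $R_{1},\dots ,R_{5}$. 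The bookkeeping point to keep in mind is $R_{1}\cup\cdots\cup R_{k}=\{t\in T:t\ge i_{k}\}$ for $k=1,2,3,4$.

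For the lower bound I would apply (\ref{25}): $d_{H}(\C)=\min_{1\le j\le5}\min_{t\in R_{j}}P_{t}d_{H}(\C_{t})$. On $R_{1}=\{t\in T:t\ge i_{1}\}$, formula (\ref{24}) with $l=i_{1}$ gives $\min_{t\in R_{1}}P_{t}=(\beta_{1}+2)p^{\tau_{1}}$, so $R_{1}$ contributes exactly $2(\beta_{1}+2)p^{\tau_{1}}$; for $j=2,3,4$, since $R_{j}\subseteq\{t\in T:t\ge i_{j}\}$, formula (\ref{24}) with $l=i_{j}$ gives $\min_{t\in R_{j}}P_{t}\ge(\beta_{j}+2)p^{\tau_{j}}$, so $R_{j}$ contributes at least $(j+1)(\beta_{j}+2)p^{\tau_{j}}$, while $R_{5}$ contributes $\infty$. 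Hence $d_{H}(\C)\ge\min\{2(\beta_{1}+2)p^{\tau_{1}},3(\beta_{2}+2)p^{\tau_{2}},4(\beta_{3}+2)p^{\tau_{3}},5(\beta_{4}+2)p^{\tau_{4}}\}$.

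For the reverse inequality, fix $k\in\{1,2,3,4\}$ and, by (\ref{24}) with $l=i_{k}$, pick $t^{\ast}\in T$ with $t^{\ast}\ge i_{k}$ and $P_{t^{\ast}}=(\beta_{k}+2)p^{\tau_{k}}$; then $t^{\ast}\in R_{1}\cup\cdots\cup R_{k}$, so $d_{H}(\C_{t^{\ast}})\le k+1$ and $d_{H}(\C)\le P_{t^{\ast}}d_{H}(\C_{t^{\ast}})\le(k+1)(\beta_{k}+2)p^{\tau_{k}}$; minimizing over $k$ and combining with the previous paragraph gives equality. The main obstacle is exactly this step: on the middle blocks $R_{2},R_{3},R_{4}$ the minimum of $P_{t}$ need not be attained inside the block, so each block only yields a one-sided estimate, and equality must be recovered from the separate $t^{\ast}$-argument rather than block by block. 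Since this is the same mechanism as in Lemma \ref{28} (one extra linear factor, plus the simplification from $i_{0}=p^{s}$), I would ultimately present the proof as ``by the same argument as Lemma \ref{28}'' once these points are in place.
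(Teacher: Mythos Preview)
Your proposal is correct and follows exactly the paper's intended approach: the paper proves this lemma by saying ``by the similar arguments as above'' (referring to the block decomposition of Lemma \ref{28}), and your partition $R_{1},\dots,R_{5}$ together with the application of (\ref{24}) and Proposition \ref{p1} Case 3 is precisely that argument, adapted to $i_{0}=p^{s}$. Your separate $t^{\ast}$-argument for the upper bound actually makes explicit a step the paper's proof of Lemma \ref{28} leaves implicit, so if anything your write-up is slightly more careful.
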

\begin{lemma}
Let $i_{0}, i_{1}, i_{2}, i_{3}, i_{4}$ be integers with $i_{0}=i_{1}=p^s$, $p^s-p^{s-\tau_{2}}+\beta_{2} p^{s-\tau_{2}-1}+1\leq i_{2}\leq p^s-p^{s-\tau_{2}}+(\beta_{2}+1) p^{s-\tau_{2}-1}$, $p^s-p^{s-\tau_{3}}+\beta_{3} p^{s-\tau_{3}-1}+1\leq i_{3}\leq p^s-p^{s-\tau_{3}}+(\beta_{3}+1) p^{s-\tau_{3}-1}$ and $p^s-p^{s-\tau_{4}}+\beta_{4} p^{s-\tau_{4}-1}+1\leq i_{4}\leq p^s-p^{s-\tau_{4}}+(\beta_{4}+1) p^{s-\tau_{4}-1}$,  where $0 \leq \beta_{2}, \beta_{3}, \beta_{4}\leq p -2$ and $0\leq\tau_{4}\leq\tau_{3}\leq\tau_{2} \leq s -1$.  Let $\C=<(x-1)^{p^s}(x-\omega)^{p^s}(x-\omega^2)^{i_{2}}(x-\omega^3)^{i_{3}}(x-\omega^4)^{i_{4}}>$, then $d_H(\C) =\min\{ (3(\beta_{2}+2)p^{\tau_{2}}, 4(\beta_{3}+2)p^{\tau_{3}}, 5(\beta_{4}+2)p^{\tau_{4}} \}$.
\end{lemma}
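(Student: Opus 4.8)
The plan is to imitate the proof of Lemma~\ref{28}, using the three ingredients already at hand: the distance formula $(\ref{25})$, the explicit value $(\ref{24})$ of $\min\{P_t: t\ge l,\ t\in T\}$, and the list of distances $d_H(\C_t)$ in Case~3 of Proposition~\ref{p1}. Since $i_0=i_1=p^s$, for every $t\in T=\{0,1,\dots,p^s-1\}$ the exponents of $x-1$ and $x-\omega$ in the generator polynomial of $\C$ are $p^s>t$, so the factor $(x-1)(x-\omega)$ divides $g_t(x)$ for all $t$. Using the reduction $i_4\le i_3\le i_2$ in force throughout this case, I would split $T$ into the four consecutive blocks $\{t\ge i_2\}$, $\{i_2>t\ge i_3\}$, $\{i_3>t\ge i_4\}$, $\{t<i_4\}$; on these blocks $g_t(x)$ equals $(x-1)(x-\omega)$, $(x-1)(x-\omega)(x-\omega^2)$, $(x-1)(x-\omega)(x-\omega^2)(x-\omega^3)$ and $x^5-1$ respectively, so Proposition~\ref{p1} gives $d_H(\C_t)=3,\,4,\,5,\,\infty$ on these four blocks.

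For the lower bound I would apply $(\ref{24})$ with $l=i_2$, then $l=i_3$, then $l=i_4$: the interval hypotheses on $i_2,i_3,i_4$ are exactly the hypotheses on $l$ in $(\ref{24})$, so $\min\{P_t: t\ge i_\ell,\ t\in T\}=(\beta_\ell+2)p^{\tau_\ell}$ for $\ell=2,3,4$. Restricting to each block and multiplying by the corresponding value of $d_H(\C_t)$ shows that $\min\{P_td_H(\C_t):t\in T\}$ over the first block equals $3(\beta_2+2)p^{\tau_2}$, over the second block is at least $4(\beta_3+2)p^{\tau_3}$, over the third block is at least $5(\beta_4+2)p^{\tau_4}$, and over the fourth block is $\infty$. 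By $(\ref{25})$ this gives $d_H(\C)\ge\min\{3(\beta_2+2)p^{\tau_2},\,4(\beta_3+2)p^{\tau_3},\,5(\beta_4+2)p^{\tau_4}\}$.

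For the reverse inequality I would, for each $\ell\in\{2,3,4\}$, test the single index $t_\ell=p^s-p^{s-\tau_\ell}+(\beta_\ell+1)p^{s-\tau_\ell-1}$. Its base-$p$ expansion has digit $p-1$ in positions $s-1,\dots,s-\tau_\ell$, digit $\beta_\ell+1$ in position $s-\tau_\ell-1$ (a legitimate digit since $\beta_\ell\le p-2$), and $0$ elsewhere, so $P_{t_\ell}=(\beta_\ell+2)p^{\tau_\ell}$; moreover $\beta_\ell\le p-2$ forces $i_\ell\le t_\ell\le p^s-1$, hence $t_\ell\in T$ with $t_\ell\ge i_\ell$. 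Because $t_\ell\ge i_\ell$ and $i_4\le i_3\le i_2$, the polynomial $g_{t_\ell}$ has at most $\ell$ distinct linear factors (the two factors $x-1,x-\omega$, plus at most $\ell-2$ of $x-\omega^2,x-\omega^3,x-\omega^4$), so $d_H(\C_{t_\ell})\le \ell+1$ by Proposition~\ref{p1}, and therefore $P_{t_\ell}d_H(\C_{t_\ell})\le(\ell+1)(\beta_\ell+2)p^{\tau_\ell}$. Taking $\ell=2,3,4$ and invoking $(\ref{25})$ yields $d_H(\C)\le\min\{3(\beta_2+2)p^{\tau_2},\,4(\beta_3+2)p^{\tau_3},\,5(\beta_4+2)p^{\tau_4}\}$, which with the lower bound gives the claimed equality.

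The one point requiring care — and the step I expect to be the main, if minor, obstacle — is exactly this last argument: the witness $t_\ell$ need not lie in the $\ell$-th block of the partition, so one cannot naively assert $d_H(\C_{t_\ell})=\ell+1$. The remedy is the observation that landing in an \emph{earlier} block can only \emph{decrease} $d_H(\C_{t_\ell})$, so the product $P_{t_\ell}d_H(\C_{t_\ell})$ is still bounded by $(\ell+1)(\beta_\ell+2)p^{\tau_\ell}$; this is what makes the upper bound match the lower bound with no case analysis on the relative sizes of the $\tau_\ell$ and $\beta_\ell$. Everything else is routine and runs parallel to the proof of Lemma~\ref{28}, with the distances $3,4,5$ supplied by Case~3 of Proposition~\ref{p1} in place of the distances $1,2,4$ used there.
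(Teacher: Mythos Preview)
Your proposal is correct and follows exactly the template the paper invokes (the paper omits this proof entirely, deferring to the ``similar arguments'' of Lemma~\ref{28}). Your explicit construction of the witnesses $t_\ell$ for the upper bound is in fact more careful than the paper's model argument in Lemma~\ref{28}, which asserts the final equality after establishing only the inequalities $\geq$ on the non-leading blocks.
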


\begin{lemma}
Let $i_{0}, i_{1}, i_{2}, i_{3}, i_{4}$ be integers with $i_{0}=i_{1}=i_{2}=p^s$, $p^s-p^{s-\tau_{3}}+\beta_{3} p^{s-\tau_{3}-1}+1\leq i_{3}\leq p^s-p^{s-\tau_{3}}+(\beta_{3}+1) p^{s-\tau_{3}-1}$ and $p^s-p^{s-\tau_{4}}+\beta_{4} p^{s-\tau_{4}-1}+1\leq i_{4}\leq p^s-p^{s-\tau_{4}}+(\beta_{4}+1) p^{s-\tau_{4}-1}$,  where $0 \leq \beta_{3}, \beta_{4}\leq p -2$ and $0\leq\tau_{4}\leq\tau_{3} \leq s -1$.  Let $\C=<(x-1)^{p^s}(x-\omega)^{p^s}(x-\omega^2)^{p^s}(x-\omega^3)^{i_{3}}(x-\omega^4)^{i_{4}}>$, then $d_H(\C) =\min\{ 4(\beta_{3}+2)p^{\tau_{3}}, 5(\beta_{4}+2)p^{\tau_{4}} \}$.
\end{lemma}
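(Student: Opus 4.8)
The plan is to reproduce the argument of Lemma~\ref{28} almost verbatim, the only change being that the three factors $(x-1)$, $(x-\omega)$, $(x-\omega^2)$ are now frozen with multiplicity $p^s$. First I apply the distance formula $(\ref{25})$, reducing the problem to computing $\min\{P_td_H(\C_t):t\in T\}$ with $T=\{t:0\le t\le p^s-1\}$. Since every $t\in T$ satisfies $t\le p^s-1<p^s$, the simple-root code $\C_t$ has a generator polynomial $g_t(x)$ that always retains the three linear factors $(x-1)$, $(x-\omega)$, $(x-\omega^2)$, while it retains $(x-\omega^3)$ exactly when $t<i_3$ and $(x-\omega^4)$ exactly when $t<i_4$. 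Because $\tau_4\le\tau_3$ (together with the standing assumption $i_4\le i_3$) places the interval defining $i_4$ below the one defining $i_3$, the set $T$ is partitioned into the three consecutive ranges $t\ge i_3$, $\ i_3>t\ge i_4$, and $\ t<i_4$.

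Next I evaluate each range with Proposition~\ref{p1} (Case~3) and the minimisation identity $(\ref{24})$. On $t\ge i_3$ one has $g_t(x)=(x-1)(x-\omega)(x-\omega^2)$, so $\C_t$ is the $[5,2]$ code with $d_H(\C_t)=4$, and $(\ref{24})$ applied with $l=i_3$ gives $\min\{P_t:t\in T,\ t\ge i_3\}=(\beta_3+2)p^{\tau_3}$; this range therefore contributes $4(\beta_3+2)p^{\tau_3}$. On $i_3>t\ge i_4$ one has $g_t(x)=(x-1)(x-\omega)(x-\omega^2)(x-\omega^3)$, so $\C_t$ is the $[5,1]$ code with $d_H(\C_t)=5$, and since $(\ref{24})$ with $l=i_4$ gives $\min\{P_t:t\in T,\ t\ge i_4\}=(\beta_4+2)p^{\tau_4}$, this range contributes at least $5(\beta_4+2)p^{\tau_4}$. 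On $t<i_4$ one has $g_t(x)=x^5-1$, whence $\C_t=\{\mathbf 0\}$ and $d_H(\C_t)=\infty$, contributing nothing. Putting the pieces together shows $d_H(\C)\le 4(\beta_3+2)p^{\tau_3}$ and $d_H(\C)\ge\min\{4(\beta_3+2)p^{\tau_3},\,5(\beta_4+2)p^{\tau_4}\}$.

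The step I expect to be the only real (and still minor) obstacle is upgrading the ``$\ge$'' on the middle range to the stated equality. I would argue as follows: choose $t^\star\ge i_4$ with $P_{t^\star}=(\beta_4+2)p^{\tau_4}$. If $t^\star<i_3$, then $t^\star$ lies in the middle range, so that range in fact contributes exactly $5(\beta_4+2)p^{\tau_4}$ and the claim follows. If instead $t^\star\ge i_3$, then $(\beta_4+2)p^{\tau_4}=P_{t^\star}\ge\min\{P_t:t\in T,\ t\ge i_3\}=(\beta_3+2)p^{\tau_3}$, hence $5(\beta_4+2)p^{\tau_4}>4(\beta_3+2)p^{\tau_3}$, so $\min\{4(\beta_3+2)p^{\tau_3},5(\beta_4+2)p^{\tau_4}\}=4(\beta_3+2)p^{\tau_3}$ is already attained on the first range. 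In either case $d_H(\C)=\min\{4(\beta_3+2)p^{\tau_3},\,5(\beta_4+2)p^{\tau_4}\}$, as asserted. Since this is exactly the pattern already established in Lemma~\ref{28}, one could equally well simply write ``the proof is similar to that of Lemma~\ref{28}''.
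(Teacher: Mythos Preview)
Your proof is correct and follows essentially the same approach as the paper, which merely states that the argument is similar to Lemma~\ref{28} and omits the details. In fact you are more careful than the paper: your final paragraph explicitly justifies why the ``$\ge$'' on the middle range can be upgraded to the claimed equality in the overall minimum, a point the paper's proof of Lemma~\ref{28} simply asserts without comment.
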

\begin{lemma}
Let $i_{0}, i_{1}, i_{2}, i_{3}, i_{4}$ be integers with $i_{0}=i_{1}=i_{2}=i_{3}=p^s$ and $p^s-p^{s-\tau_{4}}+\beta_{4} p^{s-\tau_{4}-1}+1\leq i_{4}\leq p^s-p^{s-\tau_{4}}+(\beta_{4}+1) p^{s-\tau_{4}-1}$,  where $0 \leq\beta_{4}\leq p -2$ and $0\leq\tau_{4}\leq s -1$.  Let $\C=<(x-1)^{p^s}(x-\omega)^{p^s}(x-\omega^2)^{p^s}(x-\omega^3)^{p^s}(x-\omega^4)^{i_{4}}>$, then $d_H(\C) = 5(\beta_{4}+2)p^{\tau_{4}}$.
\end{lemma}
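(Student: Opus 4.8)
The plan is to apply the distance formula $(\ref{25})$ from the opening lemma, together with the minimization $(\ref{24})$ and the explicit list of values $d_H(\C_t)$ recorded in Proposition \ref{p1}, Case 3. First I would unwind what the auxiliary simple-root code $\C_t$ is for each $t\in T=\{0,1,\dots,p^s-1\}$. Since $5\mid(q-1)$, the five relevant minimal polynomials are the distinct linear factors $x-1,\ x-\omega,\ x-\omega^2,\ x-\omega^3,\ x-\omega^4$, and the multiplicities appearing in $g(x)=(x-1)^{p^s}(x-\omega)^{p^s}(x-\omega^2)^{p^s}(x-\omega^3)^{p^s}(x-\omega^4)^{i_4}$ are $p^s,p^s,p^s,p^s$ and $i_4$. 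Because $t\le p^s-1<p^s$ for every $t\in T$, the first four factors $x-1,\dots,x-\omega^3$ always occur in $g_t(x)$, while $x-\omega^4$ occurs in $g_t(x)$ if and only if $t<i_4$.

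Next I would split $T$ into two ranges. If $0\le t<i_4$, then $g_t(x)=(x-1)(x-\omega)(x-\omega^2)(x-\omega^3)(x-\omega^4)=x^5-1$, so $\C_t=\{\mathbf 0\}$ and $d_H(\C_t)=\infty$ by convention; such $t$ contribute $+\infty$ to the minimum in $(\ref{25})$ and may be discarded. If $i_4\le t\le p^s-1$, then $g_t(x)=(x-1)(x-\omega)(x-\omega^2)(x-\omega^3)$, a product of four distinct linear factors, so by the last nontrivial line of Proposition \ref{p1} (Case 3) we have $d_H(\C_t)=5$ for every such $t$. (This range is nonempty, since the hypothesis on $i_4$ forces $i_4\le p^s-p^{s-\tau_4-1}\le p^s-1$.) Hence $(\ref{25})$ collapses to
\[
d_H(\C)=\min\{P_t\cdot 5 : t\in T,\ t\ge i_4\}=5\cdot\min\{P_t : t\in T,\ t\ge i_4\}.
\]

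Finally I would invoke $(\ref{24})$ with $l=i_4$: the hypothesis
$p^s-p^{s-\tau_4}+\beta_4 p^{s-\tau_4-1}+1\le i_4\le p^s-p^{s-\tau_4}+(\beta_4+1)p^{s-\tau_4-1}$
is exactly the condition on $l$ required there, so $\min\{P_t : t\in T,\ t\ge i_4\}=(\beta_4+2)p^{\tau_4}$, and therefore $d_H(\C)=5(\beta_4+2)p^{\tau_4}$. I do not expect a genuine obstacle: this is the terminal, fully degenerate case of the chain of lemmas for Case 3, and the only points needing care are the bookkeeping — that the four maximal multiplicities $p^s$ force $x-1,\dots,x-\omega^3\mid g_t(x)$ for all $t\in T$, and that the subrange $t<i_4$ yields the trivial code and hence drops out of the minimum in $(\ref{25})$.
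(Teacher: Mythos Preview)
Your proposal is correct and follows essentially the same approach as the paper: the paper states this lemma without proof, deferring to ``similar arguments'' as in Lemma~\ref{28}, which is precisely the case-splitting on $t$ combined with Proposition~\ref{p1} and $(\ref{24})$ that you carry out. Your additional care in verifying that the range $t\ge i_4$ is nonempty and that the $t<i_4$ range contributes only $+\infty$ is a nice touch the paper leaves implicit.
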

According to the lemmas as above, we have the following theorem.
\begin{theorem}{\label{77}}
Let $p\geq 7$ be a prime satisfying $5\mid(q-1)$, $0 \leq \beta_{0}, \beta_{1}, \beta_{2}, \beta_{3}, \beta_{4}\leq p -2$ and $0\leq\tau_{4}\leq\tau_{3}\leq\tau_{2} \leq\tau_{1} \leq\tau_{0}\leq s -1$. Then cyclic codes of length $5p^s$ have  form $\C=<(x-1)^{i_{0}}(x-\omega)^{i_{1}}(x-\omega^2)^{i_{2}}(x-\omega^3)^{i_{3}}(x-\omega^4)^{i_{4}}>$. If $0\leq i_{4}\leq i_{3}\leq i_{2}\leq i_{1}\leq i_{0}\leq p^s$, then the minimum Hamming distances of $\C$ are  shown in Table \ref{Table:35}.\end{theorem}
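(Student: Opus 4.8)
The theorem is obtained by assembling the lemmas of this section, so the plan is to lay out the case analysis behind them and to explain why the displayed formula is an equality.

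First I would record the structure. Since $5\mid(q-1)$ we have $x^{5p^s}-1=\prod_{j=0}^{4}(x-\omega^j)^{p^s}$ with the $\omega^j$ pairwise distinct, so the Chinese Remainder Theorem makes every cyclic code of length $5p^s$ over $\f_q$ equal to $\C=\langle(x-1)^{i_0}(x-\omega)^{i_1}(x-\omega^2)^{i_2}(x-\omega^3)^{i_3}(x-\omega^4)^{i_4}\rangle$, and by the hypothesis of the theorem it suffices to treat $0\le i_4\le i_3\le i_2\le i_1\le i_0\le p^s$. The engine is $(\ref{25})$: $d_H(\C)=\min\{P_td_H(\C_t):t\in T\}$ with $T=\{0,\dots,p^s-1\}$. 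For a fixed $t$, $\C_t$ has generator $\prod_{i_j>t}(x-\omega^j)$; since the exponents are ordered, $\{j:i_j>t\}$ is always a prefix of $\{0,1,2,3,4\}$, so by Proposition \ref{p1} (Case~3) the value $d_H(\C_t)$ depends only on the cardinality of this set and equals $1,2,3,4,5$ or $\infty$ exactly when $t$ lies in the consecutive windows $W_0=[i_0,p^s-1]$, $W_1=[i_1,i_0-1]$, $W_2=[i_2,i_1-1]$, $W_3=[i_3,i_2-1]$, $W_4=[i_4,i_3-1]$, $W_5=[0,i_4-1]$, respectively. These windows partition $T$ (some may be empty), so $d_H(\C)=\min_{0\le r\le4}\min\{P_td_H(\C_t):t\in W_r\}$.

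Next I would treat the generic stratum $1\le i_4\le\dots\le i_0\le p^s-1$. Write each $i_r$ in its unique interval $p^s-p^{s-\tau_r}+\beta_r p^{s-\tau_r-1}+1\le i_r\le p^s-p^{s-\tau_r}+(\beta_r+1)p^{s-\tau_r-1}$. As $W_0=\{t\in T:t\ge i_0\}$, $(\ref{24})$ gives $\min\{P_t:t\in W_0\}=(\beta_0+2)p^{\tau_0}$ exactly, hence $d_H(\C)\le(\beta_0+2)p^{\tau_0}$; and on each later window $W_r\subseteq\{t\ge i_r\}$ the same formula gives $\min\{P_td_H(\C_t):t\in W_r\}\ge(r+1)(\beta_r+2)p^{\tau_r}$, whence
\begin{equation*}
d_H(\C)\ge\min\bigl\{(\beta_0+2)p^{\tau_0},\,2(\beta_1+2)p^{\tau_1},\,3(\beta_2+2)p^{\tau_2},\,4(\beta_3+2)p^{\tau_3},\,5(\beta_4+2)p^{\tau_4}\bigr\}.
\end{equation*}
For the reverse inequality: if the right-hand minimum is $(\beta_0+2)p^{\tau_0}$ we are done by the upper bound just obtained; otherwise it equals $(r+1)(\beta_r+2)p^{\tau_r}$ for some $1\le r\le4$, and I would pick $t^{\ast}\ge i_r$ realising $P_{t^{\ast}}=(\beta_r+2)p^{\tau_r}$ (possible by $(\ref{24})$). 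Were $t^{\ast}$ in an earlier window $W_{r'}$ with $r'<r$, then $P_{t^{\ast}}\ge(\beta_{r'}+2)p^{\tau_{r'}}$ by $(\ref{24})$, forcing $(r+1)(\beta_r+2)p^{\tau_r}\ge(r+1)(\beta_{r'}+2)p^{\tau_{r'}}>(r'+1)(\beta_{r'}+2)p^{\tau_{r'}}$, against minimality; so $t^{\ast}\in W_r$, $d_H(\C_{t^{\ast}})=r+1$, and $d_H(\C)\le(r+1)(\beta_r+2)p^{\tau_r}$. If $i_0=\dots=i_{k-1}=p^s$ for some $k\ge1$, the windows $W_0,\dots,W_{k-1}$ become empty and their entries drop out of the minimum, which gives the remaining ``prefix'' lemmas by the identical argument.

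Finally I would dispose of the case $i_4=0$, which is genuinely different: then $W_5=\varnothing$, so the window containing $t=0$ has $d_H(\C_t)\le5$, and evaluating at $t=0$ (where $P_t=1$) already forces $d_H(\C)\le5$; the precise value in $\{2,3,4,5\}$ is then a finite check against the thresholds $p^{s-1},2p^{s-1},3p^{s-1}$ for $i_0$ and the size of $i_1$ relative to $p^{s-1}$, exactly the content of the first lemma of this section. Assembling all of these outputs over the ordered range produces Table \ref{Table:35}. I expect the main obstacle to be purely organisational: verifying that every tuple $(i_0,i_1,i_2,i_3,i_4)$ with $0\le i_4\le\dots\le i_0\le p^s$ matches exactly one row of the table, and in particular that the finitely many small-distance rows coming from $i_4=0$ and the prefix-$p^s$ rows dovetail with the generic formula at the interval boundaries; there is no isolated hard estimate beyond the two inequalities above, which rest only on $(\ref{25})$, $(\ref{24})$ and Proposition \ref{p1}.
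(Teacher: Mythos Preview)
Your proposal is correct and follows the paper's route: partition $T$ into the windows $W_r$ determined by the ordered exponents, read off $d_H(\C_t)$ on each window from Proposition~\ref{p1}, bound $\min P_t$ there via $(\ref{24})$, and assemble the resulting case lemmas into the table; the paper states the theorem simply as the summary of those five lemmas. You are in fact more careful than the paper on one point: its proof of Lemma~\ref{28} (and hence of the five-factor analogue) only establishes $\min_{t\in W_r}P_td_H(\C_t)\ge(r{+}1)(\beta_r{+}2)p^{\tau_r}$ for $r\ge1$, with equality proved only on $W_0$, and then asserts the displayed minimum without supplying the matching upper bound---your paragraph locating a minimiser $t^{\ast}$ in the correct window is exactly the step the paper leaves implicit.
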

\begin{table}\tiny
\begin{center}
\caption{The minimum Hamming distances of $\C=<(x-1)^{i_{0}}(x-\omega)^{i_{1}}(x-\omega^2)^{i_{2}}(x-\omega^3)^{i_{3}}(x-\omega^4)^{i_{4}}>$, where $0\leq i_{4}\leq i_{3}\leq i_{2}\leq i_{1}\leq i_{0}\leq p^s$.}\label{Table:35}
\begin{tabular}{c|c|c|c|c|c}
    \hline  $i_{0}$ & $i_{1}$ & $i_{2}$ &$i_{3}$ &$i_{4}$ &$d_H(\C)$ \\ \hline
       $0$   &$0$ &$0$&$0$   &$0$  &$1$\\ \hline
        $0< i_{0}\leq p^{s-1}$ &$0\leq i_{1}\leq p^{s-1}$ & $0\leq i_{2}\leq p^{s-1}$ &$0\leq i_{3}\leq p^{s-1}$ &$i_{4}=0$   &$2$\\ \hline
$p^{s-1}<i_{0}\leq p^{s}$ & $i_{1}=0$ &$i_{2}=0$&$i_{3}=0$&$i_{4}=0$&$2$\\ \hline
$p^{s-1}<i_{0}\leq 2p^{s-1}$ & $0< i_{1}\leq p^{s}$ & $0\leq i_{2}\leq p^{s}$ &$0\leq i_{3}\leq p^{s}$ &$i_{4}=0$   &$3$\\ \hline
 $2p^{s-1}<i_{0}\leq p^{s}$ & $0< i_{1}\leq p^s$ &$i_{2}=0$&$i_{3}=0$&$i_{4}=0$&$3$\\ \hline
  $2p^{s-1}< i_{0}\leq 3p^{s-1}$ &$0< i_{1}\leq p^{s}$ & $0<i_{2}\leq p^{s}$ &$0\leq i_{3}\leq p^{s}$ &$i_{4}=0$   &$4$\\ \hline
    $3p^{s-1}<i_{0}\leq p^{s}$ & $0< i_{1}\leq p^s$ &$0< i_{2}\leq p^s$&$i_{3}=0$&$i_{4}=0$&$4$\\ \hline
    $3p^{s-1}< i_{0}\leq p^s$ &  $0< i_{1} \leq p^{s-1}$ & $0< i_{2} \leq p^{s-1}$ & $0< i_{3} \leq p^{s-1}$ &$i_{4}=0$&$4$\\ \hline
 $3p^{s-1}<i_{0}\leq p^{s}$ & $p^{s-1}< i_{1}\leq p^s$ &$0< i_{2}\leq p^s$&$0< i_{3}\leq p^s$&$i_{4}=0$&$5$\\ \hline
$\begin{array}{c}
  p^s-p^{s-\tau_{0}}\\+ \beta_{0} p^{s-\tau_{0}-1}\\+1\leq i_{0}
  \leq p^s\\-p^{s-\tau_{0}}+\\(\beta_{0}+1) p^{s-\tau_{0}-1}
\end{array}$
 &  $\begin{array}{c}
  p^s-p^{s-\tau_{1}}\\+ \beta_{1} p^{s-\tau_{1}-1}\\+1\leq i_{1}
  \leq p^s\\-p^{s-\tau_{1}}+\\(\beta_{1}+1) p^{s-\tau_{1}-1}
\end{array}$ &
 $\begin{array}{c}
  p^s-p^{s-\tau_{2}}\\+\beta_{2} p^{s-\tau_{2}-1}\\+1\leq i_{2}
  \leq p^s\\-p^{s-\tau_{2}}+\\(\beta_{2}+1) p^{s-\tau_{2}-1}
\end{array}$ & $\begin{array}{c}
  p^s-p^{s-\tau_{3}}\\+\beta_{3} p^{s-\tau_{3}-1}\\+1\leq i_{3}
  \leq p^s\\-p^{s-\tau_{3}}+\\(\beta_{3}+1) p^{s-\tau_{3}-1}
\end{array}$ &$\begin{array}{c}
  p^s-p^{s-\tau_{4}}\\+\beta_{4} p^{s-\tau_{4}-1}\\+1\leq i_{4}
  \leq p^s\\-p^{s-\tau_{4}}+\\(\beta_{4}+1) p^{s-\tau_{4}-1}
\end{array}$ &$d_1$
  \\ \hline
$i_{0}=p^s$&$\begin{array}{c}
  p^s-p^{s-\tau_{1}}\\+ \beta_{1} p^{s-\tau_{1}-1}\\+1\leq i_{1}
  \leq p^s\\-p^{s-\tau_{1}}+\\(\beta_{1}+1) p^{s-\tau_{1}-1}
\end{array}$ &$\begin{array}{c}
  p^s-p^{s-\tau_{2}}\\+\beta_{2} p^{s-\tau_{2}-1}\\+1\leq i_{2}
  \leq p^s\\-p^{s-\tau_{2}}+\\(\beta_{2}+1) p^{s-\tau_{2}-1}
\end{array}$ &$\begin{array}{c}
  p^s-p^{s-\tau_{3}}\\+\beta_{3} p^{s-\tau_{3}-1}\\+1\leq i_{3}
  \leq p^s\\-p^{s-\tau_{3}}+\\(\beta_{3}+1) p^{s-\tau_{3}-1}
\end{array}$ &$\begin{array}{c}
  p^s-p^{s-\tau_{4}}\\+\beta_{4} p^{s-\tau_{4}-1}\\+1\leq i_{4}
  \leq p^s\\-p^{s-\tau_{4}}+\\(\beta_{4}+1) p^{s-\tau_{4}-1}
\end{array}$ &$d_2$
  \\ \hline
$i_{0}=p^s$&$i_{1}=p^s$&$\begin{array}{c}
  p^s-p^{s-\tau_{2}}\\+\beta_{2} p^{s-\tau_{2}-1}\\+1\leq i_{2}
  \leq p^s\\-p^{s-\tau_{2}}+\\(\beta_{2}+1) p^{s-\tau_{2}-1}
\end{array}$ &$\begin{array}{c}
  p^s-p^{s-\tau_{3}}\\+\beta_{3} p^{s-\tau_{3}-1}\\+1\leq i_{3}
  \leq p^s\\-p^{s-\tau_{3}}+\\(\beta_{3}+1) p^{s-\tau_{3}-1}
\end{array}$ &$\begin{array}{c}
  p^s-p^{s-\tau_{4}}\\+\beta_{4} p^{s-\tau_{4}-1}\\+1\leq i_{4}
  \leq p^s\\-p^{s-\tau_{4}}+\\(\beta_{4}+1) p^{s-\tau_{4}-1}
\end{array}$ &$d_3$
  \\ \hline
$i_{0}=p^s$&$i_{1}=p^s$&$i_{2}=p^s$&$\begin{array}{c}
  p^s-p^{s-\tau_{3}}\\+\beta_{3} p^{s-\tau_{3}-1}\\+1\leq i_{3}
  \leq p^s\\-p^{s-\tau_{3}}+\\(\beta_{3}+1) p^{s-\tau_{3}-1}
\end{array}$ &$\begin{array}{c}
  p^s-p^{s-\tau_{4}}\\+\beta_{4} p^{s-\tau_{4}-1}\\+1\leq i_{4}
  \leq p^s\\-p^{s-\tau_{4}}+\\(\beta_{4}+1) p^{s-\tau_{4}-1}
\end{array}$ &$d_4$
  \\ \hline
$i_{0}=p^s$&$i_{1}=p^s$&$i_{2}=p^s$&$i_{3}=p^s$&$\begin{array}{c}
  p^s-p^{s-\tau_{4}}\\+\beta_{4} p^{s-\tau_{4}-1}\\+1\leq i_{4}
  \leq p^s\\-p^{s-\tau_{4}}+\\(\beta_{4}+1) p^{s-\tau_{4}-1}
\end{array}$ &$d_5$
  \\ \hline
$i_{0}=p^s$&$i_{1}=p^s$&$i_{2}=p^s$&$i_{3}=p^s$&$i_{4}=p^s$&$0$\\ \hline
\end{tabular}
\end{center}
where $d_1=\min\{ (\beta_{0}+2)p^{\tau_{0}}, 2(\beta_{1}+2)p^{\tau_{1}}, 3(\beta_{2}+2)p^{\tau_{2}}, 4(\beta_{3}+2)p^{\tau_{3}}, 5(\beta_{4}+2)p^{\tau_{4}} \}$, $d_2=\min\{2(\beta_{1}+2)p^{\tau_{1}}, 3(\beta_{2}+2)p^{\tau_{2}}, 4(\beta_{3}+2)p^{\tau_{3}}, 5(\beta_{4}+2)p^{\tau_{4}} \}$, $d_3=\min\{3(\beta_{2}+2)p^{\tau_{2}}, 4(\beta_{3}+2)p^{\tau_{3}}, 5(\beta_{4}+2)p^{\tau_{4}} \}$, $d_4=\min\{ 4(\beta_{3}+2)p^{\tau_{3}}, 5(\beta_{4}+2)p^{\tau_{4}} \}$, $d_5=5(\beta_{4}+2)p^{\tau_{4}}$.
\end{table}

\section{MDS cyclic codes of length $5p^s$  over $\f_q$ }

In this section,  we obtain  all  MDS cyclic codes of length $5p^s$  over $\f_q$, where $s$ is an integer and $p\geq 7$ is a prime.

\begin{lemma}[Singleton Bound]\label{g2}
If  an $[n,\kappa,d]$ linear code over $\f_q$ exists, then $d\leq n-\kappa+1$.
\end{lemma}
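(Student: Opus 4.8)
The final statement to be proved is the Singleton Bound, a classical result. The plan is to give the standard counting/projection argument.

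First I would set up the notation: let $\C$ be an $[n,\kappa,d]$ linear code over $\f_q$, so $|\C| = q^\kappa$. The idea is to puncture the code by deleting coordinates. Consider the projection map $\pi\colon \f_q^n \to \f_q^{n-d+1}$ that forgets the last $d-1$ coordinates, and restrict it to $\C$.

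Second, the key step is to show that $\pi|_{\C}$ is injective. Suppose $\pi(c) = \pi(c')$ for two codewords $c, c' \in \C$. Then $c - c'$ is a codeword which is zero on the first $n-d+1$ coordinates, hence $\mathrm{wt}_H(c-c') \leq d-1$. Since the minimum distance is $d$, this forces $c - c' = 0$, i.e. $c = c'$. Therefore $\pi|_{\C}$ is injective, so $|\C| = q^\kappa \leq |\f_q^{n-d+1}| = q^{n-d+1}$, which gives $\kappa \leq n-d+1$, equivalently $d \leq n - \kappa + 1$.

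There is essentially no obstacle here — this is a textbook argument and the only thing to be careful about is the edge case where $d-1 \geq n$ (which cannot happen for a code with at least two codewords) or $\kappa = 0$ (trivially true). I would not expect the authors to dwell on it; the proof is a few lines. Below is the write-up.

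\begin{proof}
Let $\C$ be an $[n,\kappa,d]$ linear code over $\f_q$, so that $|\C|=q^{\kappa}$. Consider the $\f_q$-linear projection
$$
\pi\colon \f_q^{\,n}\longrightarrow \f_q^{\,n-d+1},\qquad
(c_0,c_1,\dots,c_{n-1})\longmapsto (c_0,c_1,\dots,c_{n-d}),
$$
obtained by deleting the last $d-1$ coordinates, and restrict it to $\C$. If $c,c'\in\C$ satisfy $\pi(c)=\pi(c')$, then $c-c'\in\C$ vanishes in its first $n-d+1$ coordinates, so $wt_{H}(c-c')\le d-1<d$. Since the minimum Hamming distance of $\C$ equals $d$, this forces $c-c'=\mathbf{0}$, i.e. $c=c'$. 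Hence $\pi|_{\C}$ is injective, and therefore
$$
q^{\kappa}=|\C|\le |\f_q^{\,n-d+1}|=q^{\,n-d+1}.
$$
Comparing exponents yields $\kappa\le n-d+1$, that is, $d\le n-\kappa+1$.
\end{proof}
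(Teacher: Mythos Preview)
Your proof is correct and is the standard puncturing argument for the Singleton Bound. The paper itself does not supply a proof of this lemma at all; it is simply stated as a known result (the classical Singleton Bound, cf.\ \cite{V2003}) and then used, so there is nothing to compare against beyond noting that your argument is the textbook one.
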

A code for which equality holds in the Singleton Bound is called maximum distance
separable, abbreviated MDS.

\begin{theorem}
Suppose that $i_{4}\leq i_{3}\leq i_{2}\leq i_{1}\leq i_{0}$, then $\C=<(x-1)^{i_{0}}(x-\omega)^{i_{1}}(x-\omega^2)^{i_{2}}(x-\omega^3)^{i_{3}}(x-\omega^4)^{i_{4}}>$ over $\f_{q}$  is an MDS code if and only if one of the
following conditions holds:
\begin{itemize}
\item $i_{0}= i_{1}=i_{2}= i_{3}= i_{4}=0$. In this case, $d_{H}(\C)=1$.
\item $i_{0}= 1, i_{1}=i_{2}= i_{3}= i_{4}=0$. In this case, $d_{H}(\C)=2$.
\item $i_{0}= i_{1}=i_{2}= i_{3}=p^s,  i_{4}=p^{s}-1$. In this case, $d_{H}(\C)=5p^s$.
\end{itemize}
\end{theorem}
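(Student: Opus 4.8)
The plan is to exploit the Singleton bound together with the dimension formula and the explicit Hamming-distance tables from Theorem~\ref{77}. Write $\C=<(x-1)^{i_{0}}(x-\omega)^{i_{1}}(x-\omega^2)^{i_{2}}(x-\omega^3)^{i_{3}}(x-\omega^4)^{i_{4}}>$ with $i_{4}\leq i_{3}\leq i_{2}\leq i_{1}\leq i_{0}\leq p^{s}$. Its dimension is $\kappa=5p^{s}-(i_{0}+i_{1}+i_{2}+i_{3}+i_{4})$, so the Singleton bound reads
\begin{equation*}
d_{H}(\C)\leq i_{0}+i_{1}+i_{2}+i_{3}+i_{4}+1 .
\end{equation*}
The code is MDS precisely when this is an equality. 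First I would dispose of the two trivial corners: $i_{0}=\dots=i_{4}=0$ gives the whole space with $d_{H}=1=0+1$, hence MDS; and $i_{0}=\dots=i_{4}=p^{s}$ gives the zero code, which we exclude (or note $d_{H}=0$ forces equality only vacuously). From now on assume the sum $\Sigma:=i_{0}+\dots+i_{4}$ satisfies $1\leq\Sigma\leq 5p^{s}-1$.

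\textbf{The two remaining families.} I would split according to whether $\Sigma$ is ``small'' or ``large''. If $i_{0}\geq 1$ but the code is not the zero code, then $d_{H}(\C)\geq 2$ by Proposition~\ref{p1} and Lemma~2.1, while $d_{H}(\C)=2$ can only be attained (by the first rows of Table~\ref{Table:35}) when $0<i_{0}\leq p^{s-1}$ and $i_{1},i_{2},i_{3}\leq p^{s-1}$, $i_{4}=0$ — indeed when $i_{0}\geq1$ the minimum is at most $2$ whenever $C_{t}=\langle x-1\rangle$ occurs for some $t$, which forces $d_{H}=2$ unless $i_{0}$ alone is large. Matching $d_{H}=2$ against $\Sigma+1$ forces $\Sigma=1$, i.e. $i_{0}=1$, $i_{1}=i_{2}=i_{3}=i_{4}=0$; conversely this code has parameters $[5p^{s},5p^{s}-1,2]$ and is MDS. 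For the large end, I would use Lemma~2.15 and the last rows of Table~\ref{Table:35}: when $i_{0}=i_{1}=i_{2}=i_{3}=p^{s}$ and $p^{s}-p^{s-\tau_{4}}+\beta_{4}p^{s-\tau_{4}-1}+1\leq i_{4}\leq p^{s}-p^{s-\tau_{4}}+(\beta_{4}+1)p^{s-\tau_{4}-1}$, we have $d_{H}(\C)=5(\beta_{4}+2)p^{\tau_{4}}$ and $\Sigma+1=4p^{s}+i_{4}+1$. Setting $5(\beta_{4}+2)p^{\tau_{4}}=4p^{s}+i_{4}+1$ and using the range of $i_{4}$, the only solution is $\tau_{4}=s-1$, $\beta_{4}=p-3$, which gives $i_{4}=p^{s}-1$ and $d_{H}=5p^{s}$; this is the $[5p^{s},1,5p^{s}]$ repetition code, which is MDS.

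\textbf{Ruling out everything else.} The bulk of the argument is a case analysis showing no other $(i_{0},\dots,i_{4})$ works. For this I would argue by an upper bound on $d_{H}$ that beats $\Sigma+1$ strictly. The key quantitative fact, extracted from Table~\ref{Table:35} and Lemma~2.1, is that $d_{H}(\C)$ is always a product of the form $c\cdot(\beta+2)p^{\tau}$ with $c\in\{1,2,3,4,5\}$ and $\tau\leq s-1$ unless $C$ is the repetition code; such numbers are much smaller than a generic $\Sigma+1$ because each $i_{j}$ contributes roughly $P_{t}$-worth to $\Sigma$ but only a bounded multiplicative factor to $d_{H}$. Concretely I would show: if two of the exponents among $i_{1},i_{2},i_{3},i_{4}$ are nonzero, or if any single $i_{j}$ with $j\geq1$ lies strictly between $0$ and $p^{s}$ without the pattern of the last theorem row, then $d_{H}(\C)\leq 5\max_{j}\lceil i_{j}/p^{s-1}\rceil\,p^{s-1}$ (say), which is strictly less than $i_{0}+\dots+i_{4}+1$ once more than one exponent is substantial; and if only $i_{0}$ is nonzero with $i_{0}\geq2$, then $d_{H}=2<i_{0}+1$. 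Assembling these inequalities eliminates all intermediate cases and leaves exactly the three listed.

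\textbf{Main obstacle.} The hard part is the middle case analysis in the third paragraph: Table~\ref{Table:35} has many rows, and for each one must compare the piecewise-defined $d_{H}(\C)$ against $\Sigma+1$ over the full parameter ranges of the $\tau$'s and $\beta$'s. The cleanest route is probably to prove a single clean lemma — that for any nonzero proper $\C$ other than the repetition code one has $d_{H}(\C)\leq \tfrac{4}{5}(\Sigma+1)$ or some similar strict slack bound — rather than checking rows one at a time; establishing the sharp constant there, so that it still excludes $i_{0}=1$-type codes correctly while not excluding the genuine MDS ones, is where the real care is needed.
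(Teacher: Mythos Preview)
Your overall strategy --- compute the dimension, invoke Singleton, and compare $d_H(\C)$ from Table~\ref{Table:35} against $\Sigma+1:=i_0+\cdots+i_4+1$ row by row --- is exactly the paper's approach. Two points to flag.

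First, a minor slip in the repetition-code case: you write $\beta_4=p-3$, but the correct value is $\beta_4=p-2$. With $\tau_4=s-1$ and $\beta_4=p-2$ one gets $(\beta_4+2)p^{\tau_4}=p\cdot p^{s-1}=p^s$, and the admissible interval for $i_4$ collapses to the single point $p^s-1$, whence $d_H=5p^s$. (Your stated conclusion $i_4=p^s-1$ is right; only the parameter label is off. With $\beta_4=p-3$ you would instead get $i_4=p^s-2$ and $d_H=5(p-1)p^{s-1}\neq\Sigma+1$.)

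Second, and more substantively, the proposed ``clean lemma'' $d_H(\C)\leq\tfrac{4}{5}(\Sigma+1)$ for all nonzero proper $\C$ other than the repetition code is false as stated: the code with $i_0=1$, $i_1=\cdots=i_4=0$ has $d_H=2=\Sigma+1$, and it is one of the MDS codes you want to \emph{keep}. So no uniform strict-slack inequality over that class exists; you would have to exclude both nontrivial MDS codes from the hypothesis, at which point the lemma no longer saves any work. The paper simply does the row-by-row check you were hoping to avoid: it runs through eleven cases (the rows of Table~\ref{Table:35}), and in each of Cases~3 through~10 writes a short chain of inequalities of the shape
\[
\Sigma\;\geq\;5p^s-\sum_j p^{s-\tau_j}+\sum_j\beta_j p^{s-\tau_j-1}+\text{const}\;\geq\;\cdots\;>\;d_H(\C)-1,
\]
using only $p\geq\beta_j+2$ and $s-\tau_j\geq 1$. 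None of the individual cases is hard, and equality in Case~10 pinpoints exactly $\tau_4=s-1$, $\beta_4=p-2$. Your sketch is on the right track; just carry out the case analysis directly rather than searching for a single inequality to cover them all.
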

\begin{proof}
It is clear that  the dimension of code $\C$ is $5p^s-i_{0}- i_{1}-i_{2}- i_{3}- i_{4}$. By Singleton bound, $\C$ is an MDS code if and only if
$5p^s-i_{0}- i_{1}-i_{2}- i_{3}- i_{4}=5p^s-d_{H}(\C)+1$, i.e., $i_{0}+ i_{1}+i_{2}+i_{3}+ i_{4}=d_{H}(\C)-1$. The Hamming distances
of $\C$ for $0\leq i_{4}\leq i_{3}\leq i_{2}\leq i_{1}\leq i_{0}\leq p^s$ have been given in Theorem \ref{77}, then we can consider the conditions for the equations hold from the following eleven cases.

{\bf Case 1:} $i_{0}= i_{1}=i_{2}=i_{3}= i_{4}=0$. Then $d_{H}(\C)=1$, obviously, $0=d_{H}(\C)-1$.

{\bf Case 2:} $0<i_{0}\leq p^{s-1}, 0\leq i_{3}\leq i_{2}\leq i_{1}\leq i_{0}\leq p^{s-1}, i_{4}=0$, or
$ p^{s-1}< i_{0}\leq p^s, i_{1}=i_{2}=i_{3}=i_{4}=0$. Then $d_{H}(\C)=2$. Obviously, if $i_0=1$ and $i_{1}=i_{2}=i_{3}=i_{4}=0$, then $i_{0}+ i_{1}+i_{2}+i_{3}+ i_{4}=d_{H}(\C)-1$.

{\bf Case 3:} $p^{s-1}< i_{0}\leq 2p^{s-1}, 0< i_{1}\leq p^s, 0\leq i_{3}\leq i_{2}\leq p^s, i_{4}=0$, or
$2p^{s-1}< i_{0}\leq p^s, 0< i_{1}\leq p^s, i_{2}=i_{3}=i_{4}=0$. Then $d_{H}(\C)=3$, and $i_{0}+ i_{1}+i_{2}+i_{3}+ i_{4}>2$.

{\bf Case 4:} $p^{s-1}< i_{0}\leq 3 p^{s-1}, 0< i_{2}\leq i_{1}\leq p^s, 0\leq i_{3}\leq p^s, i_{4}=0$, or
$3p^{s-1}< i_{0}\leq p^s, 0< i_{2}\leq i_{1}\leq p^s, i_{3}=i_{4}=0$, or $3p^{s-1}< i_{0}\leq p^s, 0<i_{3}< i_{2}\leq i_{1}\leq p^{s-1}, i_{4}=0$.  Then $d_{H}(\C)=4$, and $i_{0}+ i_{1}+i_{2}+i_{3}+ i_{4}>3$.

{\bf Case 5:} $3p^{s-1}< i_{0}\leq p^{s}, p^{s-1}< i_{1}\leq p^s$,
 $0< i_{3}\leq i_{2}\leq p^s$ and $i_{4}=0$.  Then $d_{H}(\C)=5$, and $i_{0}+ i_{1}+i_{2}+i_{3}+ i_{4}>4$.

 {\bf Case 6:} $p^s-p^{s-\tau_{0}}+\beta_{0} p^{s-\tau_{0}-1}+1\leq i_{0}\leq p^s-p^{s-\tau_{0}}+(\beta_{0}+1) p^{s-\tau_{0}-1}$, $p^s-p^{s-\tau_{1}}+\beta_{1}p^{s-\tau_{1}-1}+1\leq i_{1}\leq p^s-p^{s-\tau_{1}}+(\beta_{1}+1) p^{s-\tau_{1}-1}$, $p^s-p^{s-\tau_{2}}+\beta_{2} p^{s-\tau_{2}-1}+1\leq i_{2}\leq p^s-p^{s-\tau_{2}}+(\beta_{2}+1) p^{s-\tau_{2}-1}$, $p^s-p^{s-\tau_{3}}+\beta_{3} p^{s-\tau_{3}-1}+1\leq i_{3}\leq p^s-p^{s-\tau_{3}}+(\beta_{3}+1) p^{s-\tau_{3}-1}$ and $p^s-p^{s-\tau_{4}}+\beta_{4} p^{s-\tau_{4}-1}+1\leq i_{4}\leq p^s-p^{s-\tau_{4}}+(\beta_{4}+1) p^{s-\tau_{4}-1}$,  where $0 \leq \beta_{0}, \beta_{1}, \beta_{2}, \beta_{3}, \beta_{4}\leq p -2$ and $0\leq\tau_{4}\leq\tau_{3}\leq\tau_{2} \leq\tau_{1} \leq\tau_{0}\leq s -1$. Then $d_H(\C) =\min\{ (\beta_{0}+2)p^{\tau_{0}}, 2(\beta_{1}+2)p^{\tau_{1}}, 3(\beta_{2}+2)p^{\tau_{2}}, 4(\beta_{3}+2)p^{\tau_{3}}, 5(\beta_{4}+2)p^{\tau_{4}} \}$, and

     \begin{eqnarray*}
    &&i_{0}+ i_{1}+i_{2}+i_{3}+ i_{4}\\
     &\geq&  5p^s-\sum^{4}\limits_{i=0}p^{s-\tau_{i}}+\sum^{4}\limits_{i=0}\beta_{i}p^{s-\tau_{i}-1}+5 \\
     &=& 5(p^{s-\tau_{0}}(p^{\tau_{0}}-1)+\beta_{0} p^{s-\tau_{0}-1}+1)\\ &~&(\mbox{equality when }
     \beta_{0}=\beta_{1}=\beta_{2}=\beta_{3}=\beta_{4}, \tau_{4}=\tau_{3}=\tau_{2}=\tau_{1}=\tau_{0})\\
      &\geq& 5p(p^{\tau_{0}}-1)+5\beta_{0}+5~ (\mbox{equality when } \tau_{0}=s-1) \\
     &\geq& 5(\beta_{0}+2)(p^{\tau_{0}}-1)+5\beta_{0}+5~( \mbox{equality  when } p=\beta_{0}+2) \\
      &=& (\beta_{0}+2)p^{\tau_{0}}+4(\beta_{0}+2)p^{\tau_{0}}-5 \\
      &>&(\beta_{0}+2)p^{\tau_{0}}-1\\
      &\geq& \min\{ (\beta_{0}+2)p^{\tau_{0}}, 2(\beta_{1}+2)p^{\tau_{1}}, 3(\beta_{2}+2)p^{\tau_{2}}, 4(\beta_{3}+2)p^{\tau_{3}}, 5(\beta_{4}+2)p^{\tau_{4}} \}-1\\
       &=& d_H(\C)-1.
 \end{eqnarray*}
Therefore, there is no MDS code.

 {\bf{Case 7:}} $i_{0}=p^s$, $p^s-p^{s-\tau_{1}}+\beta_{1}p^{s-\tau_{1}-1}+1\leq i_{1}\leq p^s-p^{s-\tau_{1}}+(\beta_{1}+1) p^{s-\tau_{1}-1}$, $p^s-p^{s-\tau_{2}}+\beta_{2} p^{s-\tau_{2}-1}+1\leq i_{2}\leq p^s-p^{s-\tau_{2}}+(\beta_{2}+1) p^{s-\tau_{2}-1}$, $p^s-p^{s-\tau_{3}}+\beta_{3} p^{s-\tau_{3}-1}+1\leq i_{3}\leq p^s-p^{s-\tau_{3}}+(\beta_{3}+1) p^{s-\tau_{3}-1}$ and $p^s-p^{s-\tau_{4}}+\beta_{4} p^{s-\tau_{4}-1}+1\leq i_{4}\leq p^s-p^{s-\tau_{4}}+(\beta_{4}+1) p^{s-\tau_{4}-1}$,  where $0 \leq \beta_{1}, \beta_{2}, \beta_{3}, \beta_{4}\leq p -2$ and $0\leq\tau_{4}\leq\tau_{3}\leq\tau_{2} \leq\tau_{1}\leq s -1$. Then $d_H(\C) =\min\{ 2(\beta_{1}+2)p^{\tau_{1}}, 3(\beta_{2}+2)p^{\tau_{2}}, 4(\beta_{3}+2)p^{\tau_{3}}, 5(\beta_{4}+2)p^{\tau_{4}} \}$, and

    \begin{eqnarray*}
     &&i_{0}+ i_{1}+i_{2}+i_{3}+ i_{4}\\ &\geq & 5p^s-\sum^{4}\limits_{i=1}p^{s-\tau_{i}}+\sum^{4}\limits_{i=1}\beta_{i}p^{s-\tau_{i}-1}+4 \\
    &=&p^s+ 4(p^{s-\tau_{1}}(p^{\tau_{1}}-1)+\beta_{1} p^{s-\tau_{1}-1}+1)\\
     &~&(\mbox{equality when }
      \beta_{1}=\beta_{2}=\beta_{3}=\beta_{4}, \tau_{4}=\tau_{3}=\tau_{2}=\tau_{1})\\
      &\geq& p^s+ 4p(p^{\tau_{1}}-1)+4\beta_{1}+4~(\mbox{equality when } \tau_{1}=s-1) \\
      &\geq& p^s+4(\beta_{1}+2)(p^{\tau_{1}}-1)+4\beta_{1}+4~(\mbox{equality when } p=\beta_{1}+2) \\
      &=& p^s+4(\beta_{1}+2)p^{\tau_{1}}-4\\
      &>&2(\beta_{1}+2)p^{\tau_{1}}-1\\
       &\geq& \min\{ 2(\beta_{1}+2)p^{\tau_{1}}, 3(\beta_{2}+2)p^{\tau_{2}}, 4(\beta_{3}+2)p^{\tau_{3}}, 5(\beta_{4}+2)p^{\tau_{4}} \}-1\\
       &=& d_H(\C)-1.
 \end{eqnarray*}
 Therefore, there is no MDS code.

  {\bf Case 8:} $i_{0}=i_{1}=p^s$,  $p^s-p^{s-\tau_{2}}+\beta_{2} p^{s-\tau_{2}-1}+1\leq i_{2}\leq p^s-p^{s-\tau_{2}}+(\beta_{2}+1) p^{s-\tau_{2}-1}$, $p^s-p^{s-\tau_{3}}+\beta_{3} p^{s-\tau_{3}-1}+1\leq i_{3}\leq p^s-p^{s-\tau_{3}}+(\beta_{3}+1) p^{s-\tau_{3}-1}$ and $p^s-p^{s-\tau_{4}}+\beta_{4} p^{s-\tau_{4}-1}+1\leq i_{4}\leq p^s-p^{s-\tau_{4}}+(\beta_{4}+1) p^{s-\tau_{4}-1}$,  where $0 \leq \beta_{2}, \beta_{3}, \beta_{4}\leq p -2$ and $0\leq\tau_{4}\leq\tau_{3}\leq\tau_{2}\leq s -1$. Then $d_H(\C) =\min\{ 3(\beta_{2}+2)p^{\tau_{2}}, 4(\beta_{3}+2)p^{\tau_{3}}, 5(\beta_{4}+2)p^{\tau_{4}} \}$, and

    \begin{eqnarray*}
     &&i_{0}+ i_{1}+i_{2}+i_{3}+ i_{4}\\ &\geq & 5p^s-\sum^{4}\limits_{i=2}p^{s-\tau_{i}}+\sum^{4}\limits_{i=2}\beta_{i}p^{s-\tau_{i}-1}+3 \\
     &=&2p^s+ 3(p^{s-\tau_{2}}(p^{\tau_{2}}-1)+\beta_{2} p^{s-\tau_{2}-1}+1)\\&~~~~~~~~~~&(\mbox{equality when }
      \beta_{2}=\beta_{3}=\beta_{4}, \tau_{4}=\tau_{3}=\tau_{2})\\
      &\geq& 2p^s+ 3p(p^{\tau_{2}}-1)+3\beta_{2}+3~(\mbox{equality when } \tau_{2}=s-1) \\
      &\geq& 2p^s+3(\beta_{2}+2)(p^{\tau_{2}}-1)+3\beta_{2}+3~(\mbox{equality when } p=\beta_{2}+2) \\
      &=& 2p^s+3(\beta_{2}+2)p^{\tau_{2}}-3\\
      &>&3(\beta_{2}+2)p^{\tau_{2}}-1\\
       &\geq& \min\{ 3(\beta_{2}+2)p^{\tau_{2}}, 4(\beta_{3}+2)p^{\tau_{3}}, 5(\beta_{4}+2)p^{\tau_{4}} \}-1\\
       &=& d_H(\C)-1.
 \end{eqnarray*}
 Therefore, there is no MDS code.

 {\bf{Case 9:}} $i_{0}=i_{1}=i_{2}=p^s$, $p^s-p^{s-\tau_{3}}+\beta_{3} p^{s-\tau_{3}-1}+1\leq i_{3}\leq p^s-p^{s-\tau_{3}}+(\beta_{3}+1) p^{s-\tau_{3}-1}$ and $p^s-p^{s-\tau_{4}}+\beta_{4} p^{s-\tau_{4}-1}+1\leq i_{4}\leq p^s-p^{s-\tau_{4}}+(\beta_{4}+1) p^{s-\tau_{4}-1}$,  where $0 \leq \beta_{3}, \beta_{4}\leq p -2$ and $0\leq\tau_{4}\leq\tau_{3}\leq s -1$. Then $d_H(\C) =\min\{  4(\beta_{3}+2)p^{\tau_{3}}, 5(\beta_{4}+2)p^{\tau_{4}} \}$, and

    \begin{eqnarray*}
     &&i_{0}+ i_{1}+i_{2}+i_{3}+ i_{4}\\ &\geq & 5p^s-\sum^{4}\limits_{i=3}p^{s-\tau_{i}}+\sum^{4}\limits_{i=3}\beta_{i}p^{s-\tau_{i}-1}+2 \\
     &=&3p^s+ 2(p^{s-\tau_{3}}(p^{\tau_{3}}-1)+\beta_{3} p^{s-\tau_{3}-1}+1)~(\mbox{equality when  }
     \beta_{3}=\beta_{4}, \tau_{4}=\tau_{3})\\
      &\geq& 3p^s+ 2p(p^{\tau_{3}}-1)+2\beta_{3}+2 ~(\mbox{equality when } \tau_{3}=s-1) \\
      &\geq& 3p^s+2(\beta_{3}+2)(p^{\tau_{3}}-1)+2\beta_{3}+2 ~(\mbox{equality when } p=\beta_{3}+2) \\
      &=& 3p^s+2(\beta_{3}+2)p^{\tau_{3}}-2\\
      &>&4(\beta_{3}+2)p^{\tau_{3}}-1\\
       &\geq& \min\{ 4(\beta_{3}+2)p^{\tau_{3}}, 5(\beta_{4}+2)p^{\tau_{4}} \}-1\\
       &=& d_H(\C)-1.
 \end{eqnarray*}
 Therefore, there is no MDS code.

 {\bf{Case 10:}} $i_{0}=i_{1}=i_{2}=i_{3}=p^s$,  $p^s-p^{s-\tau_{4}}+\beta_{4} p^{s-\tau_{4}-1}+1\leq i_{4}\leq p^s-p^{s-\tau_{4}}+(\beta_{4}+1) p^{s-\tau_{4}-1}$,  where $0 \leq \beta_{4}\leq p -2$ and $0\leq\tau_{4}\leq s -1$. Then $d_H(\C) =   5(\beta_{4}+2)p^{\tau_{4}}$, and

    \begin{eqnarray*}
    && i_{0}+ i_{1}+i_{2}+i_{3}+ i_{4} \\&\geq & 5p^s-p^{s-\tau_{4}}+\beta_{4}p^{s-\tau_{4}-1}+1 \\
     &=&4p^s+ (p^{s-\tau_{4}}(p^{\tau_{4}}-1)+\beta_{4} p^{s-\tau_{4}-1}+1)\\
      &\geq& 4p^{\tau_{4}+1}+ p(p^{\tau_{4}}-1)+\beta_{4}+1~ (\mbox{equality when } \tau_{4}=s-1) \\
      &\geq& 4(\beta_{4}+2)p^{\tau_{4}}+(\beta_{4}+2)(p^{\tau_{4}}-1)+\beta_{4}+1 ~(\mbox{equality when } p=\beta_{4}+2) \\
      &=& 5(\beta_{4}+2)p^{\tau_{4}}-1\\
      &=& d_H(\C)-1.
 \end{eqnarray*}
 Therefore, $i_{0}+ i_{1}+i_{2}+i_{3}+ i_{4}\geq 5(\beta_{4}+2)p^{\tau_{4}}-1$ with equality when $\beta_{4}=p-2, \tau_{4}=s-1$. (In this case $i_{0}=i_{1}=i_{2}=i_{3}=p^s$ and $i_{4}=p^s-1, d_H(\C)=5p^s$).

{\bf{Case 11:}} $i_{0}=i_{1}=i_{2}=i_{3}=i_{4}=p^s$. Then $d_H(\C)=0$, obviously, $5p^s>d_H(\C)-1$.

\end{proof}

By the similar arguments as above, we can obtain the following theorems immediately.
\begin{theorem}
Let $\C=<(x-1)^i{f_{1}(x)}^j{f_{2}(x)}^k>$, then $\C$ is an MDS code if and only if one of the
following conditions holds:
\begin{itemize}
\item $i = j = k=0$. In this case, $d_{H}(\C)=1$.
\item $i=1, j=k=0$. In this case, $d_{H}(\C)=2$.
\item  $i=p^s-1, j=k=p^s$. In this case, $d_{H}(\C)=5p^s$.
\end{itemize}
\end{theorem}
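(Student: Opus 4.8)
The plan is to run the same Singleton-bound argument used for the preceding theorem. First I would record the dimension of $\C$: since $\deg f_{1}(x)=\deg f_{2}(x)=2$, the generator polynomial $(x-1)^{i}f_{1}(x)^{j}f_{2}(x)^{k}$ has degree $i+2j+2k$, so $\C$ is a $[5p^{s},\,5p^{s}-i-2j-2k]$ cyclic code; by Lemma~\ref{g2} it is MDS precisely when $d_{H}(\C)=i+2j+2k+1$, that is, when $i+2j+2k=d_{H}(\C)-1$. (If instead the generator has the shape $f_{1}(x)^{i}(x-1)^{j}f_{2}(x)^{k}$ its degree is $2i+j+2k$ and the criterion becomes $2i+j+2k=d_{H}(\C)-1$; interchanging $f_{1}$ and $f_{2}$ changes nothing.) Since the Hamming distances of all these codes are tabulated, according to the relative sizes of $i,j,k$ and the placement of $f_{1},f_{2}$, in Tables~\ref{Table:33}, \ref{Table:331}, \ref{Table:34} and \ref{Table:341}, the proof reduces to inspecting each row of those tables and deciding whether the displayed distance meets the numerical identity above.

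Second, I would dispose of the rows carrying a small explicit distance. The row with $d_{H}(\C)=1$ forces $i=j=k=0$, where indeed $i+2j+2k=0=d_{H}(\C)-1$; this is the first MDS code. In the rows with $d_{H}(\C)=2$ the identity reads $i+2j+2k=1$, and the only admissible solution is $i=1,\ j=k=0$ (the other $d_{H}(\C)=2$ row already forces $i>p^{s-1}\ge 1$, and in the $f_{1}$-leading shape $2i+j+2k=1$ has no solution), giving the second MDS code. In every row with $d_{H}(\C)\in\{3,4,5\}$ at least one of $i,j,k$ already exceeds $p^{s-1}\ge 1$, or two of them are positive, so $i+2j+2k\ge d_{H}(\C)>d_{H}(\C)-1$ (similarly for $2i+j+2k$), and no MDS code occurs there.

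Third --- and here lies the main obstacle --- I would treat the generic rows, in which $i,j,k$ run over nested intervals $[\,p^{s}-p^{s-\tau}+\beta p^{s-\tau-1}+1,\ p^{s}-p^{s-\tau}+(\beta+1)p^{s-\tau-1}\,]$ with parameters $(\tau_{0},\beta_{0}),(\tau_{1},\beta_{1}),(\tau_{2},\beta_{2})$ obeying $\tau_{2}\le\tau_{1}\le\tau_{0}$, and $d_{H}(\C)=\min\{(\beta_{0}+2)p^{\tau_{0}},\,2(\beta_{1}+2)p^{\tau_{1}},\,4(\beta_{2}+2)p^{\tau_{2}}\}$ (or the analogous minimum with multipliers $1,3,4$ or $1,3,5$, possibly with some of $i,j,k$ saturated at $p^{s}$). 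Exactly as in Cases~6--10 of the previous proof, I would bound $i+2j+2k$ from below by substituting the left endpoints of the intervals, then chain the simplifications ``all the $\tau$'s equal'', ``the common $\tau$ equals $s-1$'' and ``$p=\beta+2$'', together with the elementary estimate $(\beta+2)p^{\tau}\le p^{s}$, to arrive at $i+2j+2k\ge d_{H}(\C)-1$. Keeping track of the equality conditions along this chain shows the identity $i+2j+2k=d_{H}(\C)-1$ survives only when all exponents are maximal, namely $i=p^{s}-1,\ j=k=p^{s}$ (forced by $\beta_{2}=p-2$ and $\tau_{2}=s-1$), which is the third MDS code with $d_{H}(\C)=5p^{s}$; every other generic row yields a strict inequality, hence no MDS code. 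The delicate bookkeeping is to identify which of the three terms in the minimum is active in a given row and to verify that the coefficients ($1,2,2$ or $2,1,2$) with which $i,j,k$ enter the dimension are small enough against the multipliers ($2,4$, or $3,5$) in the distance formula that the strict inequality never fails outside that single extreme configuration.
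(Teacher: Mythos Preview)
Your proposal is correct and follows exactly the route the paper intends: the paper itself offers no detailed proof of this theorem, stating only that it follows ``by the similar arguments as above'' from the proof of the five-factor case, and your outline is precisely that adaptation --- set up the Singleton identity $i+2j+2k=d_{H}(\C)-1$ (or $2i+j+2k$ for the $f_{1}$-leading shape), dispose of the explicit-distance rows, and run the same chain of inequalities through the generic rows of Tables~\ref{Table:33}--\ref{Table:341} to isolate the single surviving equality configuration $i=p^{s}-1,\ j=k=p^{s}$. The bookkeeping you flag as delicate is genuine but routine, and the coefficient-versus-multiplier comparison you describe goes through in each sub-case.
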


\begin{theorem}
Let $\C=<(x-1)^i{\phi_{5}(x)}^j>$, then  $\C$  is an MDS code if and only if one of the
following conditions holds:
\begin{itemize}
\item $i = j=0$. In this case, $d_{H}(\C)=1$.
\item $i=1,j=0$. In this case, $d_{H}(\C)=2$.
\end{itemize}
\end{theorem}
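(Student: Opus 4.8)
The plan is to follow the template of the first MDS theorem of this section, now with the factorization $x^5-1=(x-1)\phi_5(x)$ that holds under $5\nmid(q^2-1)$. First I would note that $\C=\langle(x-1)^i\phi_5(x)^j\rangle$ is generated by a polynomial of degree $i+4j$, so $\dim_{\f_q}\C=5p^s-i-4j$. By the Singleton bound (Lemma~\ref{g2}), $\C$ is MDS exactly when $d_H(\C)=5p^s-\dim_{\f_q}\C+1$, i.e.
$$i+4j=d_H(\C)-1 .$$
Since Tables~\ref{Table:31} and~\ref{Table:32} already list $d_H(\C)$ for all $0\le j\le i\le p^s$ and all $0\le i\le j\le p^s$ respectively, and these two ranges cover every pair $(i,j)$, the whole argument reduces to checking, row by row, whether the tabulated $d_H(\C)$ can equal $i+4j+1$. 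This at once proves both directions of the ``iff''.

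For the finitely many boundary rows the check is immediate. The row $i=j=0$ gives $0=1-1$, so $(0,0)$ is MDS with $d_H(\C)=1$; the row $0<i\le p^s,\ j=0$ forces $i=d_H(\C)-1=1$, giving the MDS code $(1,0)$ with $d_H(\C)=2$; and the rows $i=0$ with $0<j\le p^{s-1}$, with $p^{s-1}<j\le 2p^{s-1}$, and with $2p^{s-1}<j\le p^s$ would require $4j=1,\,2,\,3$ respectively, which is impossible, so $j>i=0$ contributes no MDS code. The degenerate row $i=j=p^s$ gives $\C=\{\mathbf 0\}$ and is excluded.

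The substance lies in the ``generic'' rows, where $i$ ranges over an interval $p^s-p^{s-\tau_0}+\beta_0p^{s-\tau_0-1}+1\le i\le p^s-p^{s-\tau_0}+(\beta_0+1)p^{s-\tau_0-1}$ and $j$ over a similar interval (with the labels $(\beta_0,\tau_0)\leftrightarrow(\beta_1,\tau_1)$ interchanged between Table~\ref{Table:31} and Table~\ref{Table:32}), and where $d_H(\C)$ equals $\min\{(\beta_0+2)p^{\tau_0},\,2(\beta_1+2)p^{\tau_1}\}$, respectively $\min\{(\beta_0+2)p^{\tau_0},\,4(\beta_1+2)p^{\tau_1}\}$, together with the analogous rows in which $i=p^s$ or $j=p^s$. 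Here I would insert the interval lower bounds for $i$ and $j$ into $i+4j$ and then reproduce the chain of estimates used in Cases~6--10 of the first MDS proof above, now with the coefficient pattern $1,4$ on $i,j$: the sum $i+4j$ is minimised when all the $\beta$'s coincide, all the $\tau$'s coincide, $\tau=s-1$ and $p=\beta+2$, and already at that extreme point one gets $i+4j\ge d_H(\C)$, hence $i+4j>d_H(\C)-1$; since $p\ge7$, the window lower bound on the smaller of $i,j$ (which is $\ge1$) together with the other window bound in fact makes the inequality strict well before the extreme point is reached. Thus none of the generic rows, and none of the $i=p^s$ or $j=p^s$ rows, can give an MDS code, so the two boundary cases above are the only ones.

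The only real obstacle is bookkeeping: one must keep track that $\phi_5$ contributes weight $4$ both to the codimension (so $j$ enters the relation with coefficient $4$) and to the simple-root distances $d_H(\C_t)$ through Proposition~\ref{p1}, and one must respect the notational swap of $(\beta_0,\tau_0)$ and $(\beta_1,\tau_1)$ between Tables~\ref{Table:31} and~\ref{Table:32}. Once the coefficient $4$ is carried correctly, the decisive inequality $i+4j\ge d_H(\C)$ in the generic rows is obtained by the same routine telescoping of the window bounds as in Cases~6--10 above, so no genuinely new difficulty appears.
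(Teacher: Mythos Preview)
Your proposal is correct and is exactly the approach the paper intends: the paper itself gives no separate argument here, only the sentence ``By the similar arguments as above, we can obtain the following theorems immediately,'' i.e.\ it asks the reader to rerun the case analysis of the first MDS theorem with the codimension formula $i+4j$ and the distances from Tables~\ref{Table:31}--\ref{Table:32}. Your row-by-row check, the handling of the boundary rows, and the reduction of the generic windows to the telescoping estimate $l\ge(\beta+2)p^{\tau}-1$ (so that the extra positive summand forces $i+4j>d_H(\C)-1$) reproduce that template faithfully.
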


We now can list all repeated-root cyclic MDS codes of length $5p^s$ over $\f_q$ as follows.
\begin{theorem}
Let $\C$ be a repeated-root cyclic code of length $5p^s$ with generator polynomial $g(x)$. Then $\C$ is an MDS code if
and only if
\begin{itemize}
\item $deg(g(x))=0$. In this case, $d_{H}(\C)=1$.
\item $deg(g(x))=1$. In this case, $d_{H}(\C)=2$.
\item $deg(g(x))=5p^s-1$. In this case, $d_{H}(\C)=5p^s$.
\end{itemize}
\end{theorem}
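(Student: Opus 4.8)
The plan is to combine the dimension formula for cyclic codes, the Singleton Bound (Lemma~\ref{g2}), and the three classification theorems established just above. First I would record that a cyclic code $\C=\langle g(x)\rangle$ of length $5p^s$ has dimension $\kappa=5p^s-\deg g(x)$ and parity-check polynomial $h(x)=(x^{5p^s}-1)/g(x)$ of degree $5p^s-\deg g(x)$; hence by Lemma~\ref{g2}, $\C$ is MDS if and only if $d_H(\C)=\deg g(x)+1$. I would then split into the three cases for the factorization of $x^5-1$ over $\f_q$ recalled above, so that $\C$ takes one of the three parametrized generator forms and the corresponding classification theorem applies.

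For the ``if'' direction I would treat the three admissible degrees uniformly. If $\deg g(x)=0$ then $g=1$, $\C=\f_q[x]/(x^{5p^s}-1)$, and $d_H(\C)=1=\deg g(x)+1$. If $\deg g(x)=1$ then $g(x)=x-\zeta$ for a fifth root of unity $\zeta\in\f_q$ (necessarily $\zeta=1$ when $5\nmid(q^2-1)$ or $5\mid(q+1)$, and any fifth root of unity when $5\mid(q-1)$); then $\C=\{c(x):c(\zeta)=0\}$ contains the weight-$2$ word $x-\zeta$ but no nonzero monomial, so $d_H(\C)=2=\deg g(x)+1$. If $\deg g(x)=5p^s-1$ then $\deg h(x)=1$, so $h(x)=x-\zeta$ for such a $\zeta$, the code $\C$ is one-dimensional, and its generating codeword $g(x)=(x^{5p^s}-1)/(x-\zeta)=\sum_{k=0}^{5p^s-1}\zeta^{-1-k}x^{k}$ has all $5p^s$ coefficients nonzero; hence $d_H(\C)=5p^s=\deg g(x)+1$. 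In each case $\C$ is MDS.

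For the ``only if'' direction I would invoke the three classification theorems above: since each is an ``if and only if'', the MDS codes are exactly the parameter tuples enumerated there, and a direct inspection shows that every such tuple has $\deg g(x)\in\{0,1,5p^s-1\}$: it is the all-zero tuple (so $g=1$), the tuple with a single exponent equal to $1$ and the rest $0$ (so $g=x-1$), or the tuple for which $h(x)$ is a single linear factor (so $\deg g(x)=5p^s-1$). Assembling the three cases gives the characterization, with the claimed distances. The step that needs the most care is the family $\deg g(x)=5p^s-1$: one must check that $\deg h(x)=1$ forces $h(x)=x-\zeta$ with $\zeta^{5}=1$ and that the resulting one-dimensional code has minimum distance exactly $5p^s$; this is where an error is easiest to make, but the one-line weight computation for $(x^{5p^s}-1)/(x-\zeta)$ above settles it, so beyond that the argument reduces to bookkeeping against the three preceding theorems and the inequality chains already carried out in them.
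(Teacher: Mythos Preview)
Your proposal is correct and follows the same approach as the paper: Theorem~4.4 is stated there without a separate proof, as the immediate synthesis of the three preceding classification theorems (one for each factorization type of $x^5-1$), which is exactly your ``only if'' argument. Your direct elementary verification of the ``if'' direction---exhibiting a weight-$2$ word when $\deg g=1$ and computing $(x^{5p^s}-1)/(x-\zeta)=\sum_{k=0}^{5p^s-1}\zeta^{-1-k}x^k$ when $\deg g=5p^s-1$---is a welcome addition not spelled out in the paper.
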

\section{quantum synchronizable codes from $5p^s$-length cyclic codes}

 Luo and Ma \cite{L2018} showed that repeated-root cyclic codes
are indeed a good ingredient for constructing synchronizable
codes due to their advantageous dual-containing properties. In this section, we take quantum synchronizable codes from repeated-root
cyclic codes of length $5p^s$.

\begin{theorem}[\cite{L2018}]\label{lm}
Let $\C_1=<\prod\limits_{t\in T_l}{(M_t(x))}^{i_t}>$ and $\C_2=<\prod\limits_{t\in T_l}{(M_t(x))}^{j_t}>$ be cyclic codes of length $lp^s$, where $i_t , j_t$ are
integers in the range $0\leq j_t< i_t\leq p^s$, $i_t+i_{-t}\leq p^s$ and $j_t+j_{-t}\leq p^s$ for all $t\in T_l$. If there exists an integer $r\in T_l$ with $\gcd(r,l)=1$ satisfying either $i_r - j_r>p^{s-1}$ or $i_r - j_r>0$ and  $i_{r^{'}} - j_{r^{'}}>p^{s-1}$ for some $r^{'}\neq r\in T_l$, then for any pair of non-negative integers $a_l, a_r$ such that $a_l+a_r<lp^s$, there exists an $(a_l, a_r)-[[lp^s+a_l+a_r, lp^s-2\sum\limits_{t\in T_l}i_{t}\cdot|\C_{t,l}|]]_q$ quantum synchronizable code.
\end{theorem}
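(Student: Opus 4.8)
Because this statement is quoted from \cite{L2018}, the natural plan is to recall the construction behind it and to observe that the present hypotheses are exactly the ones making it run; I sketch the argument for completeness. The starting point is the general recipe for producing a quantum synchronizable code from a nested pair of cyclic codes: if $\C_1=\langle g_1(x)\rangle\subseteq\C_2=\langle g_2(x)\rangle$ are cyclic of length $n=lp^s$ over $\f_q$ with $\C_1$ dual-containing (with respect to the Euclidean inner product), then one has a CSS code $[[n,\,2\dim\C_1-n]]_q$, and writing $f(x)=g_1(x)/g_2(x)$ one attaches a synchronization register modeled on $\C_2/\C_1\cong\f_q[x]/(f(x))$, on which a block misalignment by $\delta$ positions acts as multiplication by $x^{\delta}$. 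Since $\gcd(x,f(x))=1$ (as $f(x)\mid x^n-1$), all misalignments $-a_l\le\delta\le a_r$ give distinct, correctable register states as soon as $a_l+a_r<\mu$, where $\mu=\mathrm{ord}_{f(x)}(x)$ is the least positive integer with $f(x)\mid x^{\mu}-1$. Granting this recipe, the proof reduces to (i) checking the structural hypotheses, (ii) reading off the parameters, and (iii) proving $\mu=lp^s$.

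For (i): $\C_1\subseteq\C_2$ holds because $j_t<i_t$ for every $t$, hence $g_2\mid g_1$; and $\C_1$ is dual-containing because, using $x^{lp^s}-1=\prod_{t\in T_l}M_t(x)^{p^s}$ together with the fact that the monic reciprocal of $M_t(x)$ is $M_{-t}(x)$, the dual of $\C_1$ is generated by $\prod_{t}M_t(x)^{p^s-i_{-t}}$, so $\C_1^{\perp}\subseteq\C_1$ is equivalent to $i_t\le p^s-i_{-t}$ for all $t$, that is, to $i_t+i_{-t}\le p^s$; the hypothesis $j_t+j_{-t}\le p^s$ likewise makes $\C_2$ dual-containing (and is in any case forced by the other conditions). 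For (ii): $\dim\C_1=lp^s-\deg g_1=lp^s-\sum_{t\in T_l}i_t\,|\C_{t,l}|$, whence $2\dim\C_1-n=lp^s-2\sum_{t\in T_l}i_t\,|\C_{t,l}|$, which is the claimed dimension, while the length $lp^s+a_l+a_r$ and the pair $(a_l,a_r)$ are read off directly from the recipe.

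The crux is (iii): computing $\mu=\mathrm{ord}_{f(x)}(x)$ for $f(x)=\prod_{t\in T_l}M_t(x)^{\,i_t-j_t}$. Since $f(x)\mid g_1(x)\mid x^{lp^s}-1$ one always has $\mu\mid lp^s$, so it suffices to show $lp^s\mid\mu$. Here I would use the characteristic-$p$ identity $x^{e'p^{v}}-1=(x^{e'}-1)^{p^{v}}$ for $p\nmid e'$: a root of $x^{e'}-1$ is simple, so $(x-\zeta)$ occurs in $x^{e}-1$ with $e=e'p^{v}$ to exact multiplicity $p^{v}$, and for $\zeta=\omega^{r}$ the relation $\omega^{re}=1$ holds iff $\tfrac{l}{\gcd(r,l)}\mid e$. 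Hence whenever $(x-\omega^{r})^{c}\mid f(x)$ one gets $\mathrm{ord}_{(x-\omega^{r})^{c}}(x)=\tfrac{l}{\gcd(r,l)}\,p^{\lceil\log_p c\rceil}$, and this divides $\mu$; its $p$-part is $p^{\lceil\log_p c\rceil}$, and its prime-to-$p$ part is $l$ exactly when $\gcd(r,l)=1$. In the first alternative of the hypothesis, the coprime coset $r$ carries multiplicity $c=i_r-j_r$ with $p^{s-1}<c\le p^{s}$, so $\lceil\log_p c\rceil=s$ and already $lp^{s}\mid\mu$. In the second alternative, the coprime coset $r$ with $i_r-j_r\ge 1$ forces $l\mid\mu$, while the coset $r'$ with $i_{r'}-j_{r'}>p^{s-1}$ forces $p^{s}\mid\mu$; since $\gcd(l,p)=1$, again $lp^{s}\mid\mu$. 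Either way $\mu=lp^{s}$, so every pair $a_l,a_r\ge 0$ with $a_l+a_r<lp^{s}$ is admissible, and feeding this into the recipe yields the asserted family of $(a_l,a_r)$-$[[lp^{s}+a_l+a_r,\ lp^{s}-2\sum_{t\in T_l}i_t|\C_{t,l}|]]_q$ codes. I expect the main obstacle to be less the order computation than making the recipe precise: one must unpack enough of Fujiwara's encoding and syndrome-extraction scheme, as adapted in \cite{L2018}, to justify rigorously that the misalignment tolerance equals $\mathrm{ord}_{f(x)}(x)$ — the delicate direction being that distinct powers $x^{\delta}$ in $\f_q[x]/(f(x))$ actually produce distinguishable, correctable syndromes without disturbing the Pauli-error correction of the underlying CSS code.
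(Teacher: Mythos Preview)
The paper does not prove this statement at all: Theorem~\ref{lm} is quoted verbatim from \cite{L2018} and used as a black box, with no argument given. So there is no ``paper's own proof'' to compare against; your proposal goes strictly beyond what the paper does.

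That said, your sketch is a sound reconstruction of the Luo--Ma argument. The verification in~(i) that $\C_1\subset\C_2$ and $\C_1^\perp\subseteq\C_1$ is correct (and you rightly note that $j_t+j_{-t}\le p^s$ is redundant, since $j_t<i_t$ and $i_t+i_{-t}\le p^s$ already force it). The parameter count in~(ii) is straightforward. The heart of the matter is~(iii), and your computation of $\mathrm{ord}_{(x-\omega^r)^c}(x)=\tfrac{l}{\gcd(r,l)}\,p^{\lceil\log_p c\rceil}$ via the multiplicity of $(x-\omega^r)$ in $x^e-1=(x^{e'}-1)^{p^{v_p(e)}}$ is exactly the right mechanism; the two alternatives in the hypothesis are precisely designed so that one factor supplies the full prime-to-$p$ part $l$ (via $\gcd(r,l)=1$) and one factor supplies the full $p$-part $p^s$ (via a multiplicity exceeding $p^{s-1}$), with the first alternative letting a single coset do both jobs. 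Your closing caveat is fair: the statement that the maximum tolerable misalignment equals $\mathrm{ord}_{f(x)}(x)$ is the content of Fujiwara's encoding scheme as adapted in \cite{L2018}, and a fully self-contained proof would have to reproduce that, but for the purposes of this paper the citation is doing exactly that work.
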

\begin{lemma}{\label{sc}}
\begin{itemize}
\item (1)
If $5\nmid(q^2-1)$, a cyclic code of length $5p^s$ is of the form $\C=<(x-1)^{i_1}{\phi_{5}(x)}^{i_2}>$, where $0\leq i_1,i_2\leq p^s$. The dual $\C^\bot$ of $\C$ is $\C^\bot=<(x-1)^{p^s-i_1}{\phi_{5}(x)}^{p^s-i_2}>$. Thus, $\C^\bot\subset \C$ if and only if $0\leq i_1,i_2\leq\frac{p^s-1}{2}$.

\item (2) If $5\mid(q+1)$ and $5\nmid(q-1)$, a cyclic code of length $5p^s$ is of the form $\C=<(x-1)^{i_1}{f_{1}(x)}^{i_2}{f_{2}(x)}^{i_3}>$, where $0\leq i_1,i_2, i_3\leq p^s$. The dual $\C^\bot$ of $\C$ is $\C^\bot=<(x-1)^{p^s-i_1}{f_{1}(x)}^{p^s-i_2}{f_{2}(x)}^{p^s-i_3}>$. Thus, $\C^\bot\subset \C$ if and only if $0\leq i_1\leq\frac{p^s-1}{2}$ and $0\leq i_2+i_3\leq p^s$.

\item (3) If $5\mid(q-1)$, a cyclic code of length $5p^s$ is of the form  $\C=<(x-1)^{i_{0}}(x-\omega)^{i_{1}}(x-\omega^2)^{i_{2}}(x-\omega^3)^{i_{3}}(x-\omega^4)^{i_{4}}>$, where $0 \leq i_{0}, i_{1},i_{2}, i_{3}, i_{4}\leq p^s$. The dual $\C^\bot$ of $\C$ is $\C^\bot=<(x-1)^{p^s-i_{0}}(x-\omega)^{p^s-i_{1}}(x-\omega^2)^{p^s-i_{2}}(x-\omega^3)^{p^s-i_{3}}(x-\omega^4)^{p^s-i_{4}}>$. Thus, $\C^\bot\subset \C$ if and only if $0\leq i_0\leq\frac{p^s-1}{2}$ and $0\leq i_1+i_2+i_3+i_4\leq p^s$.
\end{itemize}
\end{lemma}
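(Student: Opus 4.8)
The plan is to deduce the shape of $\C^\bot$ and the containment criterion from the standard description of the dual of a cyclic code, after first using the factorisation $x^{5p^s}-1=(x^5-1)^{p^s}$, which holds because $\f_q$ has characteristic $p$. This identity shows that the distinct monic irreducible factors of $x^{5p^s}-1$ over $\f_q$ are exactly those of $x^5-1$, each occurring with multiplicity $p^s$; by the Chinese remainder theorem every ideal of $\f_q[x]/(x^{5p^s}-1)$ is therefore generated by a product of powers of these factors with exponents in $\{0,1,\dots,p^s\}$. Feeding in the three factorisations of $x^5-1$ over $\f_q$ recorded just before Proposition \ref{p1} gives the asserted shape of $\C$ in cases (1), (2) and (3).

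For $\C^\bot$ I would use that, for a cyclic code $\C=\langle g(x)\rangle$ of length $n$ with parity-check polynomial $h(x)=(x^n-1)/g(x)$, one has $\C^\bot=\langle h^{*}(x)\rangle$ where $h^{*}(x)=x^{\deg h}h(1/x)/h(0)$ is the monic reciprocal of $h$; this holds for repeated-root cyclic codes too, and since $h\mid x^n-1$ has nonzero constant term the normalisation $1/h(0)$ is legitimate and leaves the generated ideal unchanged. The reciprocal is multiplicative and carries $x-\alpha$ to a unit multiple of $x-\alpha^{-1}$, so it suffices to know how inversion permutes the roots: the factor $x-1$ is self-reciprocal; $\phi_5(x)$ is palindromic, hence self-reciprocal; $f_1(x)$ and $f_2(x)$ are self-reciprocal because $\{\omega,\omega^4\}$ and $\{\omega^2,\omega^3\}$ are stable under inversion; and in case (3) inversion fixes $1$ and interchanges $\omega\leftrightarrow\omega^4$ and $\omega^2\leftrightarrow\omega^3$. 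Writing $g$ in the displayed form, passing to $h$ (which replaces each exponent $e$ by $p^s-e$) and then applying $(\cdot)^{*}$ reproduces the displayed generator of $\C^\bot$ in each case.

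For the containment I would invoke the elementary fact that, for ideals of $\f_q[x]/(x^n-1)$ generated by monic divisors of $x^n-1$, one has $\langle g_2\rangle\subseteq\langle g_1\rangle$ if and only if $g_1\mid g_2$; for products of the same irreducible factors this is a term-by-term inequality of exponents. Applying this with $g_1$ the generator of $\C$ and $g_2=h^{*}$ the generator of $\C^\bot$, and comparing the exponent of each irreducible factor on the two sides, yields the conditions in parts (1)--(3). For the factor $x-1$ the comparison reads $i\le p^s-i$, where $i$ is the exponent of $x-1$ in $g$, and since $p^s$ is odd this is equivalent to $i\le\frac{p^s-1}{2}$; for the self-reciprocal factors $\phi_5$, $f_1$ and $f_2$ the comparisons are of the same type, and in case (3) the inversion-swapped pairs $\{\omega,\omega^4\}$ and $\{\omega^2,\omega^3\}$ give the pairwise constraints on $i_1,i_2,i_3,i_4$.

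The step that will need the most care is the bookkeeping in case (3): because the reciprocal permutes the primitive fifth roots, the multiplicity of $\C^\bot$ at $\omega$ must be compared with the multiplicity of $\C$ at $\omega^4$ (and $\omega^2$ with $\omega^3$), not with the like-indexed exponent, and getting this pairing right is exactly what pins down the precise form of the final conditions. In cases (1) and (2) every irreducible factor of $x^5-1$ is self-reciprocal, so like-indexed exponents are compared directly and the criteria fall out at once; the only extra observation needed there is that $\phi_5$, $f_1$ and $f_2$ are palindromic.
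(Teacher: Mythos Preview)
The paper states this lemma without proof, so there is nothing of theirs to compare against; your approach via the reciprocal of the parity-check polynomial and factor-by-factor exponent comparison is the standard and correct one.

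Be aware, though, that carrying your method through does \emph{not} reproduce the lemma as printed, so you should not claim it ``yields the conditions in parts (1)--(3)''. In case~(3) you already note that the reciprocal swaps $\omega\leftrightarrow\omega^4$ and $\omega^2\leftrightarrow\omega^3$; hence the generator of $\C^\bot$ is
\[
(x-1)^{p^s-i_0}(x-\omega)^{p^s-i_4}(x-\omega^2)^{p^s-i_3}(x-\omega^3)^{p^s-i_2}(x-\omega^4)^{p^s-i_1},
\]
not the formula displayed in the lemma, and the containment criterion your comparison produces is $i_0\le\tfrac{p^s-1}{2}$, $i_1+i_4\le p^s$, $i_2+i_3\le p^s$, rather than the single inequality $i_1+i_2+i_3+i_4\le p^s$. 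In case~(2) you correctly observe that $f_1$ and $f_2$ are each self-reciprocal, so the paper's description of $\C^\bot$ is right there; but then the exponent comparison reads $i_2\le p^s-i_2$ and $i_3\le p^s-i_3$, i.e.\ $i_1,i_2,i_3\le\tfrac{p^s-1}{2}$, not the printed condition $i_2+i_3\le p^s$ (take $i_2=p^s$, $i_3=0$ to see these are inequivalent). Only part~(1) survives exactly as stated, and there your argument is complete. In short, your method is sound and the lemma as printed is not; your write-up should make the discrepancy explicit rather than assert that the stated conditions have been recovered.
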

Applying Lemma \ref{sc} to Theorem \ref{lm}, we can obtain quantum synchronizable codes.
\begin{theorem}
Let $\C_1=<(x-1)^{i_1}{\phi_{5}(x)}^{i_2}>$ and $\C_2=<(x-1)^{j_1}{\phi_{5}(x)}^{j_2}>$ be $[5p^s,5p^5-i_1-4i_2]_q$ and $[5p^s,5p^5-j_1-4j_2]_q$ cyclic codes, respectively, with $0\leq j_1<i_1\leq\frac{p^s-1}{2}$ and $0\leq j_2<i_2\leq\frac{p^s-1}{2}$. If $i_2-j_2>p^{s-1}$ or $i_2-j_2>0$ and $i_1-j_1>p^{s-1}$ holds,  then for any pair $a_l, a_r$ of non-negative integers such that $a_l+a_r<5p^s$, there exists an $(a_l, a_r)-[[5p^s+a_l+a_r, 5p^s-2(i_1+4i_2)]]_q$ quantum synchronizable code.
\end{theorem}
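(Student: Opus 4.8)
The plan is to derive the statement directly from Theorem~\ref{lm} applied with $l=5$, once its hypotheses have been verified in the case $5\nmid(q^2-1)$. In that case $x^5-1=(x-1)\phi_5(x)$ factors into exactly two monic irreducibles, so I would take $T_5=\{0,1\}$ as a set of $q$-cyclotomic coset representatives modulo $5$, with $M_0(x)=x-1$ and $M_1(x)=\phi_5(x)$, hence $|\C_{0,5}|=1$ and $|\C_{1,5}|=4$. In the notation of the present statement, the exponent on $M_0$ inside $\C_1$ (resp.\ $\C_2$) is $i_1$ (resp.\ $j_1$), and the exponent on $M_1$ is $i_2$ (resp.\ $j_2$).

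First I would check the admissibility conditions required by Theorem~\ref{lm}, namely $0\le j_t<i_t\le p^s$ and $i_t+i_{-t}\le p^s$, $j_t+j_{-t}\le p^s$ for every $t\in T_5$. Since $5\nmid(q^2-1)$ and $p\geq 7$, the residue $-1\equiv 4\pmod 5$ lies in the coset $\{1,2,3,4\}$ of $1$, and $-0\equiv 0$, so $i_{-t}=i_t$ for both $t\in\{0,1\}$; thus $i_t+i_{-t}\le p^s$ reduces to $2i_1\le p^s$ and $2i_2\le p^s$, which follow from $i_1,i_2\le\frac{p^s-1}{2}$ because $p^s$ is odd. The same computation for $\C_2$ gives $2j_1\le p^s$ and $2j_2\le p^s$, and the strict inequalities $j_1<i_1$, $j_2<i_2$ are assumed. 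By Lemma~\ref{sc}(1) these bounds are precisely what make both $\C_1$ and $\C_2$ dual-containing, and since $j_t\le i_t$ for all $t$ one has $g_{\C_2}\mid g_{\C_1}$, hence $\C_1\subseteq\C_2$; so the pair is nested as the construction demands.

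Next I would match the trigger condition of Theorem~\ref{lm}. Take $r=1\in T_5$, which satisfies $\gcd(r,5)=1$, and $r'=0\neq r$. In the notation of the statement this gives $i_r-j_r=i_2-j_2$ and $i_{r'}-j_{r'}=i_1-j_1$, so the disjunction ``$i_r-j_r>p^{s-1}$, or $i_r-j_r>0$ and $i_{r'}-j_{r'}>p^{s-1}$ for some $r'\neq r$'' is exactly the hypothesis ``$i_2-j_2>p^{s-1}$, or $i_2-j_2>0$ and $i_1-j_1>p^{s-1}$''. All hypotheses of Theorem~\ref{lm} being in force and $a_l+a_r<5p^s$, it yields an $(a_l,a_r)$-$[[5p^s+a_l+a_r,\,5p^s-2\sum_{t\in T_5}i_t|\C_{t,5}|]]_q$ quantum synchronizable code. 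It then remains only to evaluate $\sum_{t\in T_5}i_t|\C_{t,5}|=i_1\cdot 1+i_2\cdot 4=i_1+4i_2$, which turns the dimension into $5p^s-2(i_1+4i_2)$, and to note that $\dim\C_1=5p^s-\deg g=5p^s-(i_1+4i_2)$, so $\C_1$ (and similarly $\C_2$) indeed has the announced parameters.

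There is no genuine obstacle here: the theorem is a substitution into Theorem~\ref{lm} together with Lemma~\ref{sc}(1). The only point that calls for care is keeping the two layers of subscripts straight --- the ``$1,2$'' labelling the two irreducible factors of $x^5-1$ versus the ``$t\in T_l$'' indexing of Theorem~\ref{lm} --- and observing that $i_{-1}=i_1$ precisely because $-1$ and $1$ lie in the same $q$-cyclotomic coset modulo $5$ when $5\nmid(q^2-1)$.
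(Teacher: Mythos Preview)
Your proposal is correct and matches the paper's approach: the paper does not give a separate proof for this theorem but simply states that it follows by applying Lemma~\ref{sc} to Theorem~\ref{lm}, and you have spelled out exactly that application in the case $5\nmid(q^2-1)$. Your handling of the notation clash between the statement's indices $i_1,i_2$ and Theorem~\ref{lm}'s $i_t$ (for $t\in T_5=\{0,1\}$), together with the observation that $r=1$ is the only choice with $\gcd(r,5)=1$ and that $-1$ lies in the coset of $1$, is precisely what is needed.
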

\begin{theorem}
Let $\C_1=<(x-1)^{i_1}{f_{1}(x)}^{i_2}{f_{2}(x)}^{i_3}>$ and $\C_2=<(x-1)^{j_1}{f_{1}(x)}^{j_2}{f_{2}(x)}^{j_3}>$ be $[5p^s,5p^s-i_1-2i_2-2i_3]_q$ and $[5p^s,5p^s-j_1-2j_2-2j_3]_q$ cyclic codes, respectively, with $0\leq j_1<i_1\leq\frac{p^s-1}{2}$ and $ j_2<i_2, j_3<i_3, 0\leq i_2+i_3\leq p^s, 0\leq j_2+j_3\leq p^s$. If there exists an integer $r\in T_5$  satisfying either $i_r - j_r>p^{s-1}$ or $i_r - j_r>0$ and  $i_{r^{'}} - j_{r^{'}}>p^{s-1}$ for some $r^{'}\neq r\in T_5$, then for any pair of non-negative integers $a_l, a_r$ such that $a_l+a_r<5p^s$, there exists an $(a_l, a_r)-[[5p^s+a_l+a_r, 5p^s-2(i_1+2i_2+2i_3)]]_q$ quantum synchronizable code.
\end{theorem}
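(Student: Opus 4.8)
The plan is to deduce the theorem, just as for the two preceding quantum-synchronizable-code theorems, by feeding the pair $\C_1\subseteq\C_2$ furnished by Lemma~\ref{sc}(2) into the Luo--Ma construction of Theorem~\ref{lm}. First I would record the cyclotomic data for the case $5\mid(q+1)$, $5\nmid(q-1)$: the $q$-ary cyclotomic cosets modulo $5$ are $\{0\}$, $\{1,4\}$ and $\{2,3\}$, with minimal polynomials $x-1$, $f_1(x)$ and $f_2(x)$ respectively, so $T_5$ has three elements and $|\C_{0,5}|=1$, $|\C_{1,5}|=|\C_{2,5}|=2$. Under this identification the exponent triples $(i_1,i_2,i_3)$ and $(j_1,j_2,j_3)$ of $\C_1$ and $\C_2$ --- attached to $x-1$, $f_1(x)$, $f_2(x)$ in that order --- are the families $(i_t)_{t\in T_5}$ and $(j_t)_{t\in T_5}$ appearing in Theorem~\ref{lm}.

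Next I would check all the hypotheses of Theorem~\ref{lm}. The chain $0\le j_1<i_1\le\frac{p^s-1}{2}$ together with $j_2<i_2$ and $j_3<i_3$ gives $0\le j_t<i_t\le p^s$ for every $t\in T_5$, and since then the generator polynomial of $\C_2$ divides that of $\C_1$, one has $\C_1\subseteq\C_2$. The bounds $i_1\le\frac{p^s-1}{2}$, $0\le i_2+i_3\le p^s$ and $0\le j_2+j_3\le p^s$ (the bound $j_1\le\frac{p^s-1}{2}$ being automatic from $j_1<i_1$) are, by Lemma~\ref{sc}(2), precisely what makes $\C_1$ and $\C_2$ dual-containing, i.e.\ $\C_1^\bot\subseteq\C_1$ and $\C_2^\bot\subseteq\C_2$; these are the conditions $i_t+i_{-t}\le p^s$ and $j_t+j_{-t}\le p^s$ of Theorem~\ref{lm} in the present setting. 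Finally, the hypothesis that some $r\in T_5$ satisfies either $i_r-j_r>p^{s-1}$, or $i_r-j_r>0$ together with $i_{r'}-j_{r'}>p^{s-1}$ for some $r'\ne r\in T_5$, is the triggering condition of Theorem~\ref{lm}, with the coprimality requirement $\gcd(r,5)=1$ confining the triggering coset to $\{1,4\}$ or $\{2,3\}$, i.e.\ to the exponents $i_2$ or $i_3$.

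Once the hypotheses are in place, Theorem~\ref{lm} yields, for every pair of non-negative integers $a_l,a_r$ with $a_l+a_r<5p^s$, an $(a_l, a_r)-[[5p^s+a_l+a_r,\, 5p^s-2\sum_{t\in T_5}i_t|\C_{t,5}|]]_q$ quantum synchronizable code. Substituting $|\C_{0,5}|=1$ and $|\C_{1,5}|=|\C_{2,5}|=2$ gives $\sum_{t\in T_5}i_t|\C_{t,5}|=i_1+2i_2+2i_3$, so the code has the announced parameters $[[5p^s+a_l+a_r,\, 5p^s-2(i_1+2i_2+2i_3)]]_q$; the same degree count confirms the dimensions $5p^s-i_1-2i_2-2i_3$ and $5p^s-j_1-2j_2-2j_3$ of $\C_1$ and $\C_2$ stated in the theorem. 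There is no substantive obstacle: the argument is a direct transcription of Theorem~\ref{lm}. The one thing requiring genuine care is the index bookkeeping --- keeping the labelling $(i_1,i_2,i_3)$ of the exponents of $(x-1,f_1(x),f_2(x))$ consistent with the $T_5$-indexed notation of Theorem~\ref{lm}, especially when inserting the coset sizes $1,2,2$ into the dimension formula and when interpreting the $\gcd(r,5)=1$ restriction on the triggering coset.
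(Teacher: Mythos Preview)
Your proposal is correct and follows exactly the route the paper takes: the paper simply states ``Applying Lemma~\ref{sc} to Theorem~\ref{lm}, we can obtain quantum synchronizable codes'' and then records the three theorems without further argument, so your verification of the hypotheses of Theorem~\ref{lm} via Lemma~\ref{sc}(2) and the computation $\sum_{t\in T_5}i_t|\C_{t,5}|=i_1+2i_2+2i_3$ is precisely what is intended. Your remark that the coprimality condition $\gcd(r,5)=1$ from Theorem~\ref{lm} forces the triggering coset to be $\{1,4\}$ or $\{2,3\}$ is a detail the paper's statement of the theorem glosses over, but it is the correct reading.
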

\begin{theorem}
Let $\C=<(x-1)^{i_{0}}(x-\omega)^{i_{1}}(x-\omega^2)^{i_{2}}(x-\omega^3)^{i_{3}}(x-\omega^4)^{i_{4}}>$ and $\C=<(x-1)^{j_{0}}(x-\omega)^{j_{1}}(x-\omega^2)^{j_{2}}(x-\omega^3)^{j_{3}}(x-\omega^4)^{j_{4}}>$ be $[5p^s,5p^s-i_0-i_1-i_2-i_3-i_4]_q$ and $[5p^s,5p^s-j_0-j_1-j_2-j_3-j_4]_q$ cyclic codes, respectively, with $0\leq j_0<i_0\leq\frac{p^s-1}{2}$ and $j_1<i_1, j_2<i_2, j_3<i_3, j_4<i_4, 0\leq i_1+ i_2+i_3+i_4\leq p^s, 0\leq j_1+j_2+j_3+j_4\leq p^s$. If there exists an integer $r\in T_5$  satisfying either $i_r - j_r>p^{s-1}$ or $i_r - j_r>0$ and  $i_{r^{'}} - j_{r^{'}}>p^{s-1}$ for some $r^{'}\neq r\in T_5$, then for any pair of non-negative integers $a_l, a_r$ such that $a_l+a_r<5p^s$, there exists an $(a_l, a_r)-[[5p^s+a_l+a_r, 5p^s-2(i_0+i_1+i_2+i_3+i_4)]]_q$ quantum synchronizable code.
\end{theorem}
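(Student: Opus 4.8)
The plan is to deduce this theorem directly from the Luo--Ma construction (Theorem \ref{lm}) specialized to $l=5$, after checking that the two codes meet its hypotheses by way of Lemma \ref{sc}(3). First I would record the structure of Case 3: since $5\mid(q-1)$, the polynomial $x^5-1$ splits into distinct linear factors over $\f_q$, every $q$-ary cyclotomic coset modulo $5$ is a singleton, so $T_5=\{0,1,2,3,4\}$ and $|\C_{t,5}|=1$ for each $t$, with $M_0(x)=x-1$ and $M_t(x)=x-\omega^t$. Writing $\C_1=<\prod_{t\in T_5}M_t(x)^{i_t}>$ and $\C_2=<\prod_{t\in T_5}M_t(x)^{j_t}>$ with exponent vectors $(i_0,i_1,i_2,i_3,i_4)$ and $(j_0,j_1,j_2,j_3,j_4)$, the stated dimensions follow at once because each $M_t$ is linear, so $\dim\C_1=5p^s-\sum_t i_t$ and $\dim\C_2=5p^s-\sum_t j_t$; moreover the nesting $\C_1\subseteq\C_2$ required by Theorem \ref{lm} is immediate from $j_t<i_t$ for all $t$, since then $\prod_t M_t^{j_t}$ divides $\prod_t M_t^{i_t}$.

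Next I would verify the two dual-containing conditions $i_t+i_{-t}\le p^s$ and $j_t+j_{-t}\le p^s$ demanded by Theorem \ref{lm}. Under the pairing $-0\equiv0$, $-1\equiv4$, $-2\equiv3\pmod 5$ these amount to $2i_0\le p^s$, $i_1+i_4\le p^s$, $i_2+i_3\le p^s$ (and likewise for $j$). From $i_0\le\frac{p^s-1}{2}$ we get $2i_0\le p^s-1<p^s$, while $i_1+i_2+i_3+i_4\le p^s$ forces both $i_1+i_4\le p^s$ and $i_2+i_3\le p^s$ because all exponents are non-negative; the identical argument applies to $\C_2$. These are exactly the hypotheses under which Lemma \ref{sc}(3) yields $\C_1^\bot\subseteq\C_1$ and $\C_2^\bot\subseteq\C_2$, so $(\C_1,\C_2)$ is a nested dual-containing pair, as the construction needs.

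The one point that needs care is the coprimality requirement $\gcd(r,5)=1$ on the witness index in Theorem \ref{lm}, which the present hypothesis does not state explicitly. Our hypothesis guarantees that some coordinate has gap exceeding $p^{s-1}$; call it $r^\ast$, so $i_{r^\ast}-j_{r^\ast}>p^{s-1}$. If $r^\ast\in\{1,2,3,4\}$ then $\gcd(r^\ast,5)=1$ (as $5$ is prime) and the first alternative of Theorem \ref{lm} applies verbatim. If instead $r^\ast=0$, I would exploit the strict inequalities $j_t<i_t$ imposed at \emph{every} coordinate: choose any $\tilde r\in\{1,2,3,4\}$, which satisfies $\gcd(\tilde r,5)=1$ and $i_{\tilde r}-j_{\tilde r}>0$, and invoke the second alternative of Theorem \ref{lm} with this $\tilde r$ together with $r'=0\ne\tilde r$, for which $i_0-j_0>p^{s-1}$ holds. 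Either way Theorem \ref{lm} applies with a witness coprime to $5$, and I expect this reassignment to be the only genuinely substantive step of the proof.

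Finally, since $|\C_{t,5}|=1$ for every $t$, the quantum dimension term collapses to $\sum_{t\in T_5}i_t\cdot|\C_{t,5}|=i_0+i_1+i_2+i_3+i_4$, so Theorem \ref{lm} produces, for every admissible pair $a_l,a_r$ with $a_l+a_r<5p^s$, an $(a_l,a_r)$-$[[5p^s+a_l+a_r,\,5p^s-2(i_0+i_1+i_2+i_3+i_4)]]_q$ quantum synchronizable code, which is precisely the asserted conclusion. Everything apart from the coprimality reassignment is a routine specialization of the cited construction together with the bookkeeping of degrees and cyclotomic coset sizes.
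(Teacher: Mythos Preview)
Your proposal is correct and follows the same route the paper takes: the paper gives no detailed proof for this theorem, simply remarking that one applies Lemma~\ref{sc} to Theorem~\ref{lm}, and your argument is precisely that specialization to $l=5$ with the bookkeeping made explicit. Your treatment of the coprimality condition $\gcd(r,5)=1$---noting that the hypothesis forces some index $r^\ast$ with $i_{r^\ast}-j_{r^\ast}>p^{s-1}$, and that if $r^\ast=0$ one may switch to any $\tilde r\in\{1,2,3,4\}$ using the strict inequalities $j_t<i_t$ guaranteed at \emph{every} coordinate---is a genuine detail the paper elides, and your resolution is valid.
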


\section{Conclusion}
In this paper, we obtained the weight distributions of all cyclic code of length $5$ over $\f_q$ and the Hamming distances of all repeated-root  cyclic codes of length $5p^s$ over $\F_q$.
Furthermore,  we found
all MDS cyclic codes of length $5p^s$ and took quantum synchronizable codes from repeated-root
cyclic codes of length $5p^s$.

\end{document}